\newcommand{\etal}{\emph{et~al.}}
\newcommand{\ie}{\emph{i.e.}}
\newcommand{\eg}{\emph{e.g.}}
\newcommand{\keywords}[1]{\par\addvspace\baselineskip
\noindent\keywordname\enspace\ignorespaces#1}
\renewenvironment{proof}{\emph{Proof.}}{\hfill $\Box$ \medskip\\}
\newtheorem{observation}{Observation}
\title{A Simple, Faster Method for Kinetic Proximity Problems\footnote{This work was partially supported by a University of Victoria Graduate Fellowship and by NSERC discovery grants.\\Preliminary versions of parts of this paper appeared in \textit{Proceedings of the 29th ACM Symposium on Computational Geometry} (SoCG 2013)~\cite{socg17-rahmati} and \textit{Proceedings of the 13th Scandinavian Symposium and Workshops on Algorithm Theory} (SWAT 2012)~\cite{DBLP:conf/swat/RahmatiZ12}.}}
\author{\small
Zahed Rahmati\footnote{Dept. of Computer Science, University of Victoria, Victoria, BC, Canada. Email: {\tt rahmati@uvic.ca}}
\and
Mohammad Ali Abam\footnote{Dept. of Computer Engineering, Sharif University of Technology, Iran. Email: {\tt abam@sharif.edu}}
\and \\
Valerie King\footnote{Dept. of Computer Science, University of Victoria, Victoria, BC, Canada. Email: {\tt val@uvic.ca}}
\and
Sue Whitesides\footnote{Dept. of Computer Science, University of Victoria, Victoria, BC, Canada. Email: {\tt sue@uvic.ca}}
\and
Alireza Zarei\footnote{Dept. of Mathematical Science, Sharif University of Technology, Tehran, Iran. Email: {\tt zarei@sharif.edu}}
}
\institute{\small
}
\titlerunning{Kinetic Proximity Problems}
\authorrunning{Z. Rahmati, M. A. Abam, and V. King, S. Whitesides, and Alireza Zarei}
\begin{document}

\maketitle

\begin{abstract}
For a set of $n$ points in the plane, this paper presents simple kinetic data structures (KDS's) for solutions to some fundamental proximity problems, namely, the \textit{all nearest neighbors} problem, the \textit{closest pair} problem, and the \textit{Euclidean minimum spanning tree} (EMST) problem. Also, the paper introduces KDS's for maintenance of two well-studied sparse proximity graphs, the \textit{Yao graph} and the \textit{Semi-Yao graph}.

\vspace{+5pt}
We use sparse graph representations, the \textit{Pie Delaunay graph} and the \textit{Equilateral Delaunay graph}, to provide new solutions for the proximity problems. Then we design KDS's that efficiently maintain these sparse graphs on a set of $n$ moving points, where the trajectory of each point is assumed to be a polynomial function of constant maximum degree $s$. We use the kinetic Pie Delaunay graph and the kinetic Equilateral Delaunay graph to create KDS's for maintenance of the Yao graph, the Semi-Yao graph, all the nearest neighbors, the closest pair, and the EMST. Our KDS's use $O(n)$ space and $O(n\log n)$ preprocessing time.

\vspace{+5pt}
We provide the first KDS's for maintenance of the Semi-Yao graph and the Yao graph. Our KDS processes $O(n^2\beta_{2s+2}(n))$ (resp. $O(n^3\beta_{2s+2}^2(n)\log n)$) events to maintain the Semi-Yao graph (resp. the Yao graph); each event can be processed in time $O(\log n)$ in an amortized sense.  Here, $\beta_s(n)={\lambda_s(n)/ n}$ is an extremely slow-growing function and $\lambda_s(n)$ is the maximum length of Davenport-Schinzel sequences of order $s$ on $n$ symbols.

\vspace{+5pt}
Our KDS for maintenance of all the nearest neighbors and the closest pair processes $O(n^2\beta^2_{2s+2}(n)\log n)$ events. For maintenance of the EMST, our KDS processes $O(n^3\beta_{2s+2}^2(n)\log n)$ events. For all three of these problems, each event can be handled in time $O(\log n)$ in an amortized sense. 

\vspace{+5pt}
Our \textit{deterministic} kinetic approach for maintenance of all the nearest neighbors improves by an $O(\log^2 n)$ factor the previous \textit{randomized} kinetic algorithm by Agarwal, Kaplan, and Sharir. Furthermore, our KDS is simpler than their KDS, as we reduce the problem to one-dimensional range searching, as opposed to using two-dimensional range searching as in their KDS.

\vspace{+5pt}
For maintenance of the EMST, our KDS improves the previous KDS by Rahmati and Zarei by a near-linear factor in the number of events.

\keywords{kinetic data structure, sparse graph representation, all nearest neighbors, closest pair, Euclidean minimum spanning tree, Semi-Yao graph, Yao graph}
\end{abstract}

\newpage
\section{Introduction}
The goal of the \textit{kinetic data structure framework}, which was first introduced by Basch, Guibas and Hershberger~\cite{basch_data_1999}, is to provide a set of data structures and algorithms that maintain attributes (properties) of points as they move. At essentially any moment, one may seek efficient answers to certain queries (\eg, what is the closest pair?) about these moving points. Taken together, such a set of data structures and algorithms is called a \textit{kinetic data structure} (KDS). Kinetic versions of many geometry problems have been studied extensively over the past 15 years, \eg, kinetic Delaunay triangulation~\cite{Albers_voronoidiagrams,DBLP:journals/dcg/Rubin13}, kinetic point-set embeddability~\cite{DBLP:conf/gd/RahmatiZ12}, kinetic Euclidean minimum spanning tree~\cite{DBLP:conf/iwoca/RahmatiZ11,basch_data_1999}, kinetic closest pair~\cite{Agarwal:2008:KDD:1435375.1435379,basch_data_1999}, kinetic convex hull~\cite{basch_data_1999,Alexandron:2007:KDD:1219156.1219201}, kinetic spanners~\cite{Abam:2010:SEK:1630166.1630284,Karavelas:2001:SKG:365411.365441}, and kinetic range searching~\cite{Agarwal:2003:IMP:846156.846166}.

Let $P$ be a set of $n$ points in the plane, and denote the position of each point $p$ by $p=(p_x,p_y)$ in a Cartesian coordinate system. In the kinetic setting, we assume the points are moving continuously with known trajectories, which may be changed to new known trajectories at any time. Thus the point set $P$ will sometimes be denoted $P(t)$, and an element $p=(p_x,p_y)$ by $p(t)=(p_x(t),p_y(t))$. For ease of notation, we denote the coordinate functions of a point $p_i(t)$ by $x_i(t)$ and $y_i(t)$. Throughout the paper we assume that all coordinate functions are polynomial functions of maximum degree bounded by some constant $s$.

In this paper, we consider several fundamental proximity problems, which we define in more detail below. We design KDS's with better performance for some these problems, and we provide the first kinetic results for others. We introduce a simple method that underlies all these results. We briefly describe the approach in Section~\ref{sec:ourApproach}.

Finding the nearest point in $P$ to a query point is called the \textit{nearest neighbor search} problem (or the \textit{post office} problem), and is a well-studied proximity problem.  The \textit{all nearest neighbors} problem, a variant of the nearest neighbor search problem, is to find the nearest neighbor $q\in P$ to each point $p\in P$. The directed graph constructed by connecting each point $p$  to its nearest neighbor $q$ with a directed edge $\overrightarrow{pq}$ is called the \textit{nearest neighbor graph} (NNG). The \textit{closest pair} problem is to find a pair of points in $P$ whose separation distance is minimum; the endpoints of the edge(s) with minimum length in the nearest neighbor graph give the closest pair. For the set $P$, there exists a complete, edge-weighted graph $G(V,E)$ where $V=P$ and the weight of each edge is the distance between its two endpoints in the Euclidean metric.  

A \emph{Euclidean minimum spanning tree} (EMST) of $G$ is a connected subgraph of $G$ such that the sum of the edge weights in the Euclidean metric is minimum possible. The Yao graph~\cite{DBLP:journals/siamcomp/Yao82} and the Semi-Yao graph (or theta graph)~\cite{Clarkson:1987:AAS:28395.28402,Keil:1988:ACE:61764.61787} of a point set $P$ are two well-studied sparse proximity graphs. Both of these graphs are constructed in the following way. At each point $p\in P$, the plane is partitioned into $z$ wedges $W_0(p),...,W_{z-1}(p)$ with equal apex angles $2\pi/z$. Then for each wedge $W_i(p)$, $0\leq i\leq z-1$, the apex $p$ is connected to a particular point $q\in P\cap W_i(p)$. In the Yao graph, the point $q$ is the point in $P\cap W_i(p)$ with the minimum Euclidean distance to $p$; in the Semi-Yao graph, the point $q$ is the point in $P\cap W_i(p)$ with minimum length projection on the bisector of $W_i(p)$. From now on, unless stated otherwise, when we consider the Yao graph or the Semi-Yao graph, we assume $z=6$.

With these definitions in mind, in Section~\ref{sec:ourApproach} we describe our approach. Before we can describe the main contributions and the kinetic results we obtain using our simple method, we need to review both the terminology of the KDS framework, which is described in Section~\ref{sec:KDSframework}, as well as the previous results, which are described in Section~\ref{sec:relatedwork}.
\subsection{Our Approach}\label{sec:ourApproach}
We provide a new, simple, and deterministic method for maintenance of  all the nearest neighbors, the closest pair,  the Euclidean minimum spanning tree (EMST or $L_2$-MST), the Yao graph, and the Semi-Yao graph. In particular, to the best of our knowledge our KDS's for these graphs are the first KDS's. 

The heart of our approach is to define, compute, and kinetically maintain supergraphs for the Yao graph and the Semi-Yao graph. Then we take advantage of the fact that (as we explain later) these graphs are themselves supergraphs of the EMST and the nearest neighbor graph, respectively. 

We define a supergraph for the Yao graph as follows. We partition a unit disk into six \textquotedblleft pieces of pie\textquotedblright~$\sigma_0,\sigma_2,...,\sigma_5$ with equal angles such that all $\sigma_l$, $l=0,...,5$, share a point at the center of the disk (see Figure~\ref{fig:disk_hexagon}(a)). Each piece of pie $\sigma_l$ is a convex shape. For each $\sigma_l$ we construct a triangulation as follows. Using the fact that, for a set $P$ of points, a Delaunay triangulation can be defined based on any convex shape~\cite{Chew:1985:VDB:323233.323264,Drysdale:1990:PAC:320176.320194}, we define a Delaunay triangulation $DT_l$ based on each piece of pie $\sigma_l$. The union of all of these Delaunay triangulations $DT_l$, $l=0,...,5$, which we call the \textit{Pie Delaunay graph}, is a supergraph of the Yao graph. Since the Yao graph, for $z\geq 6$, is guaranteed to contain the EMST, the Pie Delaunay graph contains the EMST.

We define a supergraph for the Semi-Yao graph as follows. 
We partition a hexagon into six equilateral triangles $\Delta_0,\Delta_2,...,\Delta_5$ (see Figure~\ref{fig:disk_hexagon}(b)), and for each equilateral triangle $\Delta_l$ we define a Delaunay triangulation $DT_l$. The union of all of these Delaunay triangulations $DT_l$, $l=0,...,5$, which we call the \textit{Equilateral Delaunay graph}, is a supergraph of the Semi-Yao graph. We prove that the Semi-Yao graph is a supergraph of the nearest neighbor graph, which implies that the Equilateral Delaunay graph is a supergraph of the nearest neighbor graph.

\begin{figure}[t!]
  \begin{center}
    \includegraphics[scale=1]{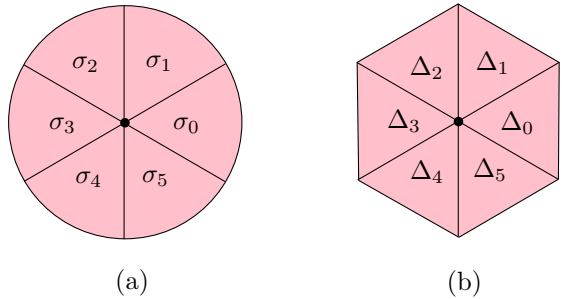}
  \end{center}
  \caption{(a) Partitioning a unit disk into six pieces of pie. (b) Partitioning a hexagon into six equilateral triangles.}
  \label{fig:disk_hexagon}
\end{figure}

In the case that the Delaunay triangulation $DT_l$ is based on a piece of pie, the triangulation can easily be maintained over time. This leads us to a kinetic data structure for the union of the $DT_l$'s, \ie, the Pie Delaunay graph. Then we show how to use this sparse graph over time to give kinetic data structures for maintenance of  the Yao graph and the EMST. Similarly, in the case that each $DT_l$ arises from an equilateral triangle, we obtain a kinetic data structure for the Equilateral Delaunay graph. Using the kinetic Equilateral Delaunay graph we give kinetic data structures for maintenance of the Semi-Yao graph, all the nearest neighbors, and the closest pair.
\subsection{KDS Framework}\label{sec:KDSframework}
Basch, Guibas and Hershberger~\cite{basch_data_1999} first introduced the \textit{kinetic data structure} (KDS) framework to maintain \textit{attributes}, \eg, the closest pair, of a set of $n$ moving points. This approach has been used extensively to model motion. They introduced four standard criteria to evaluate the performance of a KDS: \textit{efficiency}, \textit{responsiveness}, \textit{compactness}, and \textit{locality}.

In the KDS framework, one defines a set of \textit{certificates} that together attest that the desired attribute holds throughout intervals of time between certain \textit{events}, described below. A certificate is a Boolean function of time, and it may have a failure time $t$. The certificate is valid until time $t$. A \textit{priority queue} of the failure times of the certificates is used to track the first time after the current time $t_c$ that a certificate will become invalid. When the failure time of a certificate with highest priority in the queue is equal to the current time $t_c$, the certificate fails, and we say that an \textit{event} occurs. Then we invoke an update mechanism to replace the certificates that become invalid with new valid ones, and apply the necessary changes to the data structures. 

Now we describe the four performance criteria: 
\begin{itemize}
\item[1.] \textit{Responsiveness:} One of the most important KDS performance criteria is the processing time to handle an event. The KDS is \textit{responsive} if the response time of the update mechanism for an event is $O(\log^c n)$; $n$ is the number of points and $c$ is a constant.

\item[2.] \textit{Compactness:} The compactness criterion concerns the total number of certificates stored in the KDS at any given time. If the number of certificates is $O(n\log^c n)$, the KDS is \textit{compact}.

\item[3.] \textit{Locality:} If the number of certificates associated with a particular point is $O(\log^c n)$, the KDS is \textit{local}. Satisfaction of this criterion ensures that, for any point, if it changes its trajectory it participates in a small number of certificates, and therefore, only a small number of changes are needed in the  KDS.

\item[4.] \textit{Efficiency:} To count the number of events over time we make the assumption that the trajectories of the points are polynomial functions of bounded degree $s$. The efficiency of a KDS concerns the number of events in the KDS over time. To analyse the efficiency of a KDS one identifies two types of events. Some events do not necessarily change the attribute of interest (also called  the \textit{desired} attribute) and may only change some internal data structures. Such events are called \textit{internal events}. Those events that change the attribute of interest are called \textit{external events}. If the ratio between the number of internal events and the number of external events is  $O(\log^c n)$, the KDS is \textit{efficient}. The efficiency of a KDS can be viewed as measuring the fraction of events that are due to overhead. 
\end{itemize}
\subsection{Other Related Work}\label{sec:relatedwork}
\paragraph{Kinetic All Nearest Neighbors.}
The nearest neighbor graph is a subgraph of the Delaunay triangulation and the Euclidean minimum spanning tree. Thus by maintaining either one of these supergraphs over time,  all the nearest neighbors can also be maintained. In particular, by using the kinetic Delaunay triangulation~\cite{Albers_voronoidiagrams} or the kinetic Euclidean minimum spanning tree~\cite{DBLP:conf/iwoca/RahmatiZ11}, together with a basic tool in the KDS framework called the kinetic tournament tree~\cite{basch_data_1999}, we can maintain  all the nearest neighbors over time. For both these two approaches, the number of  internal events is nearly cubic in $n=|P|$. Since the  number of  external events for  all the nearest neighbors is nearly quadratic, neither of these two approaches will give an efficient KDS as defined above. 

Agarwal, Kaplan, and Sharir~\cite{Agarwal:2008:KDD:1435375.1435379} presented the first efficient KDS for maintenance of  all the nearest neighbors. For a set of points in the plane, their kinetic algorithm uses a $2$-dimensional \textit{range tree}. To bound the number of events in order to obtain an efficient KDS, they implemented the range tree by randomized search trees (\textit{treaps}). Their randomized kinetic approach uses $O(n\log^2 n)$ space and processes $O(n^2\beta_{2s+2}^2(n)\log^3 n)$ events, where $\beta_{s}(n)$ is an extremely slow-growing function. The expected time to process all events is $O(n^2\beta_{2s+2}^2(n)\log^4 n)$. In terms of the KDS performance criteria, their KDS is \textit{efficient}, \textit{responsive} (in an amortized sense), and \textit{compact}, but it is not \textit{local}.
\paragraph{Kinetic Closest Pair.}
For a set of points moving in $\mathbb{R}^2$, Basch, Guibas, and Hershberger~\cite{Basch:1997:DSM:314161.314435} presented a KDS to maintain the closest pair. Their kinetic algorithm uses $O(n)$ space and processes $O(n^2\beta_{2s+2}(n)\log n)$ events, each in $O(\log^2 n)$ time; their KDS is responsive, efficient, compact, and local. 

Basch, Guibas, and Zhang~\cite{Basch:1997:PPM:262839.262998} used a multidimensional range tree to maintain the closest pair. Their KDS uses $O(n\log n)$ space and processes $O(n^2\beta_{2s+2}(n)\log n)$ events, each in worst-case time $O(\log^2 n)$. Their KDS, which can be used for higher dimensions as well, is responsive, efficient, compact, and local. The same KDS with the same complexities as~\cite{Basch:1997:PPM:262839.262998} was independently presented by Agarwal, Kaplan, and Sharir~\cite{Agarwal:2008:KDD:1435375.1435379}; the KDS by Agarwal~\etal~supports point insertions and deletions.
\paragraph{Kinetic EMST.}
Fu and Lee~\cite{Fu:1991:MST:115128.115142} proposed the first kinetic algorithm for maintenance of an EMST on a set of $n$ moving points. Their algorithm uses $O(sn^4\log n)$ preprocessing time and $O(m)$ space, where $m$ is the maximum possible number of changes in the EMST from time $t=0$ to $t=\infty$. At any given time, the algorithm constructs the EMST in linear time.

Agarwal~\etal~\cite{Agarwal:1998:PKM:795664.796402} proposed a sophisticated algorithm for a restricted kinetic version of the EMST over a graph where the distance between each pair of points in the graph is defined by a linear function of time. The processing time for each combinatorial change in the EMST is $O(n^{2\over 3}\log^{4\over 3} n)$; the bound reduces to $O(n^{1\over 2}\log^{3\over 2} n)$ for planar graphs. Their data structure does not explicitly bound the number of changes, but a bound of $O(n^4)$ is easily seen.

For any $\epsilon>0$, Basch, Guibas, and Zhang~\cite{Basch:1997:PPM:262839.262998} presented a KDS for a $(1+\epsilon)$-EMST whose total weight is within a factor of $(1+\epsilon)$ of the total weight of an exact EMST. For a set of points in the plane, their KDS uses  $O(\epsilon^{-1\over 2}n\log n)$ space and $O(\epsilon^{-1\over 2}n\log n)$ preprocessing time, and processes $O(\epsilon^{-1}n^3)$ events, each in $O(\log^2 n)$ time; their KDS works for higher dimensions. They claim that their structure can be used to maintain the minimum spanning tree in the $L_1$ and $L_\infty$ metrics. 

Rahmati and Zarei~\cite{DBLP:conf/iwoca/RahmatiZ11} improved the previous result by Fu and Lee~\cite{Fu:1991:MST:115128.115142}. In particular, Rahmati and Zarei presented an exact kinetic algorithm for maintenance of the EMST on a set of $n$ moving points in $\mathbb{R}^2$. In $O(n\log n)$ preprocessing time and $O(n)$ space, they build a KDS  that processes $O(n^4)$ events, each in $O(\log^2 n)$ time. Their KDS uses the method of Guibas~\etal~\cite{conf/wg/GuibasM91} to track changes to the Delaunay triangulation, which is a supergraph of the EMST~\cite{O'Rourke:1998:CGC:521378}. Whenever two edges of the Delaunay triangulation swap their length order, their kinetic algorithm makes the required changes to the EMST. In fact, under an assumption we will explain soon, the number of changes in their algorithm is within a linear factor of the number of changes to the Delaunay triangulation~\cite{conf/wg/GuibasM91}. Rubin~\cite{DBLP:journals/dcg/Rubin13} proved that the number of discrete changes to the Delaunay triangulation is $O(n^{2+\epsilon})$, for any $\epsilon >0$, under the assumptions that ($i$) any four points can be co-circular at most twice, and ($ii$) either no ordered triple of points can be collinear more than once, or no triple of points can be collinear more than twice. Under these assumptions, the kinetic algorithm of Rahmati and Zarei processes $O(n^{3+\epsilon})$ events, which is within a linear factor of the number of changes to the Delaunay triangulation.

The kinetic approach by Rahmati and Zarei~\cite{DBLP:conf/iwoca/RahmatiZ11} can maintain the minimum spanning tree of a planar graph whose edge weights are polynomial functions of bounded degree; the processing time of each event is $O(\log^2 n)$.

\paragraph{Kinetic Yao graph and Semi-Yao graph.} To the best of our knowledge there are no previous kinetic data structures for maintenance of the Semi-Yao graph and the Yao graph on a set of moving points.
\subsection{Main Contributions and Results}\label{sec:mainResults}
Based on the approach we described in Section~\ref{sec:ourApproach}, we obtain the results below.

\paragraph{Kinetic All Nearest Neighbors and the Closest Pair.}
We give a simple and deterministic kinetic algorithm for maintenance of  all the nearest neighbors of a set $P$ of $n$ moving points in the plane, where the trajectory of each point is a polynomial function of at most constant degree $s$. Our KDS uses linear space and $O(n\log n)$ preprocessing time to construct the kinetic data structure, and processes $O(n^2\beta^2_{2s+2}(n)\log n)$ events with total processing time $O(n^2\beta^2_{2s+2}(n)\log^2 n)$. 

We also show how to maintain the closest pair over time. Our KDS for maintenance of the closest pair has the same complexities as the KDS for  all the nearest neighbors; in particular, it uses $O(n)$ space and processes $O(n^2\beta^2_{2s+2}(n)\log n)$ events for a total processing time of $O(n^2\beta^2_{2s+2}(n)\log^2 n)$.

Our KDS for the all nearest neighbors and the closest pair problems is efficient, responsive in an amortized sense, and compact.
The compactness of the KDS implies that our KDS is local in an amortized sense. In particular, on average each point in our KDS participates in $O(1)$  certificates.

Our \textit{deterministic} algorithm for maintenance of all the nearest neighbors in $\mathbb{R}^2$ is simpler and more efficient than the \textit{randomized} kinetic algorithm by Agarwal, Kaplan, and Sharir~\cite{Agarwal:2008:KDD:1435375.1435379}: both of these kinetic algorithms need a priority queue containing all certificates of the KDS (our priority queue uses linear space, but their priority queue uses $O(n\log^2 n)$ space). Our KDS uses a graph data structure for the Equilateral Delaunay graph and a constant number of tournament trees for each point, but their KDS uses a \textit{$2$d range tree} implemented by randomized search trees (treaps), a constant number of sorted lists, and in fact it maintains  $O(\log^2 n)$ tournament trees for each point. In particular,
\begin{itemize}
\item we perform one-dimensional range searching, as opposed to the two-dimensional range searching of their work;
\item the sparse graph representation allows us to obtain a linear space KDS, which improves the space complexity $O(n\log^2 n)$ of their KDS. Their KDS uses a $2$d range tree implemented by randomized search trees that in effect maintain a supergraph of the nearest neighbor graph with $O(n\log^2 n)$ candidate edges;
\item in our kinetic algorithm, the number of changes to the Equilateral Delaunay graph when the points are moving is $O(n^2\beta_{2s+2}(n))$; this leads us to have total processing time $O(n^2\beta^2_{2s+2}(n)\log^2 n)$, which is an improvement of the total expected processing time $O(n^2\beta^2_{2s+2}(n)\log^4 n)$ of their randomized algorithm;
\item on average each point in our KDS participates in a constant number of certificates, but each point in their KDS participates in $O(\log^2 n)$ certificates.
\end{itemize}

The certificates of our KDS for maintenance of the closest pair are simpler than the certificates of the previous kinetic algorithms by Basch, Guibas, and Hershberger (SODA'97)~\cite{Basch:1997:DSM:314161.314435}, Basch, Guibas, and Zhang (SoCG'97)~\cite{Basch:1997:PPM:262839.262998}, and Agarwal, Kaplan, and Sharir  (TALG 2008)~\cite{Agarwal:2008:KDD:1435375.1435379}.

\paragraph{Kinetic Yao Graph and Semi-Yao Graph.} We give the first kinetic data structures for maintenance of two well-studied sparse graphs, the Semi-Yao graph and the Yao graph. Our KDS processes $O(n^2\beta_{2s+2}(n))$ (resp. $O(n^3\beta_{2s+2}^2(n)\log n)$) events to maintain the Semi-Yao graph (resp. the Yao graph); each event can be processed in time $O(\log n)$ in an amortized sense.

\paragraph{Kinetic EMST.} Our KDS for maintenance of the EMST uses $O(n)$ space, takes $O(n\log n)$ preprocessing time, and processes $O(n^3\beta^2_{2s+2}(n)\log n)$ events.  The total cost to process all these events is $O(n^3\beta^2_{2s+2}(n)\log^2 n)$. Our KDS is responsive in an amortized sense, compact, and local on average.

Our EMST KDS improves on the previous EMST KDS by Rahmati and Zarei~\cite{DBLP:conf/iwoca/RahmatiZ11}. Our KDS processes $O(n^3\beta^2_{2s+2}(n)\log n)$ events, whereas the KDS by Rahmati and Zarei processes $O(n^4)$ events.

Table~\ref{table:RelatedWork} summarizes our results and compares them with the previous results.

\begin{table}[t]
\small
\centering
\begin{tabular}{ | c | p{2cm} || c | c | c | p{1.5cm} |}
\cline{2-6} 
\multicolumn{1}{ c| }{} & 
problem & space  &  total number of events &  proc. time per event&  locality\\ \cline{1-6} 
  \multirow{2}{*}{Basch~\etal~\cite{basch_data_1999}} & \multirow{2}{*}{closest pair} &  \multirow{2}{*}{$O(n)$} & \multirow{2}{*}{$O(n^2\beta_{2s+2}(n)\log n)$} & \multirow{2}{*}{$O(\log^2 n)$ ~~~~[in wrc]} & $O(\log n)$ in wrc\\  \hline\hline
\multirow{4}{*}{Basch~\etal~\cite{Basch:1997:PPM:262839.262998}} &  \multirow{2}{*}{closest pair}& \multirow{2}{*}{$O(n\log n)$} & \multirow{2}{*}{$O(n^2\beta_{2s+2}(n)\log n)$} & \multirow{2}{*}{$O(\log^2 n)$ ~~~~[in wrc]} & $O(\log n)$ in wrc\\  \cline{2-6} 
& \multirow{2}{*}{$(1+\epsilon)$-EMST}& \multirow{2}{*}{$O(\epsilon^{-1\over 2}n\log n)$} & \multirow{2}{*}{$O(\epsilon^{-1}n^3)$} & \multirow{2}{*}{$O(\log^2 n)$ ~~~~[in wrc]} & $O(\log n)$ in wrc\\  \hline\hline
\multirow{4}{*}{Agarwal \etal~\cite{Agarwal:2008:KDD:1435375.1435379}} &  \multirow{2}{*}{closest pair}& \multirow{2}{*}{$O(n\log n)$} & \multirow{2}{*}{$O(n^2\beta_{2s+2}(n)\log n)$} & \multirow{2}{*}{$O(\log^2 n)$ ~~~~[in wrc]} & $O(\log n)$ in wrc\\  \cline{2-6} 
& all nearest neighbors& \multirow{2}{*}{$O(n\log^2 n)$} & \multirow{2}{*}{$O(n^2\beta_{2s+2}^2(n)\log^3 n)$} & \multirow{2}{*}{$O(\log n)$ ~~~~[in amr]} & $O(\log^2 n)$ on avg\\  \hline\hline
  \multirow{2}{*}{Rahmati~\etal~\cite{DBLP:conf/iwoca/RahmatiZ11}} & \multirow{2}{*}{EMST} &  \multirow{2}{*}{$O(n)$} & \multirow{2}{*}{$O(n^4)$} & \multirow{2}{*}{$O(\log^2 n)$ ~~~[in wrc]} & $O(1)$ ~on avg\\  \hline\hline
\multirow{10}{*}{{{\color{Mahogany}This Paper}}} & \multirow{2}{*}{closest pair} & \multirow{2}{*}{$O(n)$} & \multirow{2}{*}{$O(n^2\beta_{2s+2}(n)\log n)$} & \multirow{2}{*}{$O(\log n)$ ~~~~[in amr]} & $O(1)$ on ~avg\\ \cline{2-6} 
& all nearest neighbors& \multirow{2}{*}{$O(n)$} & \multirow{2}{*}{$O(n^2\beta_{2s+2}(n)\log n)$} & \multirow{2}{*}{$O(\log n)$ ~~~~[in amr]} & $O(1)$ on ~avg \\ \cline{2-6} 
& \multirow{2}{*}{EMST} & \multirow{2}{*}{$O(n)$} & \multirow{2}{*}{$O(n^3\beta^2_{2s+2}(n)\log n)$} & \multirow{2}{*}{ $O(\log n)$ ~~~~[in amr]} & $O(1)$ on ~avg \\ \cline{2-6} 
& \multirow{2}{*}{Yao graph} & \multirow{2}{*}{$O(n)$} & \multirow{2}{*}{$O(n^3\beta^2_{2s+2}(n)\log n)$} & \multirow{2}{*}{$O(\log n)$ ~~~~[in amr]} & $O(1)$ on ~avg\\   \cline{2-6} 
& Semi-Yao graph & \multirow{2}{*}{$O(n)$} & \multirow{2}{*}{$O(n^2\beta_{2s+2}(n))$} & \multirow{2}{*}{$O(\log n)$ ~~~~[in amr]} & $O(1)$ on ~avg\\\hline
\end{tabular}
\vspace{+10pt}
\caption{The comparison between our KDS's and the previous KDS's, for a set of $n$ points in the plane. The abbreviations amr, wrc, and avg stand for amortized, worst-case, and average, respectively.}
\label{table:RelatedWork}
\vspace{-20pt}
\end{table}
\subsection{Organization}
As necessary background for our work, Section~\ref{sec:preliminary} reviews a basic tool, the \textit{kinetic tournament tree}, which is used in the kinetic data structure framework.

Section~\ref{sec:ANN_CP} is organized as follows: Subsection~\ref{sec:ANN_CP_construction} gives the new method for computing all the nearest neighbors and the closest pair. In particular, it introduces our two new sparse graphs, the \textit{Semi-Yao graph} and the \textit{Equilateral Delaunay graph}  (in fact we will show these graphs are the same). In Subsection~\ref{sec:kinetic_EDG}, we make a kinetic version of the Equilateral Delaunay graph, and then in Subsections~\ref{sec:kinetic_ANN} and~\ref{sec:kinetic_CP}, we show how to use it to maintain  all the nearest neighbors and the closest pair.

The organization of Section~\ref{sec:YG_EMST} is similar to that of Section~\ref{sec:ANN_CP}. Using a new sparse graph, which we call the Pie Delaunay graph, we provide our new method for constructing the Yao graph and the EMST in Subsection~\ref{sec:YG_EMST_construction}. Subsection~\ref{sec:kinetic_PDG} gives a KDS for maintenance of the Pie Delaunay graph, and Subsections~\ref{sec:kinetic:YG} and \ref{sec:kinetic_EMST} use this KDS to maintain the Yao graph and the EMST.

Section~\ref{sec:conclusion} discusses the extensions of the presented kinetic data structures to higher dimensions and gives some open problems for continuing this research direction.
\section{Preliminaries}\label{sec:preliminary}
Let ${\cal O}=\{o_1,o_2,...,o_n\}$ be a set of $n$ moving objects in the plane, where the $y$-coordinate $y_i(t)$ of each object $o_i$ is a continuous function of time. Assuming  $y_i(t)$ is a polynomial function of at most constant degree $s$, it follows from Theorem~\ref{the:totallyDFcomplexity} below that the number of all changes for the lowest object with respect to the $y$-axis, among the set of objects ${\cal O}$, is $\lambda_s(n)$. 
\begin{theorem}\label{the:totallyDFcomplexity}{\tt \cite{Agarwal:1995:DSG:868483}}
The length of the lower envelope of $n$ totally-defined, continuous, univariate  functions, such that each pair of them intersects at most $s$ times, is at most $\lambda_s(n)$.
\end{theorem}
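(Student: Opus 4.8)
The plan is to reduce the statement to the defining property of Davenport--Schinzel sequences, so that the bound $\lambda_s(n)$ follows directly from the definition of $\lambda_s(n)$ as the maximum length of an order-$s$ Davenport--Schinzel sequence on $n$ symbols. I would first encode the lower envelope combinatorially: sweeping the domain from left to right, the pointwise minimum $\min_{1\le j\le n} f_j(x)$ is attained by a single function on each of a collection of maximal intervals, and recording the index of that function on each successive interval yields a word $\Phi=(\phi_1,\dots,\phi_m)$ over the alphabet $\{1,\dots,n\}$. Because any two of the $f_j$ meet at most $s$ times, there are only finitely many abscissae at which the minimizing function can change, so $\Phi$ is finite and its length $m$ is exactly the number of arcs of the lower envelope, which is the quantity to be bounded.

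Next I would check that $\Phi$ satisfies the two axioms of an order-$s$ Davenport--Schinzel sequence. That no two consecutive letters coincide, $\phi_k\neq\phi_{k+1}$, is immediate from the maximality of the intervals: had the same function been minimal on two adjacent intervals, these would have been merged into one. The real content is the absence of any alternating subsequence $a,\dots,b,\dots,a,\dots,b,\dots$ of length $s+2$, and this is where I expect the argument to concentrate.

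For that axiom I would proceed by contradiction, exploiting continuity together with the intersection hypothesis. Suppose $\Phi$ contained such an alternation between letters $a$ and $b$; choose one abscissa interior to each of the $s+2$ associated arcs, obtaining strictly increasing points $p_1<\dots<p_{s+2}$. At each point labelled $a$ the function $f_a$ lies on the envelope, whence $f_a\le f_b$ there, and at each point labelled $b$ we have $f_b\le f_a$; so the continuous function $g=f_a-f_b$ takes values of opposite sign at any two consecutive chosen points. By the intermediate value theorem $g$ has a zero strictly between each consecutive pair, and since these $s+1$ zeros lie in pairwise disjoint open intervals they are distinct intersection points of $f_a$ and $f_b$. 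This contradicts the assumption that $f_a$ and $f_b$ meet at most $s$ times. Hence no alternation of length $s+2$ occurs, $\Phi$ is an order-$s$ Davenport--Schinzel sequence, and $m\le\lambda_s(n)$.

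The main obstacle is the treatment of degeneracies, which is what makes the clean ``opposite sign'' step in the previous paragraph legitimate. When several functions simultaneously attain the minimum, or when two functions touch without properly crossing, the weak inequalities $f_a\le f_b$ and $f_b\le f_a$ need not yield strictly opposite signs, and a zero of $g$ could coincide with one of the sample points rather than lying strictly between two of them. I would handle this either by imposing a consistent tie-breaking rule when defining $\Phi$ (so that the sampled values of $g$ are strictly signed) or by a standard infinitesimal-perturbation argument, noting that the hypothesis of \emph{total} definedness is exactly what keeps the order equal to $s$; for merely partially defined functions the same scheme produces an order-$(s+2)$ sequence and the weaker bound $\lambda_{s+2}(n)$. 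A secondary, purely bookkeeping point is to confirm that an unbounded domain still yields finitely many arcs, which again follows from the pairwise intersection bound.
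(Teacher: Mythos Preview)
Your proposal is correct and is essentially the standard proof of this classical result. However, the paper does not prove this theorem at all: it is stated with a citation to the Sharir--Agarwal monograph and used as a black box, so there is no ``paper's own proof'' to compare against. Your argument---encoding the envelope as a word over $\{1,\dots,n\}$, checking the no-repeat condition by maximality of the arcs, and ruling out an $(a,b)$-alternation of length $s+2$ via the intermediate value theorem applied to $f_a-f_b$---is exactly the textbook reduction to the definition of $\lambda_s(n)$, and your remarks on degeneracies and on why partial functions incur the $+2$ penalty are on point.
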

Note that Theorem~\ref{the:totallyDFcomplexity} holds for totally-defined functions; there exists a similar result for partially-defined functions:
\begin{theorem}\label{the:partiallyDFcomplexity}{\tt \cite{Agarwal:1995:DSG:868483}}
The length of the lower envelope of $n$ partially-defined, continuous, univariate  functions, such that each pair of them intersects at most $s$ times, is at most $\lambda_{s+2}(n)$.
\end{theorem}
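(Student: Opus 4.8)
The plan is to show that the sequence of function indices read off along the lower envelope is a \emph{Davenport--Schinzel sequence of order $s+2$}, so that its length is bounded by $\lambda_{s+2}(n)$ directly from the definition. Concretely, let $f_1,\dots,f_n$ be the partial functions, each defined (I assume, as is standard) on a single interval $I_i$, and let $\Phi$ be the sequence obtained by listing, from left to right, the indices of the functions that attain the lower envelope on its successive maximal edges. Consecutive entries of $\Phi$ are distinct by maximality of the edges, so it suffices to prove that $\Phi$ contains no alternating subsequence $a\,b\,a\,b\,\cdots$ of length $s+4$ for any pair of distinct indices $a,b$; that is exactly the order-$(s+2)$ condition, since order $r$ forbids alternations of length $r+2$.

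First I would fix two indices $a,b$ and restrict attention to the subsequence $\Phi_{ab}$ of $\Phi$ consisting only of the appearances of $f_a$ and $f_b$; the longest alternation between $a$ and $b$ in $\Phi$ equals the longest alternating subsequence of $\Phi_{ab}$. Let $J=I_a\cap I_b$, which is again a single interval (possibly empty), and split the $x$-axis into the part left of $J$, the interval $J$ itself, and the part right of $J$. To the left of $J$ we have $x<\max(l_a,l_b)$, so at least one of $f_a,f_b$ is undefined there, and likewise to the right; hence within each of these two outer regions $\Phi_{ab}$ uses a single symbol and contributes at most one term to any alternation.

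The heart of the argument is the middle region $J$, where both functions are defined. Here I would compare the global envelope with the two-function envelope $\min(f_a,f_b)$: wherever $f_a$ attains the global lower envelope we have $f_a\le f_b$, so $\min(f_a,f_b)=f_a$ there, and symmetrically for $f_b$. Consequently an alternation $a\,b\,a\,\cdots$ of length $m$ inside $J$ forces $\min(f_a,f_b)$ to switch between $f_a$ and $f_b$ at least $m-1$ times, and each switch is a genuine crossing of $f_a$ and $f_b$; since there are at most $s$ crossings we get $m\le s+1$. Adding the three regions yields an alternation of length at most $(s+1)+1+1=s+3$, so none of length $s+4$ exists and $\Phi$ is a Davenport--Schinzel sequence of order $s+2$, giving $|\Phi|\le\lambda_{s+2}(n)$. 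The one step needing care is precisely this middle-region estimate: two entries adjacent in $\Phi_{ab}$ need \emph{not} meet at a crossing, because other functions may occupy the envelope between them, and passing to $\min(f_a,f_b)$ is exactly what circumvents this; a secondary subtlety is the connectedness of each domain $I_i$, without which $J$ need not be a single interval and the outer count changes. (Alternatively, the same bound follows by extending each $f_i$ to a totally defined continuous function that introduces at most two extra crossings per pair and then invoking Theorem~\ref{the:totallyDFcomplexity}.)
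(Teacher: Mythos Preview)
The paper does not give its own proof of this theorem; it is simply quoted from \cite{Agarwal:1995:DSG:868483} as a known result and used as a black box (in Theorems~\ref{the:EDT_Changes}, \ref{the:num_changes_PDG}, and \ref{the:num_changes_YG}). Your argument is correct and is precisely the classical proof one finds in that reference: reduce to showing that the envelope sequence is a Davenport--Schinzel sequence of order $s+2$ by fixing a pair of functions, splitting the line at the endpoints of their common domain, and counting at most one alternation term on each side plus at most $s+1$ in the common interval via the intermediate-value argument on $\min(f_a,f_b)$. Your remarks on the two subtleties (that adjacent entries in $\Phi_{ab}$ need not meet at a crossing, and that connectedness of each $I_i$ is needed so that $J$ is a single interval) are exactly the points that require care, and the alternative you mention---extending each partial function to a total one at the cost of two extra crossings per pair and invoking Theorem~\ref{the:totallyDFcomplexity}---is also a standard shortcut.
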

Here, $\lambda_s(n)=n\beta_s(n)$ is the maximum length of Davenport-Schinzel sequences of order $s$ on $n$ symbols, and $\beta_s(n)$ is an extremely slow-growing function. In particular, 
\[
\lambda_s(n) = 
\begin{cases}
        {n},  & \text{for $s=1$ };\\
        {2n-1},  & \text{for $s=2$ };\\
        {2n\alpha(n)+O(n)},  & \text{for $s=3$ };\\
        {\Theta(n2^{\alpha(n)})},  & \text{for $s=4$ };\\
        {\Theta(n\alpha(n)2^{\alpha(n)})},  & \text{for $s=5$ };\\
        {n2^{(1+o(1))\alpha^t(n)/t!}}, & \text{for $s\geq 6$};
\end{cases}
\]
here $t={\lfloor {(s-2)/2}\rfloor}$ and  $\alpha(n)$ denotes the inverse Ackermann function~\cite{Pettie:2013:SBD:2493132.2462390}. 

For maintenance of the lowest object with respect to the $y$-axis among the set of moving objects $\cal O$ over time, we use a basic (kinetic) data structure called a \textit{kinetic tournament tree}~\cite{basch_data_1999,Agarwal:2008:KDD:1435375.1435379}. A kinetic tournament tree is a balanced binary tree $T$ such that the objects are stored at the leaves of the tree $T$ in an arbitrary order, and each internal node $v$ of the tree maintains the lowest object between its two children. In more detail, denote by  $T_v$ the subtree rooted at internal node $v$ and denote by $P_v$ the set of objects stored at the leaves of $T_v$. The object stored at $v$ in the tournament tree is the lowest object among all the objects in $P_v$; this object is called the \textit{winner} of the subtree $T_v$. For each internal node $v$ of the tournament tree we define a \textit{certificate} to assert whether the left-winner (winner of the left subtree) or the right-winner (winner of the right subtree) is the winner for $v$. The failure time of the certificate corresponding to the internal node $v$ is the time when the winner at $v$ changes. All of the certificates together are stored in a \textit{priority queue}, with the failure times as the keys, to track the next time after the current time that a certificate will become invalid.

When the certificate corresponding to an internal node $v$ fails, it may change some winners on the path from the parent of $v$ to the root. In some cases the winner of a node $v'$ on the path does not change, but the failure time corresponding to the certificate of the node $v'$ may change. Therefore, we must update the failure times of the certificates of the nodes on the path from the parent of $v$ to the root, and then we must replace the invalid certificates with new valid ones in the priority queue; this takes $O(\log^2 n)$ time, which implies that the KDS is \textit{responsive}. The number of internal events for all the internal nodes is $\sum_v\lambda_s(|P_v|)= O(\lambda_s(n)\log n)$. Since the number of external events, that is the number of changes to the root of the tournament tree, is $\lambda_s(n)$, the KDS is \textit{efficient}. The tournament tree uses linear space, which implies the KDS is \textit{compact}. Each object participates in $O(\log n)$ certificates, which means the KDS is \textit{local}.

It is convenient for our purpose to make the tournament tree dynamic, to support point insertions and deletions; the dynamic version of the kinetic tournament tree is called a \textit{dynamic and kinetic tournament tree}. This dynamic and kinetic tournament tree can be implemented using a \textit{weight-balanced (BB($\alpha$)) tree}~\cite{NJRE:10.1137/0202005,Mehlhorn:1984}; see the construction of a dynamic and kinetic tournament tree in~\cite{Agarwal:2008:KDD:1435375.1435379}. Consider a sequence of $m$ insertions and deletions into a dynamic and kinetic tournament tree where the maximum size  tree at any time is $n$ (assuming $m\geq n$). The following theorem gives the construction time and the processing time of  a dynamic and kinetic tournament tree.
\begin{theorem}\label{the:DynamicKineticTT}{\tt \cite{Agarwal:2008:KDD:1435375.1435379}}
A dynamic and kinetic tournament tree on $n$ elements can be constructed in $O(n)$ time. The tournament tree generates at most  $O(m\beta_{s+2}(n)\log n)$
events, for a total cost of $O(m\beta_{s+2}(n)\log^2 n)$. Processing an event takes $O(\log^2 n)$ time.
\end{theorem}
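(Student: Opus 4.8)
The plan is to prove the three assertions—construction time, per-event processing time, and the event bound—by treating the event bound as the crux and disposing of the other two with standard tournament bookkeeping. For the construction, I would build a weight-balanced (BB($\alpha$)) tree holding the $n$ elements at its leaves and fill in winners bottom-up: at each internal node the winner is the lower of its two children's current winners, and its certificate's failure time is the first future crossing of the two children's winner functions. Each node costs $O(1)$, so the tree and all initial certificates are computed in $O(n)$ time, with priority-queue initialization folded in. For processing one event I would follow the standard update: when the certificate at a node $v$ fails, recompute the winner at $v$ and walk from $v$ to the root, recomputing each ancestor's winner and failure time and replacing the stale certificates in the priority queue; the path has length $O(\log n)$ and each queue replacement costs $O(\log n)$, giving $O(\log^2 n)$ per event. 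Insertions and deletions are handled by the BB($\alpha$) machinery: splice in or remove a leaf, update winners and certificates along the search path, and whenever weight balance is violated rebuild the offending subtree as a fresh static kinetic tournament.

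The heart is bounding the number of events, and here I would argue epoch by epoch. Each internal node $v$ persists, between the rebuilds that create and destroy it, through a time interval on which I treat it as a static tournament node. Over that interval the winner at $v$ is exactly the pointwise minimum of the trajectory functions of the elements residing in the subtree of $v$; because elements enter and leave that subtree through insertions, deletions, and rebuilds, these functions are only partially defined. By Theorem~\ref{the:partiallyDFcomplexity}, the lower envelope of $N_v$ such functions has complexity at most $\lambda_{s+2}(N_v)$, so $v$ contributes at most $\lambda_{s+2}(N_v)$ events during that epoch. This partial-definition is precisely the source of the order $s+2$, in contrast to the order $s$ that Theorem~\ref{the:totallyDFcomplexity} would give in the static, no-deletion setting.

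I would then sum over all node epochs. Writing $\lambda_{s+2}(k)=k\,\beta_{s+2}(k)$ and using that $\beta_{s+2}$ is nondecreasing, so that $\beta_{s+2}(N_v)=O(\beta_{s+2}(n))$, the total event count is $O\!\big(\beta_{s+2}(n)\sum_v N_v\big)$, where $N_v$ counts the distinct elements seen by each epoch. At this point I would invoke the amortized rebuilding guarantee of BB($\alpha$) trees—that over $m$ operations the total size of the rebuilt subtrees, and hence the total number and sizes of the node epochs charged to the $O(\log n)$ search paths, is near-linear—to conclude $\sum_v N_v = O(m\log n)$ and therefore $O(m\beta_{s+2}(n)\log n)$ events. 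Multiplying by the $O(\log n)$ amortized priority-queue cost per event yields the stated total cost $O(m\beta_{s+2}(n)\log^2 n)$.

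The main obstacle will be exactly this final summation: reconciling the kinetic accounting with the dynamic accounting. Two points are delicate. First, I must justify that, across a sequence of rebuilds, an element's occupancy of a given node's subtree decomposes into intervals to which the partially-defined lower-envelope bound of Theorem~\ref{the:partiallyDFcomplexity} applies without blow-up. Second, I must charge the per-epoch terms $\lambda_{s+2}(N_v)$ to the $m$ operations through the weight-balance potential so that each operation absorbs only a $\beta_{s+2}(n)\log n$ factor rather than a larger one; the care needed in this amortization is what separates the $O(\log^2 n)$ worst-case cost of a single event from the smaller average contribution that the total-cost bound relies on.
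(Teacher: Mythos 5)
The paper does not actually prove this theorem---it imports it verbatim from Agarwal, Kaplan, and Sharir \cite{Agarwal:2008:KDD:1435375.1435379}, and Section~\ref{sec:preliminary} only sketches the static analysis---so there is no in-paper proof to compare against; your reconstruction follows the same route as the cited source and as that sketch: per-node winners as lower envelopes of partially defined trajectory functions bounded by $\lambda_{s+2}$ via Theorem~\ref{the:partiallyDFcomplexity}, weight-balanced (BB($\alpha$)) rebuilding to bound the total subtree membership by $O(m\log n)$, and an $O(\log n)$-length root path times $O(\log n)$ priority-queue cost per event. The delicate points you flag---charging node epochs to the $m$ operations through the rebalancing potential, and the shift from order $s$ to $s+2$ caused by deletions and rebuilds truncating the functions---are exactly the ones the cited proof handles, so the plan is sound and essentially identical in approach.
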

\section{All Nearest Neighbors and Closest Pair}\label{sec:ANN_CP}
In this section we provide a sparse graph representation and show a new construction of the nearest neighbor graph. First, we introduce two new supergraphs of the nearest neighbor graph, namely the \textit{Semi-Yao graph} and the\textit{ Equilateral Delaunay graph} (EDG), and then we show that these graphs are in fact the same. Next, we show how to maintain the Equilateral Delaunay graph for moving points, and then we give  simple KDS's for maintenance of all the nearest neighbors and the closest pair.
\subsection{New Method for Computing All Nearest Neighbors and Closest Pair}\label{sec:ANN_CP_construction}
Partition the plane into six \textit{wedges} (cones) $W_0,...,W_{5}$, each of angle $\pi/3$ with common apex at the origin $o$. For $0\leq l \leq 5$, let $W_l$ span the angular range $[(2l-1)\pi/6, (2l+1)\pi/6)$. Denote by $b_l$ the unit vector in the direction of the bisector ray of $W_l$. Let $W_l(p_i)$ denote the translate of wedge $W_l$ that moves the apex to point $p_i$, and let ${\cal V}_l(p_i)$ denote the intersection of $P$ with wedge $W_l(p_i)$: ${\cal V}_l(p_i)=P\cap W_l(p_i)$. Denote by $b_l(p_i)$ the unit vector emanating from $p_i$ in the direction of the bisector ray of $W_l(p_i)$; see Figure~\ref{fig:projection}(a). Observe that, in Figure~\ref{fig:projection}(a), since $p_i$ is the closest point to $p_j$, there are no other points of $P$ in the interior of the disc. Let $d(p_i,p_j)$ denote the distance between points $p_i$ and $p_j$.

\begin{figure}[t!]
\centering
\includegraphics[scale=1]{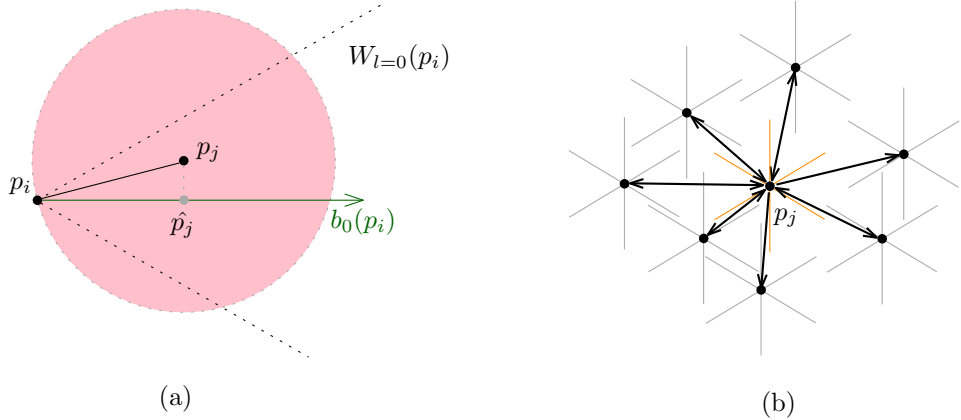}
\caption{ (a) Projection of the point $p_j$ to the bisector $b_0(p_i)$ of the wedge $W_0(p_i)$. (b) In-edges and out-edges of $p_j$.}
\label{fig:projection}
\end{figure}

The following straightforward lemma is key for obtaining our
kinetic data structure for the \textit{all nearest neighbors} and the
\textit{closest pair} problems. Consider $p_j\in P$, and let $p_i$
denote the point of $P$ closest to $p_j$ and distinct from $p_j$.
Let $W_l(p_i)$ denote the wedge of $p_i$ that contains $p_j$, and
denote by $\hat{p}_j$ the projection of $p_j$ to the bisector
$b_l(p_i)$ (see Figure~\ref{fig:projection}(a)).

\begin{lemma}\label{the:CPidea}{\tt \cite{Agarwal:2008:KDD:1435375.1435379,basch_data_1999}}
Point $p_j$ has the minimum length projection to $b_l(p_i)$, where the minimum is taken over ${\cal V}_l(p_i)$. That is,
\begin{equation}\label{eq:MinCoordinate}
d(\hat{p}_j,p_i)=\min\{d(\hat{p}_k,p_i) | p_k\in {\cal V}_l(p_i)\}.
\end{equation}
\end{lemma}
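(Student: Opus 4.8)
The plan is to argue by contradiction. Suppose some point $p_k \in \mathcal{V}_l(p_i)$ with $p_k \ne p_j$ had a strictly shorter projection onto the bisector than $p_j$, that is, $d(\hat{p}_k, p_i) < d(\hat{p}_j, p_i)$. I would show this forces $d(p_j, p_k) < d(p_j, p_i)$, contradicting the hypothesis that $p_i$ is the point of $P$ closest to $p_j$. Since $p_j$ itself lies in $\mathcal{V}_l(p_i)$, ruling out any strictly smaller projection establishes that $p_j$ attains the minimum in Equation~\eqref{eq:MinCoordinate}.

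To set up the computation, I would place $p_i$ at the origin and rotate so that the bisector $b_l(p_i)$ points along the positive $x$-axis; then $W_l(p_i)$ is the angular sector $[-\pi/6, \pi/6]$. Writing each point in polar form, let $p_m = (r_m\cos\phi_m, r_m\sin\phi_m)$ with $r_m = d(p_i, p_m)$ and $|\phi_m| \le \pi/6$ for $m \in \{j,k\}$, since both lie in the wedge. The projection length onto the bisector is simply the $x$-coordinate, $d(\hat{p}_m, p_i) = r_m\cos\phi_m$, and by the law of cosines $d(p_j, p_k)^2 = r_j^2 + r_k^2 - 2 r_j r_k \cos(\phi_j - \phi_k)$. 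The target inequality $d(p_j, p_k) < d(p_j, p_i) = r_j$ therefore reduces, after dividing by $r_k > 0$, to $r_k < 2 r_j \cos(\phi_j - \phi_k)$.

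The core step is to pass from the projection assumption $r_k\cos\phi_k < r_j\cos\phi_j$ to this last inequality. Since $\cos\phi_k > 0$, the assumption gives $r_k < r_j\cos\phi_j/\cos\phi_k$, so it suffices to verify the purely trigonometric bound $\cos\phi_j/\cos\phi_k \le 2\cos(\phi_j-\phi_k)$, i.e.\ $\cos\phi_j \le 2\cos\phi_k\cos(\phi_j-\phi_k)$. Applying the product-to-sum identity $2\cos\phi_k\cos(\phi_j-\phi_k) = \cos\phi_j + \cos(2\phi_k-\phi_j)$ collapses this to the clean condition $\cos(2\phi_k - \phi_j) \ge 0$.

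This is where the choice of wedge angle does all the work, and I expect it to be the only delicate point. Because $|\phi_j|, |\phi_k| \le \pi/6$, we get $|2\phi_k - \phi_j| \le 2|\phi_k| + |\phi_j| \le \pi/2$, so $\cos(2\phi_k-\phi_j)\ge 0$ and the chain of inequalities closes. Note that this is exactly tight: it is precisely the half-angle bound $\pi/6$ (equivalently, partitioning the plane into $z = 6$ wedges) that guarantees $|2\phi_k - \phi_j| \le \pi/2$, and the argument would break down for wider wedges. Tracking the strict inequality throughout then yields $d(p_j,p_k) < d(p_j,p_i)$ strictly, which is the desired contradiction and completes the proof.
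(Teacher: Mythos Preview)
Your argument is correct. Note, however, that the paper does not actually supply a proof of this lemma: it is quoted as a known fact with citations to \cite{Agarwal:2008:KDD:1435375.1435379,basch_data_1999}, so there is no in-paper proof to compare against. The arguments in those references are typically phrased more geometrically---one observes that if $p_k$ has a strictly smaller bisector projection than $p_j$, then $p_k$ lies in the equilateral triangle with apex $p_i$ whose far side passes through $p_j$, and then uses the geometry of that triangle (every point of it is within distance $d(p_i,\hat p_j)$ of $p_j$) to conclude $d(p_j,p_k)<d(p_j,p_i)$. Your trigonometric route is equivalent and has the advantage of making the role of the wedge half-angle $\pi/6$ fully explicit: the reduction to $\cos(2\phi_k-\phi_j)\ge 0$ together with $|2\phi_k-\phi_j|\le 2|\phi_k|+|\phi_j|\le \pi/2$ shows transparently why six wedges is the threshold at which the lemma holds.
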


Thus, Lemma~\ref{the:CPidea} gives a necessary condition for $p_i$ to be the nearest neighbor to $p_j$. We now use this lemma to define a super-graph of the nearest neighbor graph of $P$. 
To find the nearest neighbor for each point $p_j\in P$,  we seek a
set of candidate points ${\cal
C}(p_j)=\{p_i|~p_i~and~p_j~satisfy~Equation~(1)\}$. From now on, when we say $p_j$ has the minimum $b_l$-coordinate inside
the wedge $W_l(p_i)$, we mean that $p_j$ and $p_i$ satisfy
Equation~(\ref{eq:MinCoordinate}).

By connecting each point $p_i\in P$ to a  point $p_j\in {\cal
V}_l(p_i)$ with a directed edge $\overrightarrow{p_jp_i}$ from $p_j$ to $p_i$ whenever $p_j$
is the point with the minimum $b_l$-coordinate, among all the points
in ${\cal V}_l(p_i)$, we obtain what we call the \textit{Semi-Yao graph}
(SYG) of $P$~\footnote{This graph is called the $\theta_6$-graph in~\cite{DBLP:journals/dcg/KeilG92}, but we prefer to call it the Semi-Yao graph instead of the $\theta_6$-graph, because of its close relationship to the Yao graph~\cite{DBLP:journals/siamcomp/Yao82}}. The edge $\overrightarrow{p_jp_i}$ is called an
\textit{in-edge} for $p_i$ and it is called an \textit{out-edge} for
$p_j$. Each point in the Semi-Yao graph has at most six in-edges and has a
set of out-edges; Figure~\ref{fig:projection}(b) depicts the in-edges
and the out-edges of the point $p_j$. Denote by $S_{out}(p_j)$ the
end points of the  out-edges of $p_j$. From the above discussion, it is easy to see the following observation and lemma.

\begin{observation}
${\cal C}(p_j)=S_{out}(p_j)$.
\end{observation}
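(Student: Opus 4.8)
The plan is to prove the two inclusions ${\cal C}(p_j)\subseteq S_{out}(p_j)$ and $S_{out}(p_j)\subseteq {\cal C}(p_j)$ by unfolding the two definitions and observing that they impose literally the same condition on a candidate point $p_i$. Recall that $p_i\in S_{out}(p_j)$ means $p_j$ has a directed out-edge $\overrightarrow{p_jp_i}$, which by the construction of the Semi-Yao graph holds exactly when, for the wedge $W_l(p_i)$ with $p_j\in {\cal V}_l(p_i)$, the point $p_j$ attains the minimum $b_l$-coordinate over ${\cal V}_l(p_i)$. On the other hand, $p_i\in {\cal C}(p_j)$ means $p_i$ and $p_j$ satisfy Equation~(\ref{eq:MinCoordinate}), i.e.\ $d(\hat{p}_j,p_i)=\min\{d(\hat{p}_k,p_i)\mid p_k\in {\cal V}_l(p_i)\}$ where $W_l(p_i)$ is the wedge of $p_i$ containing $p_j$. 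By the convention stated just above the observation, ``$p_j$ has the minimum $b_l$-coordinate in $W_l(p_i)$'' is \emph{defined} to mean precisely that $p_j$ and $p_i$ satisfy Equation~(\ref{eq:MinCoordinate}); hence the two membership conditions are the same statement.

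The one point I would verify with care is that the wedge index $l$ is forced to be identical in both descriptions. Since the six wedges $W_0,\dots,W_5$ have half-open angular ranges $[(2l-1)\pi/6,(2l+1)\pi/6)$ that tile the full plane, the translated wedges $W_0(p_i),\dots,W_5(p_i)$ partition the plane around $p_i$; consequently, for every $p_j\neq p_i$ there is a unique index $l$ with $p_j\in W_l(p_i)$, equivalently $p_j\in {\cal V}_l(p_i)$. Thus both in the out-edge condition and in Equation~(\ref{eq:MinCoordinate}) the index $l$ is forced to be this unique wedge of $p_i$ containing $p_j$, so comparing the two conditions is unambiguous.

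Putting these together gives both directions. If $p_i\in {\cal C}(p_j)$, then $p_j$ has the minimum $b_l$-coordinate in ${\cal V}_l(p_i)$ for the unique wedge $W_l(p_i)$ containing $p_j$, so the Semi-Yao construction creates the edge $\overrightarrow{p_jp_i}$ and $p_i\in S_{out}(p_j)$. Conversely, if $p_i\in S_{out}(p_j)$, the existence of the out-edge $\overrightarrow{p_jp_i}$ says $p_j$ has the minimum $b_l$-coordinate in ${\cal V}_l(p_i)$, which is exactly Equation~(\ref{eq:MinCoordinate}), so $p_i\in {\cal C}(p_j)$. Because the claim is essentially a matching of definitions, I expect no genuine obstacle; the only thing worth making explicit is the well-definedness of the wedge index $l$, which is precisely what lets the two conditions line up without ambiguity.
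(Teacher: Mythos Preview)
Your proposal is correct and matches the paper's treatment: the paper states this observation without proof, noting only that it is easy to see from the preceding discussion, and your argument is exactly the intended unfolding of the two definitions. Your explicit check that the wedge index $l$ is uniquely determined (via the half-open angular partition) is a nice clarification that the paper leaves implicit.
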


\begin{lemma}\label{the:SYcontainsNNG}
The Semi-Yao graph is a super-graph of the nearest neighbor graph.
\end{lemma}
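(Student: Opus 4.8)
The plan is to prove the containment at the level of edge sets: since the nearest neighbor graph and the Semi-Yao graph share the vertex set $P$, it suffices to show that every directed edge of the nearest neighbor graph is also a directed edge of the Semi-Yao graph. The entire argument is a reinterpretation of Lemma~\ref{the:CPidea} in terms of the edge-generation rule defining the Semi-Yao graph, so I expect it to be short and essentially bookkeeping.

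First I would fix an arbitrary edge $\overrightarrow{pq}$ of the nearest neighbor graph, so that $q$ is the point of $P$ closest to $p$ and distinct from it. Matching the notation of Lemma~\ref{the:CPidea}, I set $p_j:=p$ and $p_i:=q$; then $p_i$ is the closest point to $p_j$, so the hypothesis of the lemma holds. Let $W_l(p_i)$ be the unique wedge with apex $p_i$ that contains $p_j$, which is well defined because the translated wedges $W_0(p_i),\dots,W_5(p_i)$ partition the plane around $p_i$ via the half-open angular ranges $[(2l-1)\pi/6,(2l+1)\pi/6)$ fixed above. Lemma~\ref{the:CPidea} then asserts that $p_j$ has the minimum $b_l$-coordinate among all points of $\mathcal{V}_l(p_i)=P\cap W_l(p_i)$. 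By the defining rule of the Semi-Yao graph, a directed edge $\overrightarrow{p_jp_i}$ is inserted precisely when $p_j\in\mathcal{V}_l(p_i)$ is the point of minimum $b_l$-coordinate inside $W_l(p_i)$; since this is exactly the conclusion of the lemma, the edge $\overrightarrow{p_jp_i}=\overrightarrow{pq}$ belongs to the Semi-Yao graph. As $\overrightarrow{pq}$ was arbitrary, every nearest neighbor graph edge lies in the Semi-Yao graph, establishing the claim. Equivalently, one may route the same step through the Observation $\mathcal{C}(p_j)=S_{out}(p_j)$: the lemma places the nearest neighbor $p_i$ in $\mathcal{C}(p_j)$, hence in $S_{out}(p_j)$, so $\overrightarrow{p_jp_i}$ is an out-edge of $p_j$.

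The only genuinely delicate points are the orientation convention and the uniqueness of the minimizer. I expect the main thing to verify is that the directions agree: in the nearest neighbor graph the edge points from $p$ toward its neighbor $q$, whereas in the Semi-Yao graph the edge $\overrightarrow{p_jp_i}$ points from the wedge-point $p_j$ to the apex $p_i$, and these coincide under the identification $p_j=p$, $p_i=q$. Regarding uniqueness, if two points of $\mathcal{V}_l(p_i)$ shared the minimum $b_l$-coordinate, the Semi-Yao rule would select only one of them, so I would invoke the standard general-position assumption (no two points of $P$ have equal $b_l$-coordinate relative to a common apex) under which Lemma~\ref{the:CPidea} forces $p_j$ to be the unique minimizer, and hence forces $\overrightarrow{pq}$ into the Semi-Yao graph.
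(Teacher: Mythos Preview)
Your proposal is correct and mirrors the paper's own reasoning: the paper states that the observation and lemma follow immediately ``from the above discussion,'' meaning precisely the application of Lemma~\ref{the:CPidea} together with the definition of the Semi-Yao graph, which is exactly what you carry out. Your additional remarks on orientation and the general-position assumption for uniqueness are sound and simply make explicit what the paper leaves implicit.
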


From now on, when we say a convex set is \textit{empty}, we mean it has no point of $P$ in its interior. 

From Lemma~\ref{the:CPidea}, we obtain the following straightforward observation, which makes a connection to the Delaunay triangulations of the point set $P$.
\begin{observation}\label{the:emptyTri}
If $p_j$ has the minimum $b_l$-coordinate inside the wedge $W_l(p_i)$, then $p_i$ and $p_j$ touch the boundary of an empty equilateral triangle; $p_i$ touches a vertex and $p_j$ touches an edge of the triangle.
\end{observation}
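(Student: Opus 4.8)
The plan is to construct the empty equilateral triangle explicitly, by truncating the wedge $W_l(p_i)$ with a line perpendicular to its bisector. First I would consider the line $\ell$ passing through $p_j$ and orthogonal to the bisector direction $b_l(p_i)$. Together with the two rays bounding $W_l(p_i)$, this line $\ell$ cuts off a triangle $T$ whose apex is the point $p_i$ and whose base lies along $\ell$. By construction $p_i$ is a vertex of $T$ and $p_j$, which lies on $\ell$ and inside the wedge, lies on the base edge of $T$; this already yields the incidence part of the claim, so it remains to check that $T$ is equilateral and empty.

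The first verification is that $T$ is equilateral. Since $W_l(p_i)$ has apex angle $\pi/3$, its two bounding rays each make an angle of $\pi/6$ with the bisector $b_l(p_i)$, so the interior angle of $T$ at the apex $p_i$ is $\pi/3$. Because the base $\ell$ is orthogonal to the bisector, a short angle chase at each of the two base vertices shows that the remaining two interior angles are also $\pi/3$, each equal to $\pi/2 - \pi/6 = \pi/3$. Hence all three interior angles of $T$ equal $\pi/3$, so $T$ is equilateral.

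The second step, and the crux of the argument, is emptiness. The key observation is that the interior of $T$ consists precisely of the points of $W_l(p_i)$ whose projection onto $b_l(p_i)$ has length strictly less than $d(\hat{p}_j,p_i)$, since the base $\ell$ is exactly the locus of projection length $d(\hat{p}_j,p_i)$. Suppose for contradiction that some $p_k\in P$ with $p_k\neq p_i$ lies in the interior of $T$. Then $p_k\in\mathcal{V}_l(p_i)$, because $T\subseteq W_l(p_i)$, and $d(\hat{p}_k,p_i)<d(\hat{p}_j,p_i)$. This contradicts Lemma~\ref{the:CPidea}, equivalently the hypothesis that $p_j$ has the minimum $b_l$-coordinate inside $W_l(p_i)$. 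Hence no such $p_k$ exists and $T$ is empty.

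I do not expect a serious obstacle here, as the statement is essentially a geometric reformulation of the minimality in Lemma~\ref{the:CPidea}. The only points requiring care are confirming the equilateral property from the $\pi/6$ half-angle of the wedge, and checking that ``interior of $T$'' corresponds exactly to ``strictly smaller $b_l$-coordinate,'' so that the minimality of the projection of $p_j$ genuinely rules out any point of $P$ in the interior of $T$.
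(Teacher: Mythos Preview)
Your proposal is correct and follows essentially the same approach as the paper: the paper states this observation without proof, but in the proof of Lemma~\ref{the:SY_EDT} it uses exactly your construction---the bounded area cut off from $W_l(p_i)$ by the line through $p_j$ perpendicular to $b_l(p_i)$---as the empty $l$-tri. Your write-up simply fills in the angle chase for equilaterality and the emptiness argument that the paper leaves implicit.
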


A \textit{unit regular hexagon} is a regular hexagon whose edges
have unit length; let $\varhexagon$ be the unit regular hexagon with
center at the origin $o$ and  vertices at $(\sqrt{3}/2,1/2)$,
$(0,1)$, $(-\sqrt{3}/2,1/2)$, $(-\sqrt{3}/2,-1/2)$, $(0,-1)$, and
$(\sqrt{3}/2,-1/2)$ (see Figure~\ref{fig:hexagon}(a)). Partition
$\varhexagon$ into six equilateral triangles $\vartriangle_l$,
$l=0,1,..,5$, and call any translated and scaled copy of
$\vartriangle_l$ an \textit{$l$-tri} (see
Figure~\ref{fig:hexagon}(b)).

\begin{figure}[t]
  \begin{center}
    \includegraphics[scale=1]{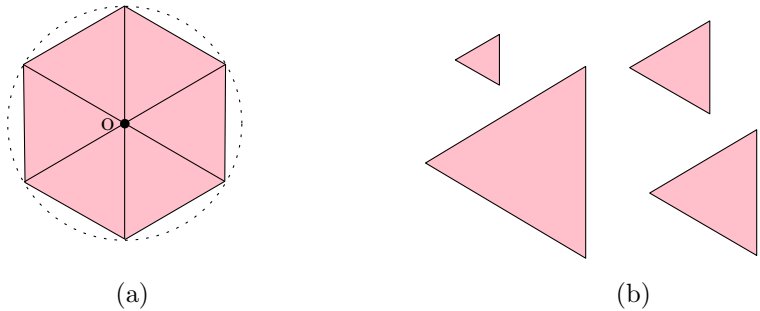}
  \end{center}
  \caption{(a) Partitioning the unit regular hexagon into six equilateral triangles. (b) Some $0$-tri's.}
  \label{fig:hexagon}
\end{figure}

A Delaunay graph can be defined based on any convex shape, \eg, a
square, a diamond, any triangle, or a piece of
pie~\cite{Abam:2010:SEK:1630166.1630284,DBLP:conf/swat/RahmatiZ12,Drysdale:1990:PAC:320176.320194}. The Delaunay triangulation based on a convex shape is the maximal set of edges such that no two edges intersect except at
common endpoints, and such that the endpoints of each edge lie on the boundary of an empty scaled translate of the convex shape. If the points are in \textit{general position}\footnote{The set of points $P$ is in general position with respect to a convex shape if it contains no four points on
the boundary of any scaled translate of the convex shape.} the bounded faces of the Delaunay
graph are triangles, and the Delaunay graph is called a Delaunay
triangulation. Here we call the Delaunay triangulation constructed based on an equilateral triangle an \textit{Equilateral Delaunay triangulation} (EDT).

There is a nice connection between the Semi-Yao graph and Equilateral Delaunay triangulations. In general, the Semi-Yao graph is the union of two Equilateral Delaunay triangulations~\cite{Bonichon:2010:CTD:1939238.1939265}. Next we describe this connection in a different, and in our view simpler, way than~\cite{Bonichon:2010:CTD:1939238.1939265}. 

Call an $l$-tri whose interior does not contain any point of $P$ an \textit{empty $l$-tri}. Denote by $EDT_l$ the Equilateral Delaunay triangulation based on the $l$-tri. The
edge $p_ip_j$ is an edge of $EDT_l$ if and only if there is an empty
$l$-tri such that $p_i$ and $p_j$ are on the boundary of the $l$-tri;
Figure~\ref{fig:vor_del} depicts $EDT_0$ for a set of four points. Let ${\cal E}(G)$ be the set of edges of graph $G$; the set of vertices of $G$ is
$P$.  Since $\vartriangle_0$, $\vartriangle_2$, and $\vartriangle_4$ are
translates of one another, and similarly for $\vartriangle_1$,
$\vartriangle_3$, and $\vartriangle_5$, we have that ${\cal
E}(EDT_0)={\cal E}(EDT_2)={\cal E}(EDT_4)$ and ${\cal
E}(EDT_1)={\cal E}(EDT_3)={\cal E}(EDT_5)$. Thus, there are two
different types of $l$-tri's. We define the \textit{Equilateral Delaunay graph} (EDG) to be the union
of $EDT_0$ and $EDT_1$, \ie, $p_ip_j\in {\cal E}(EDG)$ if
and only if $p_ip_j\in {\cal E}(EDT_0)$ or $p_ip_j\in {\cal
E}(EDT_1)$.

\begin{figure}[t]
\centering
    \includegraphics[scale=0.4]{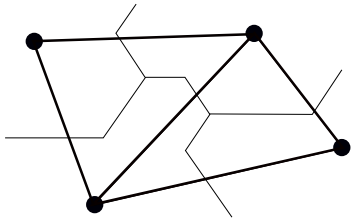}
  \caption{The Delaunay triangulation and the Voronoi diagram based on the $0$-tri, as produced by a program in~\cite{GeometryLabRolf}.}
  \label{fig:vor_del}
\end{figure}

The cell boundaries of a Voronoi diagram of a set $P$ of $n$ sites, based on a convex shape, consist of points where the convex-shaped waves emanating from the sites
collide; to determine the Voronoi diagram of the set of four sites in
Figure~\ref{fig:vor_del}, based on the $0$-tri, we use a program
in~\cite{GeometryLabRolf}. Using divide and conquer algorithms by
Chew and Drysdale~\cite{Chew:1985:VDB:323233.323264,Drysdale:1990:PAC:320176.320194},

\begin{theorem}\label{the:VD_DT_Con}{\tt \cite{Chew:1985:VDB:323233.323264,Drysdale:1990:PAC:320176.320194}}
The Voronoi diagram and Delaunay triangulation of a set of $n$
sites based on a convex shape can be constructed in
$O(n\log n)$ time.
\end{theorem}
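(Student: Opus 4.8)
The plan is to realize Theorem~\ref{the:VD_DT_Con} via divide and conquer, following the classical template for Euclidean Voronoi diagrams but with the Euclidean metric replaced by the convex distance function induced by the fixed convex shape $C$. For a convex region $C$ containing a chosen reference point in its interior, define $d_C(p,q)=\min\{\lambda\ge 0: q\in p+\lambda C\}$; the Voronoi cell of a site $p$ is then $\{x: d_C(p,x)\le d_C(q,x)\text{ for all sites }q\}$, and the Delaunay triangulation is its dual. Because $C$ has constant combinatorial complexity (a triangle or a piece of pie), I would first establish the structural facts needed to control the algorithm: each bisector $\{x: d_C(p,x)=d_C(q,x)\}$ is a curve of constant complexity, each Voronoi cell is connected (indeed star-shaped with respect to its site), and consequently the whole diagram has $O(n)$ total complexity. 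These facts rest on the convexity of $C$ and can be proven by elementary arguments about how the scaled copies $p+\lambda C$ grow with $\lambda$.

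First I would sort the sites once by $x$-coordinate in $O(n\log n)$ time. The recursion splits the current set by a vertical line into a left half $L$ and a right half $R$ of nearly equal size, recursively builds $\mathrm{Vor}(L)$ and $\mathrm{Vor}(R)$, and then merges them. The heart of the merge is the \emph{separating contour}, the locus of points whose nearest site in $L$ is as close as its nearest site in $R$; this contour is a single monotone curve that, once traced, tells us exactly which portions of $\mathrm{Vor}(L)$ and $\mathrm{Vor}(R)$ survive in the merged diagram. I would trace the contour by a simultaneous walk through both sub-diagrams: at each stage the contour follows the bisector of the currently active pair $p\in L$, $q\in R$ until it crosses an edge of $\mathrm{Vor}(L)$ or $\mathrm{Vor}(R)$, at which point the active site on the corresponding side is updated. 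Since each edge of the two sub-diagrams is visited at most a constant number of times, the merge runs in $O(|L|+|R|)=O(n)$ time, giving the recurrence $T(n)=2T(n/2)+O(n)=O(n\log n)$; the Delaunay triangulation is then read off as the dual in additional linear time.

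The main obstacle is establishing that the separating contour is a connected, monotone curve traversable in linear time; in the Euclidean setting this is the classical monotonicity of the dividing chain, but under a convex distance function the bisectors are no longer straight lines, so monotonicity must be re-derived from the geometry of $C$. The key lemma to prove is that, with the sites separated by a vertical line and with $C$ oriented appropriately, every bisector between a left site and a right site is monotone in the vertical direction, so the contour meets each horizontal line exactly once and never backtracks. Proving this amounts to showing that as $\lambda$ increases the growing copies $p+\lambda C$ and $q+\lambda C$ first touch along a well-defined progression, which follows from convexity but requires care precisely because $C$ need not be strictly convex. Degenerate configurations---four sites on the boundary of a common scaled copy of $C$, or edges of $C$ parallel to a family of bisector segments causing a bisector to fatten into a two-dimensional region---would break both monotonicity and the $O(n)$ size bound; I would dispose of these using the general-position assumption already adopted in the excerpt (no four points on the boundary of a scaled translate of the convex shape), together with a consistent tie-breaking rule, so that the clean combinatorial picture is recovered.
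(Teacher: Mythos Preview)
The paper does not give its own proof of this theorem: it is stated as a known result with citations to Chew and Drysdale, preceded only by the phrase ``Using divide and conquer algorithms by Chew and Drysdale.'' Your sketch is consistent with that cited divide-and-conquer approach, so there is nothing to compare against beyond the one-line attribution.
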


Since each $\vartriangle_l$ is a convex shape,
using the approaches of Chew and Drysdale, we can construct the
corresponding Voronoi diagram/Delaunay triangulation  in $O(n\log
n)$ time. Then the following results.

\begin{corollary}\label{the:EDT_Construction}
The  Equilateral Delaunay graph (EDG) can be constructed in $O(n\log n)$ time.
\end{corollary}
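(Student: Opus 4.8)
The plan is to reduce the construction of the EDG directly to two applications of Theorem~\ref{the:VD_DT_Con}. Recall that the EDG was defined as the union of the two Equilateral Delaunay triangulations $EDT_0$ and $EDT_1$, where $EDT_l$ is the Delaunay triangulation based on the $l$-tri $\vartriangle_l$. Since each of $\vartriangle_0$ and $\vartriangle_1$ is an equilateral triangle, hence a convex shape, the hypothesis of Theorem~\ref{the:VD_DT_Con} is satisfied, so first I would invoke the divide-and-conquer algorithm of Chew and Drysdale with the convex shape $\vartriangle_0$ to obtain $EDT_0$, and then again with $\vartriangle_1$ to obtain $EDT_1$. This costs $O(n\log n)+O(n\log n)=O(n\log n)$ in total. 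By the earlier observation that ${\cal E}(EDT_0)={\cal E}(EDT_2)={\cal E}(EDT_4)$ and ${\cal E}(EDT_1)={\cal E}(EDT_3)={\cal E}(EDT_5)$, no further triangulations need be computed: these two already capture every edge contributed by the six $l$-tri's.

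The remaining step is to form the union ${\cal E}(EDG)={\cal E}(EDT_0)\cup{\cal E}(EDT_1)$. Since each $EDT_l$ is a planar graph on $n$ vertices, it has only $O(n)$ edges, so the two edge lists together contain $O(n)$ entries. Representing each edge by the ordered index pair of its endpoints, I would merge the two lists and discard duplicates, which can be done in $O(n)$ time using a hash table (or in $O(n\log n)$ time by sorting). Combining the two triangulation steps with this merge yields the claimed $O(n\log n)$ construction time.

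The step that deserves the most care is confirming that the union operation does not dominate the running time; the fact underwriting this is the planarity, and hence the $O(n)$ edge count, of each equilateral Delaunay triangulation, which keeps both the output size and the cost of deduplication linear. Everything else is an immediate consequence of Theorem~\ref{the:VD_DT_Con}, since an equilateral triangle is trivially a convex shape, so there is no genuine obstacle beyond verifying that this linear merge is absorbed into the $O(n\log n)$ bound.
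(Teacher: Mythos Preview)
Your proposal is correct and follows essentially the same approach as the paper: the paper simply notes that each $\vartriangle_l$ is a convex shape and invokes Theorem~\ref{the:VD_DT_Con} to build the two triangulations $EDT_0$ and $EDT_1$ in $O(n\log n)$ time. Your added detail about the linear-size merge is fine but not something the paper bothers to spell out.
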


Let $p_ip_j\in {\cal E}(EDT_l)$. By definition there exists an empty $l$-tri such that $p_i$ and $p_j$ are on its boundary. By scaling down the $l$-tri, one of the $l$-tri vertices will be placed at $p_i$ or $p_j$; see Figures~\ref{fig:SY_EDT2}(b) and \ref{fig:SY_EDT2}(c).

\begin{observation}\label{the:SqueezedTRI}
If there is an empty $l$-tri such that $p_i$ and $p_j$ are on its boundary, then there is an empty $l$-tri with the same property such that either $p_i$ or $p_j$ is a vertex of the $l$-tri.
\end{observation}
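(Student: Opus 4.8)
The plan is to argue \emph{constructively} by shrinking the given empty $l$-tri until one of $p_i,p_j$ is forced onto a vertex, exactly as the sentence preceding the statement suggests. The key starting observation is that every $l$-tri is an intersection of three half-planes whose bounding lines have three \emph{fixed} outward normal directions (mutually at $120^\circ$, determined only by $l$); hence the family of all $l$-tri's is parameterized by the three signed edge offsets $(c_0,c_1,c_2)$, and, since the three edge directions — and therefore all three $60^\circ$ angles — are fixed, every nondegenerate triangle obtained this way is again an equilateral homothet of $\vartriangle_l$. Translating one bounding line inward (decreasing a single $c_k$) thus produces a smaller $l$-tri contained in the original one, and I will use these single-edge moves as the shrinking operation.

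If $p_i$ is already a vertex of the starting $l$-tri we are done, so assume neither point is a vertex, i.e.\ each lies in the relative interior of an edge, and split into two cases. In the first case $p_i$ and $p_j$ lie on two \emph{distinct} edges, say the lines carrying normals $n_0$ and $n_1$; I keep $c_0,c_1$ fixed (so both points stay on their lines and the apex $v_{01}=\ell_0\cap\ell_1$ opposite the third edge stays pinned) and decrease $c_2$, pushing the third edge inward. The two vertices $v_{02}$ and $v_{12}$ on the fixed lines then slide toward $v_{01}$ along the very edges carrying $p_i$ and $p_j$; since each point lies strictly between the apex and its sliding vertex, one of these vertices must reach its point first, and I stop $c_2$ at that first coincidence. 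In the second case $p_i,p_j$ share one edge; I then keep that edge's offset fixed and push one of the two adjacent edges inward, sliding the nearer shared-edge vertex toward the closer of $p_i,p_j$, again stopping at the first coincidence.

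In every move the new triangle is contained in the original empty $l$-tri, so its interior still contains no point of $P$ and it remains empty, and by the first paragraph it remains an equilateral homothet of $\vartriangle_l$. The only real obstacle is the bookkeeping that \emph{both} points genuinely stay on the \emph{boundary} throughout and that the shrinking halts strictly before the triangle degenerates or flips orientation; both follow from the stopping rule. Because we stop at the \emph{first} time a sliding vertex meets $p_i$ or $p_j$, the point that is met becomes a vertex while the other point has not yet been overtaken and so still lies on its (shortened) edge, and this event occurs while the triangle is still a proper, correctly oriented $l$-tri — the degenerate limit, in which two vertices collapse onto $v_{01}$, would require a sliding vertex to travel past the point it first meets. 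This produces an empty $l$-tri with $p_i$ and $p_j$ on its boundary and one of them at a vertex, as claimed.
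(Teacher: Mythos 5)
Your proposal is correct and follows essentially the same route as the paper, which justifies this observation with a one-line remark that ``by scaling down the $l$-tri, one of the $l$-tri vertices will be placed at $p_i$ or $p_j$''; your edge-offset parameterization and first-coincidence stopping rule are simply a careful, rigorous expansion of that shrinking argument. The case split (distinct edges versus a shared edge) and the check that the shrunken triangle stays a nondegenerate, empty $l$-tri with both points on its boundary are exactly the details the paper leaves implicit.
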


The next lemma proves that the undirected Semi-Yao graph and the Equilateral Delaunay graph are equal to each other.

\begin{lemma}\label{the:SY_EDT}
Edge $p_ip_j\in {\cal E}(SYG)$ if and only if $p_ip_j\in {\cal E}(EDG)$.
\end{lemma}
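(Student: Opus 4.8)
The plan is to prove the two inclusions ${\cal E}(SYG) \subseteq {\cal E}(EDG)$ and ${\cal E}(EDG) \subseteq {\cal E}(SYG)$ separately, using Observation~\ref{the:emptyTri} for the forward direction and Observation~\ref{the:SqueezedTRI} for the reverse direction. The essential geometric fact I would rely on throughout is that an $l$-tri with apex at a point $p_i$ and opening in the direction $b_l$ is exactly the set of points $q\in W_l(p_i)$ whose $b_l$-coordinate equals $\langle q-p_i,b_l\rangle$ and is at most some threshold $h$; that is, the base of the triangle lies on the line $\{q:\langle q-p_i,b_l\rangle = h\}$, which is perpendicular to $b_l$, and the triangle is the sublevel set of the $b_l$-coordinate within the wedge. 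This identification is what ties the ``minimum $b_l$-coordinate'' condition defining the Semi-Yao graph to the ``empty $l$-tri'' condition defining the Equilateral Delaunay graph.

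First I would prove ${\cal E}(SYG) \subseteq {\cal E}(EDG)$. Suppose $p_ip_j \in {\cal E}(SYG)$; without loss of generality the directed edge is $\overrightarrow{p_jp_i}$, so $p_j$ has the minimum $b_l$-coordinate inside $W_l(p_i)$ for some $l$. By Observation~\ref{the:emptyTri}, $p_i$ and $p_j$ lie on the boundary of an empty equilateral triangle with $p_i$ at a vertex (the apex) and $p_j$ on the opposite edge. Because this triangle has its apex at $p_i$ and opens in the direction $b_l$, it is a scaled translate of $\vartriangle_l$, i.e., an empty $l$-tri. Hence $p_ip_j \in {\cal E}(EDT_l)$, and since ${\cal E}(EDT_0)={\cal E}(EDT_2)={\cal E}(EDT_4)$ and ${\cal E}(EDT_1)={\cal E}(EDT_3)={\cal E}(EDT_5)$, this edge already appears in $EDT_0$ or $EDT_1$, so $p_ip_j \in {\cal E}(EDG)$.

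For the converse ${\cal E}(EDG) \subseteq {\cal E}(SYG)$, suppose $p_ip_j \in {\cal E}(EDT_l)$ for some $l$. By Observation~\ref{the:SqueezedTRI}, there is an empty $l$-tri on whose boundary both points lie and of which one of them, say $p_i$, is the apex; then $p_j$ lies in $W_l(p_i)$, so $p_j \in {\cal V}_l(p_i)$. I would then argue that $p_j$ attains the minimum $b_l$-coordinate over ${\cal V}_l(p_i)$: if some $p_k \in {\cal V}_l(p_i)$ had a strictly smaller $b_l$-coordinate than $p_j$, then $p_k$ would lie in the wedge and strictly below the base line of the triangle, hence in the interior of the empty $l$-tri, contradicting emptiness. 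Therefore $\overrightarrow{p_jp_i}$ is a directed edge of the Semi-Yao graph and $p_ip_j \in {\cal E}(SYG)$; the symmetric case in which $p_j$ is the apex yields the edge $\overrightarrow{p_ip_j}$ instead.

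The main obstacle I anticipate is bookkeeping rather than deep difficulty: one must check that the empty equilateral triangle supplied by Observation~\ref{the:emptyTri} is in the precise orientation of an $l$-tri (matching the wedge direction $b_l$ rather than some rotated copy), and one must handle the fact that the edges of $SYG$ are directed while those of $EDG$ are undirected, so both the $p_i$-apex and $p_j$-apex cases must be treated. The geometric claim that a point of smaller $b_l$-coordinate inside the wedge necessarily falls inside the $l$-tri should be verified directly from the characterization of the $l$-tri as a sublevel set of the $b_l$-coordinate within $W_l(p_i)$.
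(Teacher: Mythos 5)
Your proof is correct and follows essentially the same route as the paper's: the forward direction realizes the empty $l$-tri as the wedge $W_l(p_i)$ truncated by the line through $p_j$ perpendicular to $b_l(p_i)$, and the reverse direction shrinks the witness triangle via Observation~\ref{the:SqueezedTRI} and derives minimality of $p_j$'s projection by the same emptiness contradiction. The one point where the paper is more careful is in the converse: when $p_i$ is a vertex of the empty $l$-tri, the point $p_j$ lies in $W_k(p_i)$ for some $k\in\{l,(l+2)\bmod 6,(l+4)\bmod 6\}$ determined by \emph{which} of the three vertices $p_i$ is, rather than necessarily in $W_l(p_i)$ as you state; this does not affect the conclusion, since membership in the Semi-Yao graph only requires minimality in some wedge.
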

\begin{proof}
Let $p_ip_j$ be an edge of the undirected Semi-Yao graph such that $p_j$ has the minimum $b_l$-coordinate inside some wedge $W_l(p_i)$ (see
Figure~\ref{fig:SY_EDT2}(a)). The bounded area created by the wedge $W_l(p_i)$ and the line through $p_j$ perpendicular to
$b_l(p_i)$ is an $l$-tri. Therefore, for the edge $p_ip_j$, there exists an empty $l$-tri such that $p_i$ and $p_j$ are on its boundary. This implies that $p_ip_j$ is an edge of  $EDT_l$.

Let $p_ip_j\in {\cal E}(EDT_l)$. By the definition of $EDT_l$, there exists an empty $l$-tri such that $p_i$ and $p_j$ are on its boundary (see Figure~\ref{fig:SY_EDT2}(b)).  By Observation~\ref{the:SqueezedTRI}, that is a rescaled $l$-tri such that $p_i$ and $p_j$ are on its boundary and such that one of the $l$-tri vertices is $p_i$ or $p_j$ (see Figure~\ref{fig:SY_EDT2}(c)); without loss of generality assume it is $p_i$. Point $p_j$ is inside the wedge $W_k(p_i)$, where $k\in\{l,(l+2)\bmod{6}, (l+4)\bmod{6}\}$. Point $p_j$ has the minimum $b_k$-coordinate inside the wedge $W_k(p_i)$; otherwise, there would be a point of $P$ inside the rescaled $l$-tri, which means that $p_ip_j\notin {\cal E}(EDT_l)$, a contradiction. Therefore, $p_ip_j\in {\cal E}(SYG)$.
\end{proof}

\begin{figure}[t!]
\centering \includegraphics[scale=1]{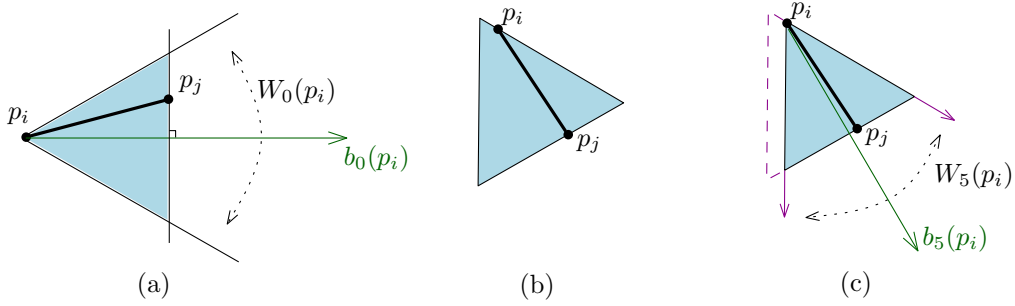} \caption{(a) The point $p_j$ has the minimum $b_0$-coordinate inside the wedge
$W_0(p_i)$. (b) The $1$-tri corresponding to the edge $p_ip_j$ in $EDT_1$ does not contain any other points of $P$. (c) The point $p_j$ is inside the wedge $W_5(p_i)$ and has the minimum $b_5$-coordinate.}
\label{fig:SY_EDT2}
\end{figure}

Now we can give the following result.

\begin{theorem}\label{the:ANN_Construction}
The all nearest neighbors and the closest pair problems in $\mathbb{R}^2$ can be solved in $O(n\log n)$ time.
\end{theorem}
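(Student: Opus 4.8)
The plan is to reduce both problems to a single sparse-graph construction, exploiting two properties of the Semi-Yao graph that are already established above: it is sparse, and it contains the nearest neighbor graph. First I would build the Semi-Yao graph. By Lemma~\ref{the:SY_EDT} the undirected Semi-Yao graph coincides with the Equilateral Delaunay graph, and by Corollary~\ref{the:EDT_Construction} the latter can be constructed in $O(n\log n)$ time. During this construction each undirected edge $p_ip_j$ can be oriented: for the wedge $W_l(p_i)$ whose associated empty $l$-tri places $p_i$ at a vertex, the neighbor $p_j$ realizing the minimum $b_l$-coordinate inside $W_l(p_i)$ yields the directed edge $\overrightarrow{p_jp_i}$, which is an in-edge of $p_i$ and an out-edge of $p_j$. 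Recovering these orientations from the six translate-classes of triangles costs only $O(n)$ additional time, and from them I would assemble, for each point $p_j$, its out-neighbor list $S_{out}(p_j)$.

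The second step is the distance computation. By Lemma~\ref{the:SYcontainsNNG} the Semi-Yao graph is a supergraph of the nearest neighbor graph, and more precisely Lemma~\ref{the:CPidea} together with the Observation ${\cal C}(p_j)=S_{out}(p_j)$ guarantees that the true nearest neighbor of $p_j$ must lie in its out-neighbor set $S_{out}(p_j)$. Hence for each $p_j$ I would scan $S_{out}(p_j)$, evaluate $d(p_j,p_i)$ for each candidate $p_i$, and keep the minimizer as the nearest neighbor of $p_j$; doing this for all $j$ produces the entire nearest neighbor graph.

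To bound the cost of the scan I would invoke sparsity. Since every point receives at most six in-edges (one per wedge), the total number of directed edges satisfies $\sum_j |S_{out}(p_j)| = \sum_i |\text{in-edges}(p_i)| \le 6n$, so the complete distance computation runs in $O(n)$ time. Finally, the closest pair is obtained in one further $O(n)$ pass: because the endpoints of a minimum-length edge of the nearest neighbor graph form a closest pair, it suffices to return the pair $(p_j,\,\mathrm{NN}(p_j))$ minimizing $d(p_j,\mathrm{NN}(p_j))$ over all $j$. Summing the three phases, the dominant term is the $O(n\log n)$ construction, which gives the claimed bound.

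The conceptual content is carried entirely by the earlier lemmas, so the main point requiring care is the counting argument: although an individual candidate set ${\cal C}(p_j)$ need not have constant size, the six-in-edges bound forces the candidate sets to have linear total size, which is exactly what keeps the scanning phase within $O(n)$. The remaining routine concern is simply that the edge orientations can be read off during, rather than after, the Equilateral Delaunay construction so that no extra logarithmic factor creeps in.
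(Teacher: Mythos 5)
Your proposal is correct and follows essentially the same route as the paper: construct the Semi-Yao graph via the Equilateral Delaunay graph in $O(n\log n)$ time (Corollary~\ref{the:EDT_Construction} and Lemma~\ref{the:SY_EDT}), then exploit the $O(n)$ bound on the number of edges to extract all nearest neighbors and the closest pair in a linear-time scan. The only cosmetic difference is that you restrict the scan to the out-neighbor sets $S_{out}(p_j)$ and count them via the six-in-edges bound, whereas the paper simply traverses all edges incident to each point; both yield the same $O(n)$ cost.
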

\begin{proof}
From Corollary~\ref{the:EDT_Construction} and Lemma~\ref{the:SY_EDT}, 
the Semi-Yao graph can be constructed in $O(n\log n)$ time. Since the number of edges in the Semi-Yao graph is at most $6n$, by traversing the Semi-Yao graph edges incident to each point, we can find all the nearest neighbors and the closest pair  in linear time. 
\end{proof}
\subsection{Kinetic Equilateral Delaunay Graph}\label{sec:kinetic_EDG}
Since ${\cal E}(EDT_0)={\cal E}(EDT_2)={\cal E}(EDT_4)$ and ${\cal
E}(EDT_1)={\cal E}(EDT_3)={\cal E}(EDT_5)$, to maintain the EDG,
which is the union of $EDT_0$ and $EDT_1$, we need only to have
kinetic data structures for $EDT_0$ and $EDT_1$. We describe how to
maintain $EDT_0$; $EDT_1$ is handled similarly.

The Delaunay triangulation $EDT_0$ is locally stable as long as the
points are in general position. Note that we assume the set of points
$P$ is in general position with respect to a $0$-tri; this means that no four or more points are on the boundary of any scaled, translated $0$-tri. When the points are moving, at a moment $t$ this assumption may fail. In fact for moving points, we make a further assumption: no four points are on
the boundary of the $0$-tri throughout any positive interval of time. This
ensures that the points are in general position over time except at
some discrete moments. The number of these discrete moments over
time is in the order of the number of changes to $EDT_0$, because the failure of the general position assumption is a necessary condition for changing the topological structure of $EDT_0$~\cite{Albers_voronoidiagrams}. When a point moves, $EDT_0$ can change only in the graph neighborhood of the point, and so the correctness of $EDT_0$ over time is asserted by a set of certificates. Our approach for maintenance of $EDT_0$ is a known approach also used in~\cite{Abam:2010:SEK:1630166.1630284,DBLP:conf/swat/RahmatiZ12,Agarwal:2010:KSD:1810959.1810984,Albers_voronoidiagrams}
for maintenance of Delaunay triangulations based on convex shapes.

Figure~\ref{fig:EDT_O}(a)  depicts the $EDT_0$ of a set $P$ of points. Each edge on the boundary of the infinite face of $EDT_0$, like
$p_ip_j$, is called a \textit{hull} edge; the other edges, like
$p_{i'}p_{j'}$, are called \textit{interior} edges. Corresponding to
these two types of edges, we define two types of certificates,
\textit{NotInWedge} and \textit{NotInTri}, respectively. Below, we first we consider the interior edges and then the exterior edges.


\begin{figure}[t!]
\centering \includegraphics[scale=1]{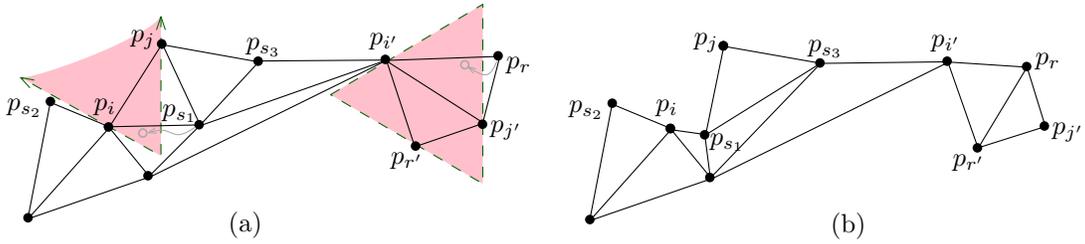} \caption{(a) The
NotInTri certificate corresponding to the edge $p_{i'}p_{j'}$
certifies that $p_r$ is outside the $0$-tri of $p_{i'}$, $p_{j'}$, and
$p_{r'}$. The NotInWedge certificates of the edge $p_ip_j$ certify
that $p_{s_1}$, $p_{s_2}$, and $p_{s_3}$ are outside the
corresponding $k$-wedge. (b) The changes to $EDT_0$  after
$p_r$ moves inside the $0$-tri passing through $p_{i'}$, $p_{j'}$, and
$p_{r'}$ and after $p_{s_1}$ moves inside the $k$-wedge of
$p_ip_j$.} \label{fig:EDT_O}
\end{figure}

\paragraph{Interior Edges.} Each interior edge $p_{i'}p_{j'}\in EDT_0$
is incident to two triangles $p_{i'}p_{j'}p_{r'}$ and
$p_{i'}p_{j'}p_r$ (see Figure~\ref{fig:EDT_O}(a)). For the triangle
$p_{i'}p_{j'}p_{r'}$ (resp. $p_{i'}p_{j'}p_r$), there exists an empty
$0$-tri, denoted by $\Delta^0_{r'}$ (resp. $\Delta^0_r$), such
that $p_{i'}$, $p_{j'}$ and $p_{r'}$ (resp. $p_r$) are on the
boundary of $\Delta^0_{r'}$ (resp. $\Delta^0_r$). For $p_{i'}p_{j'}$, we define a
\textit{NotInTri} certificate certifying that $p_r$ (resp. $p_{r'}$) is outside
$\Delta^0_{r'}$ (resp. $\Delta^0_r$). For sufficiently short time intervals, $p_r$ and $p_{r'}$ are the only points that can change the validity of edge
$p_{i'}p_{j'}$ (see~\cite{Abam:2010:SEK:1630166.1630284,DBLP:conf/swat/RahmatiZ12,Agarwal:2010:KSD:1810959.1810984,Albers_voronoidiagrams}).
Let $t$ be the time when the four points $p_{i'}$, $p_{j'}$,
$p_{r'}$, and $p_r$ are on the boundary of a $0$-tri; at time $t^-$,
$p_r$ (resp. $p_{r'}$) is outside $\Delta^0_{r'}$ (resp.
$\Delta^0_r$). When $p_r$ (resp. $p_{r'}$) moves inside
$\Delta^0_{r'}$ (resp. $\Delta^0_r$), at time $t^+$, this
certificate fails and there is no empty $0$-tri such that $p_{i'}$ and
$p_{j'}$ are on its boundary. Thus at time $t$, we have to delete
the edge $p_{i'}p_{j'}$ and add the new edge $p_{r'}p_r$, because at
time $t^+$ there exists an empty $0$-tri for $p_rp_{r'}$ (see Figure~\ref{fig:EDT_O}(b)). Also, we must define new certificates corresponding to the newly created triangles.

\begin{figure}[t]
  \centering
    \includegraphics[scale=1.2]{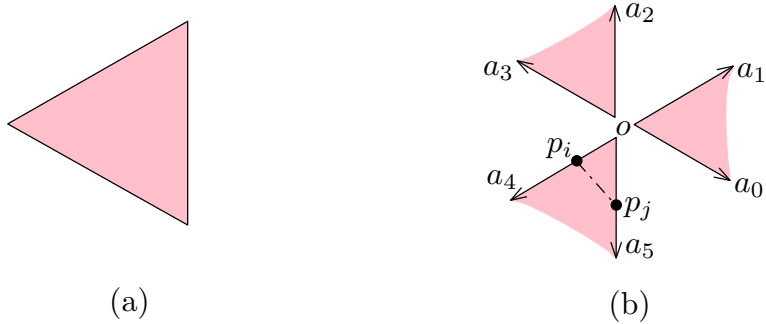}
  \caption{(a) A $0$-tri. (b) The $k$-wedges associated with the $0$-tri; edge $p_ip_j$ divides
  the $4$-wedge $\protect\overleftrightarrow{a_4oa_5}$ into the
   bounded area $\overline{op_ip_j}$ and the unbounded area $\protect\overleftrightarrow{a_4p_ip_ja_5}$.}
  \label{fig:k-wedge}
\end{figure}

\paragraph{Hull Edges.} By removing one of the $0$-tri edges and
extending the other two edges to infinity, three types of wedges are
created; call these wedges \textit{$k$-wedges}, for $k=\{0,2,4\}$,
and denote them by $\overleftrightarrow{a_koa_{k+1}}$ (see
Figure~\ref{fig:k-wedge}); the two sides $\overrightarrow{oa_k}$ and
$\overrightarrow{oa_{k+1}}$ of the boundary of the $k$-wedge are
parallel to the two corresponding sides of the wedge $W_k$. For a hull edge
$p_ip_j$, there exists an empty $k$-wedge such that $p_i$ and $p_j$ are on
the boundary. Each hull edge is incident to at most one triangle
$p_ip_jp_{s_1}$, and adjacent to at most four other hull edges
$p_ip_{s_2}, p_ip_{s_3}, p_jp_{s_4}$ and $p_jp_{s_5}$ on the
boundary cycle of the infinite face; the point $p_{s_1}$ can be one
of the points $p_{s_2}$ to $p_{s_5}$. 

The only points that can change the validity of the edge $p_ip_j$ over a sufficiently short time interval are the points $p_{s_i}$, $1\leq i\leq 5$. Therefore, we define at most four \textit{NotInWedge} certificates for the hull edge $p_ip_j$, certifying that the points $p_{s_i}$, $1\leq i\leq 5$, are outside the $k$-wedge (see Figure~\ref{fig:EDT_O}(a)). If $p_ip_j$ is adjacent to four other hull edges, this edge cannot be incident to a triangle, and if it is incident to a triangle, it cannot be adjacent to more
than two other hull edges. Let $t$ be the time when three points $p_i$, $p_j$, and $p_{s_i}$ are on the boundary of the $k$-wedge; at time $t^-$, $p_{s_i}$ is outside the
$k$-wedge. The hull edge $p_ip_j$ divides its corresponding
$k$-wedge $\overleftrightarrow{a_koa_{k+1}}$ into a bounded area
$\overline{op_ip_j}$ and an unbounded area
$\overleftrightarrow{a_kp_ip_ja_{k+1}}$ (see
Figure~\ref{fig:k-wedge}(b)). If $p_{s_i}$ moves inside the bounded
area $\overline{op_ip_j}$ at time $t^+$, the NotInWedge
certificate of $p_ip_j$ fails, and we must delete $p_ip_j$ from the
hull edges at time $t$ and replace it with two edges incident to
$p_{s_i}$. In Figure~\ref{fig:EDT_O}(a), if $p_{s_1}$ moves inside
the bounded area $\overline{op_ip_j}$, then we replace the hull 
edge $p_ip_j$ with two edges $p_ip_{s_1}, p_{s_1}p_j$; in
particular, the chain $[..., p_{s_2}p_i, p_ip_j, p_jp_{s_3},...]$ of
hull edges changes to $[..., p_{s_2}p_i, p_ip_{s_1}, p_{s_1}p_j,
p_jp_{s_3},...]$ when $p_{s_1}$ moves inside the $k$-wedge (see
Figure~\ref{fig:EDT_O}(b)). When this event occurs the previous
interior edges $p_ip_{s_1}$ and $p_{s_1}p_j$ become hull edges, and
we must replace the previous certificates of these edges with new
valid ones. If $p_{s_i}$  moves inside the unbounded area
$\overleftrightarrow{a_kp_ip_ja_{k+1}}$, without loss of generality let $p_{s_i}$ be
incident to $p_i$, we replace the hull edges $p_{s_i}p_i$ and
$p_ip_j$ with $p_{s_i}p_j$. Then the previous hull edge $p_ip_j$
either is an edge of $EDT_0$, in which case we must define a valid
certificate for it, or it is not, in which case we must delete it from
$EDT_0$ and add a new edge $p_{s_i}p_{s_1}$, where $p_ip_j$ is
incident to a triangle $p_ip_jp_{s_1}$; see Figure~\ref{fig:BadEvents}.
(a, b, and c).

\paragraph{\textbf{Consecutive Changes to EDT$_0$.}}\label{par:CC}
In some cases, when a certificate fails, we must apply a
\textit{sequence} of changes to $EDT_0$. These kinds of
changes occur at incident triangles, and as we will see, they can be handled
consecutively.

When a  NotInWedge certificate fails, we apply a sequence of edge insertions and edge deletions  to $EDT_0$. In Figure~\ref{fig:BadEvents}(a), when $p_{s_2}$ moves
inside the $k$-wedge of $p_ip_j$, we replace chain $p_{s_2}p_i,
p_ip_j$ of hull edges with $p_{s_2}p_j$ (see
Figure~\ref{fig:BadEvents}(b)), and then we apply a sequence of
changes;  the previous hull edge $p_ip_j$ is no longer an edge in
${\cal E}(EDT_0)$, because now the interior of its corresponding $0$-tri contains
the point $p_{s_2}$, and so we replace it with the edge
$p_{s_1}p_{s_2}$ (see Figure~\ref{fig:BadEvents}(c)). Finally, by
checking the $0$-tri's of other incident triangles, we can obtain a set
of valid edges for $EDT_0$ (see Figure~\ref{fig:BadEvents}(d)).

\begin{figure}[t]
\centering \includegraphics[scale=1]{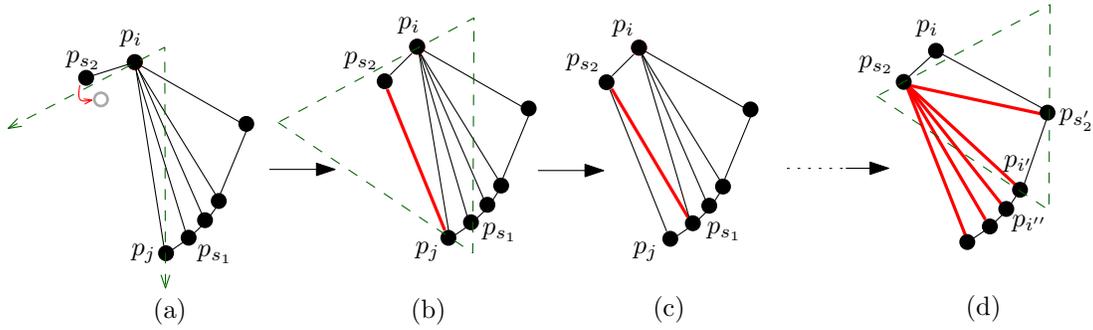} \caption{The
consecutive  changes to $EDT_0$ when $p_{s_2}$ moves inside the
$k$-wedge of $p_ip_j$.} \label{fig:BadEvents}
\end{figure}

A similar scenario could happen when a NotInTri certificate fails.
In Figure~\ref{fig:BadEvents}(d), if $p_i$ moves inside the $0$-tri
of $p_{s_2}$, $p_{s'_2}$, and $p_{i'}$, we must apply a
sequence of changes to $EDT_0$ that is the reverse of what we did above when the NotInWedge certificate failed. First we replace $p_{s_2}p_{s'_2}$ with
$p_ip_{i'}$. Then we must replace $p_{s_2}p_{i'}$ with
$p_ip_{i''}$, because $p_i$ is inside the $0$-tri of $p_{s_2}$,
$p_{i'}$, and $p_{i''}$. By checking the $0$-tri's of other incident
triangles we can obtain a valid set of edges for $EDT_0$; see
Figure~\ref{fig:BadEvents}, read from $(d)$ to $(a)$. Therefore, after any change to
$EDT_0$ we must check the validity of the incident triangles, which
 can be done easily.

Theorem~\ref{the:EDT_Changes} below enumerates the changes to the
Equilateral Delaunay graph (\ie, the Semi-Yao graph) when the points are moving and gives the
time to process all these events.

\begin{theorem}\label{the:EDT_Changes}
The number of changes to the Equilateral Delaunay graph, when the points move according to
polynomial functions of at most constant degree $s$, is
$O(n^2\beta_{s+2}(n))$. The total processing time for all events is
$O(n^2\beta_{s+2}(n)\log n)$.
\end{theorem}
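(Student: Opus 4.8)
The plan is to bound two quantities separately and then combine them: first, the number of combinatorial changes to $EDT_0$ (equivalently $EDT_1$) as the points move; and second, the cost of detecting and processing each such change via the kinetic tournament-style certificate failures. Since the Equilateral Delaunay graph is the union of $EDT_0$ and $EDT_1$, and $EDT_1$ is symmetric to $EDT_0$, it suffices to analyze $EDT_0$ and multiply by a constant.

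For the combinatorial bound, I would argue as follows. A topological change to $EDT_0$ can occur only when the general-position assumption fails, i.e., when four points become simultaneously co-boundary on some scaled, translated $0$-tri; by the locality of the certificate scheme (each interior edge has a \emph{NotInTri} certificate and each hull edge at most four \emph{NotInWedge} certificates), every change corresponds to the failure of such a certificate. The key is to express each certificate as a condition on whether a point lies inside an empty $0$-tri or $k$-wedge, which reduces to comparing the relevant $b_l$-coordinate (a projection) of one moving point against another. Because the coordinate functions are polynomials of degree at most $s$, each such comparison is governed by the sign of a polynomial of degree $O(s)$, so any fixed triple or quadruple of points can toggle a given certificate only $O(s)=O(1)$ times. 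To get the aggregate count I would invoke the fact, already used in the excerpt for the tournament tree and justified by Theorems~\ref{the:totallyDFcomplexity} and~\ref{the:partiallyDFcomplexity}, that the lower envelope of $n$ partially-defined degree-$s$ functions has complexity $\lambda_{s+2}(n)=n\beta_{s+2}(n)$. Fixing one point $p_i$ as the apex and considering, in each of the six wedge directions, the point of minimum $b_l$-coordinate over ${\cal V}_l(p_i)$, the sequence of such minima over time is a lower-envelope sequence of at most $n$ partially-defined functions, giving $O(\beta_{s+2}(n))$ changes \emph{per point per direction}, hence $O(n\beta_{s+2}(n))$ changes per point and $O(n^2\beta_{s+2}(n))$ changes in total.

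For the processing cost, each change is detected when the minimizing certificate fails, and I would maintain for each point $p_i$ and each wedge direction a kinetic tournament tree (or the projection minima in the EDT structure) so that locating the new minimum-coordinate point and repairing the local neighborhood of $EDT_0$ takes $O(\log n)$ time; the \emph{consecutive changes} described in the paragraph preceding the theorem are handled by walking around the incident triangles, but each such elementary flip is itself one combinatorial change already counted, so it contributes $O(\log n)$ per unit of the already-bounded total. Multiplying the $O(n^2\beta_{s+2}(n))$ changes by the $O(\log n)$ per-event cost yields total processing time $O(n^2\beta_{s+2}(n)\log n)$, as claimed.

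The main obstacle I anticipate is the combinatorial counting step, specifically justifying that the total number of certificate failures is $O(n^2\beta_{s+2}(n))$ rather than a larger bound. The subtlety is that the set ${\cal V}_l(p_i)$ of points inside a wedge is itself changing over time as points enter and leave the wedge, so the relevant functions are genuinely \emph{partially} defined, which is exactly why the $\beta_{s+2}$ (rather than $\beta_s$) bound from Theorem~\ref{the:partiallyDFcomplexity} is the right tool. One must argue carefully that the structural changes of $EDT_0$ are captured by these per-apex lower-envelope sequences and that no overcounting or undercounting occurs when a single event triggers a cascade of consecutive flips; I would resolve this by charging each flip to the co-boundary event that created it and appealing to the standard kinetic-Delaunay analysis cited in~\cite{Albers_voronoidiagrams,Agarwal:2010:KSD:1810959.1810984} for the equilateral-triangle convex shape.
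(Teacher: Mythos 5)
Your proposal is correct and follows essentially the same route as the paper: it reduces changes of the Equilateral Delaunay graph to changes of the Semi-Yao graph (i.e., to changes of the minimum $b_l$-coordinate point in each wedge of each apex), bounds these per apex by the lower envelope of the $O(sn)$ partially-defined degree-$s$ projection functions via Theorem~\ref{the:partiallyDFcomplexity}, and charges $O(\log n)$ of priority-queue work to each change. The only caveat is that the claimed $O(\log n)$ per-event cost relies on detecting events through the constant number of \emph{NotInTri}/\emph{NotInWedge} certificates attached to the triangulation (your parenthetical alternative), rather than through per-wedge kinetic tournament trees, which would introduce additional logarithmic factors in both the event count and the processing time.
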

\begin{proof}
From Lemma~\ref{the:SY_EDT}, the Equilateral Delaunay graph changes if and only if the Semi-Yao graph changes. Fix a  point $p_i$ and one of its wedges $W_l(p_i)$. Since the trajectory of each point $p_i(t)=(x_i(t),y_i(t))$ is defined by two polynomial functions of at most constant degree $s$, each point can insert into ${\cal V}_l(p_i)$ at most $s$ times. The $b_l$-coordinates of the points inserted into ${\cal V}_l(p_i)$ create at most $sn$ partial functions of at most constant degree $s$. From Theorem~\ref{the:partiallyDFcomplexity}, the minimum value of these $sn$ partial functions changes at most $\lambda_{s+2}(sn)$ times, which is equal to the number of all changes for the point with minimum $b_l$-coordinate among the points in ${\cal V}_l(p_i)$. Since $s$ is a constant, we have that $\lambda_{s+2}(sn)=O(\lambda_{s+2}(n))$. Thus the number of all changes for all points is $O(n\lambda_{s+2}(n))=O(n^2\beta_{s+2}(n))$.

The number of certificates is in the order of the number of changes to $EDT_0$. When a change to $EDT_0$ occurs, we update the $EDT_0$ and replace the invalid certificate(s) with new valid one(s). The time to make a constant number of deletions/insertions into the priority queue is $O(\log n)$. 

Thus the total time to process all events is $O(n^2\beta_{s+2}(n)\log n)$.
\end{proof}
\subsection{Kinetic All Nearest Neighbors}\label{sec:kinetic_ANN}
The Equilateral Delaunay graph (Semi-Yao graph) is a supergraph of the nearest neighbor graph. Let $Inc(p_i)$ be the set all edges incident to $p_i$ in the Semi-Yao graph. Over time, to maintain the nearest neighbor to each point $p_i$, we need to track the edge with the minimum length in $Inc(p_i)$.

Using a dynamic and kinetic tournament tree (see Section~\ref{sec:preliminary}), we can maintain the edge with the minimum length among the edges in $Inc(p_i)$. For each $Inc(p_i)$, $i=1,2,...,n$, we construct a dynamic and kinetic tournament tree ${\cal T}_i$.  The edges of $Inc(p_i)$ are stored at leaves of the tournament tree, and each of the internal nodes of the tree maintains the edge with  the minimum length stored at its two children; the root of the tree maintains the edge with minimum length among all edges in $Inc(p_i)$. 

Let $n_i$ be the cardinality of the set $Inc(p_i)$.  Consider a sequence of $m_i$ insertions and deletions into ${\cal T}_i$. From Theorem~\ref{the:DynamicKineticTT}, and the fact that the lengths of any two edges in $Inc(p_i)$ can become equal at most $2s$ times, the following results.

\begin{lemma}\label{the:DKTT}
The dynamic and kinetic tournament tree ${\cal T}_i$ of $n_i$ elements can be constructed in $O(n_i)$ time. The tournament tree
${\cal T}_i$ generates at most  $O(m_i\beta_{2s+2}(n_i)\log n_i)$ events, for a total cost of $O(m_i\beta_{2s+2}(n_i)\log^2 n_i)$.
\end{lemma}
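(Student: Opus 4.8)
The plan is to obtain Lemma~\ref{the:DKTT} as a direct instantiation of Theorem~\ref{the:DynamicKineticTT}, the only real work being to verify that the pairwise-crossing parameter of the functions stored in ${\cal T}_i$ equals $2s$. Recall that in ${\cal T}_i$ the leaves hold the edges of $Inc(p_i)$ and the attribute each internal node tracks is the length of its winning edge; equivalently, I would track the \emph{squared} length, which is monotone in the true length and hence produces the same winner, while keeping every tracked attribute a polynomial in $t$ and avoiding radicals.

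First I would bound the number of times two edge lengths can tie. For an edge $p_ap_b$ the squared length is
\[
d^2(p_a,p_b)=\big(x_a(t)-x_b(t)\big)^2+\big(y_a(t)-y_b(t)\big)^2,
\]
which, since each coordinate function has degree at most $s$, is a polynomial in $t$ of degree at most $2s$. The difference of the squared-length functions of any two edges is then a polynomial of degree at most $2s$, so it has at most $2s$ real roots, and the two lengths become equal at most $2s$ times. This is exactly the hypothesis required to apply Theorem~\ref{the:DynamicKineticTT} with crossing parameter $2s$ in place of its generic $s$.

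Next I would simply invoke Theorem~\ref{the:DynamicKineticTT} with that parameter. Its construction bound yields the $O(n_i)$ build time verbatim. Substituting $2s$ for $s$ turns the event count $O(m\beta_{s+2}(n)\log n)$ into $O(m_i\beta_{(2s)+2}(n_i)\log n_i)=O(m_i\beta_{2s+2}(n_i)\log n_i)$ and the total cost $O(m\beta_{s+2}(n)\log^2 n)$ into $O(m_i\beta_{2s+2}(n_i)\log^2 n_i)$, where $m_i$ is the number of insertions and deletions into ${\cal T}_i$ and $n_i=|Inc(p_i)|$ is the maximum tree size. Since $\beta_s(\cdot)$ is nondecreasing in its argument, these expressions are legitimate upper bounds.

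The only point requiring care --- and the one I would flag as the main (though minor) obstacle --- is justifying the crossing parameter rather than mechanically reusing the trajectory degree $s$: the relevant functions are the \emph{pairwise length comparisons}, whose defining polynomials have degree $2s$, not $s$. This is precisely why the index of $\beta$ in the statement is $2s+2$ and not $s+2$, and working with squared distances is the clean way to make the degree bound explicit.
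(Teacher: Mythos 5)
Your proposal is correct and matches the paper's own argument: the paper likewise derives this lemma directly from Theorem~\ref{the:DynamicKineticTT} together with the observation that any two edge lengths in $Inc(p_i)$ can become equal at most $2s$ times, which is exactly your squared-distance degree bound. Your write-up merely makes explicit the degree-$2s$ justification that the paper states without proof.
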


Now we can prove the following.

\begin{corollary}\label{the:AllDKTT}
All the dynamic and kinetic tournament trees ${\cal T}_i$'s can be constructed in $O(n)$ time. These dynamic and kinetic tournament
trees generate at most $O(n^2\beta^2_{2s+2}(n)\log n)$ events, for a total cost of $O(n^2\beta^2_{2s+2}(n)\log^2 n)$.
\end{corollary}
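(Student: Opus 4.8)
The plan is to sum the per-tree bounds from Lemma~\ref{the:DKTT} over all $n$ points, exploiting the fact that the total number of insertions and deletions across all the tournament trees is governed by the total number of changes to the Semi-Yao graph, which Theorem~\ref{the:EDT_Changes} bounds by $O(n^2\beta_{s+2}(n))$. First I would observe that each edge $p_ip_j$ of the Semi-Yao graph lives in the incidence set of (at most) its two endpoints, so each insertion or deletion of an edge triggers at most two insertions/deletions among the trees $\{{\cal T}_i\}$; hence $\sum_i m_i = O(n^2\beta_{s+2}(n)) = O(n^2\beta_{2s+2}(n))$, where the last equality uses that $\beta_{s+2}(n) = O(\beta_{2s+2}(n))$ since $\beta_t(n)$ is nondecreasing in $t$.

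For the construction time, I would use the additive part of Lemma~\ref{the:DKTT}: the tree ${\cal T}_i$ is built in $O(n_i)$ time, and since each Semi-Yao edge is incident to exactly two points we have $\sum_i n_i = 2|{\cal E}(SYG)| = O(n)$ (each vertex has at most six in-edges, so the graph is sparse). Summing gives total construction time $O(n)$, as claimed.

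For the event count and total processing cost, the key step is to replace the per-tree parameters $\beta_{2s+2}(n_i)$ and $\log n_i$ by the global quantities $\beta_{2s+2}(n)$ and $\log n$. Since $n_i \le n$, both $\beta_{2s+2}$ (monotonic) and $\log$ are bounded by their values at $n$, so each tree generates $O(m_i\beta_{2s+2}(n)\log n)$ events at a cost of $O(m_i\beta_{2s+2}(n)\log^2 n)$. Summing over $i$ and factoring out the global terms yields $O\!\left(\beta_{2s+2}(n)\log n \sum_i m_i\right)$ events; substituting $\sum_i m_i = O(n^2\beta_{2s+2}(n))$ produces $O(n^2\beta^2_{2s+2}(n)\log n)$ events, and similarly $O(n^2\beta^2_{2s+2}(n)\log^2 n)$ total processing time, matching the statement.

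The main subtlety I expect is justifying the bound $\sum_i m_i = O(n^2\beta_{2s+2}(n))$ rather than naively summing the individual guarantees of Lemma~\ref{the:DKTT}. The point is that the $m_i$ are not independently bounded by $n^2\beta(n)$ each; instead their \emph{sum} is controlled globally by Theorem~\ref{the:EDT_Changes}, since every insertion into or deletion from some ${\cal T}_i$ corresponds to the creation or destruction of a Semi-Yao edge incident to $p_i$, and the total number of such combinatorial changes over all points and all time is exactly what Theorem~\ref{the:EDT_Changes} counts (up to the constant factor of two coming from the two endpoints of each edge). Handling the index-shift $\beta_{s+2}\to\beta_{2s+2}$ cleanly via monotonicity of $\beta$, and noting that the edges that change length order (degree $2s$, hence the $\beta_{2s+2}$ in Lemma~\ref{the:DKTT}) are a separate source of events from the structural changes, is the only place where care is needed.
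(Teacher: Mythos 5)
Your proposal is correct and follows essentially the same route as the paper's own proof: bound $\sum_i m_i$ by the total number of Equilateral Delaunay graph (Semi-Yao graph) changes from Theorem~\ref{the:EDT_Changes} via the two-endpoints-per-edge observation, pull $\beta_{2s+2}(n_i)\log n_i$ out as $\beta_{2s+2}(n)\log n$ by monotonicity, and sum the per-tree bounds of Lemma~\ref{the:DKTT}. Your explicit treatment of the $O(n)$ construction time via $\sum_i n_i = O(n)$ is a small addition the paper leaves implicit, but it is not a different argument.
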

\begin{proof}
By Lemma~\ref{the:DKTT} all the dynamic and kinetic tournament trees ${\cal T}_i$, $i=1,...,n$, generate at most $O(\sum_{i=1}^{i=n}m_i\beta_{2s+2}(n_i)\log n_i)=O(\beta_{2s+2}(n)\log n\sum_{i=1}^{i=n}m_i)$ events. Since each edge is incident to two points, inserting (resp. deleting) an edge $p_ip_j$ into the Equilateral Delaunay graph causes two insertions (resp. deletions) into the tournament trees ${\cal T}_i$ and ${\cal T}_j$. Therefore, by Theorem~\ref{the:EDT_Changes}, the number of all insertions/deletions into the tournament trees is $\sum_{i=1}^{i=n}m_i=O(n^2\beta_{s+2}(n))=O(n^2\beta_{2s+2}(n))$. Hence, the number of all events is $O(n^2\beta^2_{2s+2}(n)\log n)$, and the total cost is $O(n^2\beta^2_{2s+2}(n)\log^2 n)$. 
\end{proof}

Now we can prove the following theorem, which gives the results about our kinetic data structure for the all nearest neighbors problem.

\begin{theorem}\label{the:KinecitNNG}
Our kinetic data structure for maintenance of all the nearest neighbors uses linear space and $O(n\log n)$ preprocessing time. It handles $O(n^2\beta^2_{2s+2}(n)\log n)$ events with total processing time $O(n^2\beta^2_{2s+2}(n)\log^2 n)$. It is compact, efficient, responsive in an amortized sense, and local on average.
\end{theorem}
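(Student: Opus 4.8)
The plan is to treat the structure as two cooperating layers and to read each of the five quantities off the bounds already established for those layers. The lower layer is the kinetic Equilateral Delaunay graph, which by Lemma~\ref{the:SY_EDT} coincides with the undirected Semi-Yao graph and, by Lemma~\ref{the:SYcontainsNNG}, is a supergraph of the nearest neighbor graph. The upper layer consists of the $n$ dynamic and kinetic tournament trees ${\cal T}_i$: each ${\cal T}_i$ stores the edges of $Inc(p_i)$ at its leaves and maintains at its root the shortest edge incident to $p_i$, hence the current nearest neighbor of $p_i$. Every claim of the theorem will then follow by combining Corollary~\ref{the:EDT_Construction}, Theorem~\ref{the:EDT_Changes}, and Corollary~\ref{the:AllDKTT}.

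For the space and preprocessing bounds I would first note that the Semi-Yao graph has at most $6n$ edges, and each edge $p_ip_j$ is stored in exactly the two trees ${\cal T}_i$ and ${\cal T}_j$; hence all tournament trees together hold $O(n)$ leaves and $O(n)$ internal nodes. The certificates are $O(1)$ per EDG edge (the NotInTri and NotInWedge certificates of Subsection~\ref{sec:kinetic_EDG}) and $O(1)$ per internal tree node, so the total certificate count, and therefore the priority queue, is $O(n)$; this gives linear space. Preprocessing builds the EDG in $O(n\log n)$ time (Corollary~\ref{the:EDT_Construction}), builds all the ${\cal T}_i$ in $O(n)$ time (Corollary~\ref{the:AllDKTT}), and seeds the priority queue with the $O(n)$ initial certificates in $O(n\log n)$ time, for $O(n\log n)$ total.

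For the event count and processing time I would separate the two event streams. Topological changes to the EDG number $O(n^2\beta_{s+2}(n))$ and are processed in total time $O(n^2\beta_{s+2}(n)\log n)$ by Theorem~\ref{the:EDT_Changes}. The crucial bookkeeping point is that each insertion or deletion of an EDG edge triggers exactly one insert/delete in each of two trees, and the resulting $\sum_i m_i=O(n^2\beta_{2s+2}(n))$ update operations are precisely those already charged inside Corollary~\ref{the:AllDKTT}; thus the tournament layer generates $O(n^2\beta^2_{2s+2}(n)\log n)$ events for a total cost of $O(n^2\beta^2_{2s+2}(n)\log^2 n)$. Since $\beta_{s+2}(n)=O(\beta_{2s+2}(n))$, the tournament-tree term dominates both the count and the cost, yielding the stated $O(n^2\beta^2_{2s+2}(n)\log n)$ events and $O(n^2\beta^2_{2s+2}(n)\log^2 n)$ total processing time.

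Finally, the four performance criteria follow quickly once these quantities are in place: \emph{compactness} from the $O(n)$ certificate count; \emph{locality on average} because those $O(n)$ certificates are spread over $n$ points, giving $O(1)$ per point; and \emph{responsiveness in the amortized sense} because the total cost divided by the total number of events is $O(\log n)$. \emph{Efficiency} is where I expect the only real subtlety: the external events are exactly the changes to the nearest-neighbor assignment, i.e.\ the root changes of the trees ${\cal T}_i$; each such root tracks the lower envelope of the edge-length functions of $Inc(p_i)$, whose pairs cross at most $2s$ times, so by Theorem~\ref{the:partiallyDFcomplexity} their number is $O(\lambda_{2s+2}(m_i))$, which sums to $O(n^2\beta^2_{2s+2}(n))$. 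The main obstacle is therefore this efficiency accounting: I must confirm that the internal events exceed the external ones by at most the logarithmic factor appearing in the total bound, and simultaneously ensure that the two interacting event streams are charged without double counting, so that the combined bound is genuinely $O(n^2\beta^2_{2s+2}(n)\log n)$ rather than a larger product of the two separate bounds.
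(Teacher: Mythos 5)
Your proposal is correct and follows essentially the same route as the paper: decompose the structure into the kinetic EDG layer and the tournament-tree layer, read off space and preprocessing from Corollary~\ref{the:EDT_Construction} and Corollary~\ref{the:AllDKTT}, take the event and cost bounds from Theorem~\ref{the:EDT_Changes} and Corollary~\ref{the:AllDKTT}, and verify the four criteria from the $O(n)$ certificate count and the stated ratios. The double-counting worry you raise at the end is already resolved by your own observation that Corollary~\ref{the:AllDKTT} charges the tournament events to the total number $\sum_i m_i=O(n^2\beta_{2s+2}(n))$ of insertions and deletions (one pair per EDG edge change), so the two streams add rather than multiply, exactly as in the paper's proof.
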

\begin{proof}
Since $\sum_i n_i = n$, the total size of all the tournament trees ${\cal T}_i$, $i=1,...,n$, is $O(n)$. The number of all edges in the EDG is $O(n)$. For each edge in the EDG, we define a constant number of certificates. Furthermore, the number of all certificates corresponding to the internal nodes of all ${\cal T}_i$ is linear. Thus the KDS is compact.  The ratio of the number of internal events $O(n^2\beta^2_{2s+2}(n)\log n)$ to the number of external events $O(n^2\beta_{2s})$ is polylogarithmic, which implies that the KDS is efficient. By Corollary~\ref{the:AllDKTT}, the ratio of the total processing time to the number of internal events is polylogarithmic, and so the KDS is responsive in an amortized sense. Since the number of all certificates is $O(n)$, each point participates in a constant number of certificates  on average, which implies that the KDS is local on average.
\end{proof}

\subsection{Kinetic Closest Pair}\label{sec:kinetic_CP}
The edge $p_ip_j$ with minimum length in the nearest neighbor graph gives the closest pair $(p_i,p_j)$. Since the Semi-Yao graph (EDG) is a supergraph of the nearest neighbor graph, to maintain the closest pair $(p_i,p_j)$ we need to maintain the edge with minimum length in the Semi-Yao graph. By constructing a dynamic and kinetic tournament tree, where the edges of the Semi-Yao graph are stored at the leaves of the dynamic and kinetic tournament tree, we can maintain the closest pair $(p_i,p_j)$ over time; the edge at the root of the dynamic and kinetic tournament tree gives the closest pair. The insertions and deletions into the dynamic and kinetic tournament tree occur when a change to the Semi-Yao graph occurs. Therefore, we can obtain the same results for maintenance of the closest pair over time as we obtained for maintenance of  all the nearest neighbors in Theorem~\ref{the:KinecitNNG}:

\begin{theorem}\label{the:KinecitCP}
Our kinetic data structure for maintenance of the closest pair uses linear space and $O(n\log n)$ preprocessing time. It handles
$O(n^2\beta^2_{2s+2}(n)\log n)$ events with total processing time $O(n^2\beta^2_{2s+2}(n)\log^2 n)$, and it is compact, efficient,
responsive in an amortized sense, and local on average.
\end{theorem}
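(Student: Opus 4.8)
The plan is to reduce the closest-pair KDS to the machinery already developed for the all-nearest-neighbors KDS, exploiting the fact that the closest pair is realized by a minimum-length edge of the nearest neighbor graph, which by Lemma~\ref{the:SYcontainsNNG} is a subgraph of the Semi-Yao graph (equivalently, the Equilateral Delaunay graph by Lemma~\ref{the:SY_EDT}). Thus it suffices to maintain the single globally-minimum-length edge over all edges of the Semi-Yao graph. First I would construct one dynamic and kinetic tournament tree whose leaves store the $O(n)$ edges of the EDG, keyed by Euclidean edge length; the edge at the root is always the shortest Semi-Yao edge, and hence the closest pair.

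Next I would account for the dynamic operations. The leaf set of this tournament tree changes exactly when the Semi-Yao graph changes, so by Theorem~\ref{the:EDT_Changes} the number of insertions and deletions is $m = O(n^2\beta_{s+2}(n)) = O(n^2\beta_{2s+2}(n))$. Since the tree holds at most $O(n)$ elements at any time, I would apply Theorem~\ref{the:DynamicKineticTT} with this value of $m$, using that the lengths of any two edges (being squared distances of polynomials of degree $s$) can coincide at most $2s$ times, so the relevant Davenport--Schinzel order is $2s+2$. This yields $O(m\beta_{2s+2}(n)\log n) = O(n^2\beta^2_{2s+2}(n)\log n)$ events with total cost $O(n^2\beta^2_{2s+2}(n)\log^2 n)$, exactly matching Theorem~\ref{the:KinecitNNG}.

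Finally I would verify the four KDS performance criteria. The single tournament tree uses $O(n)$ space and the EDG uses $O(n)$ space, giving linear space and compactness; preprocessing is dominated by the $O(n\log n)$ construction of the EDG from Corollary~\ref{the:EDT_Construction}. Efficiency follows because the number of external events (changes to the closest pair) is at most the $O(n^2\beta_{2s+2}(n))$ changes to the root, so the internal-to-external ratio is polylogarithmic; responsiveness in the amortized sense follows from the total-cost-to-event-count ratio being $O(\log n)$; and locality on average follows since each point lies on a constant number of EDG edges and thus participates in $O(1)$ certificates on average.

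The main obstacle is comparatively mild: it is ensuring the bookkeeping is sound when a single Semi-Yao event triggers a \emph{sequence} of edge insertions and deletions (as in the consecutive changes to $EDT_0$ described before Theorem~\ref{the:EDT_Changes}). I would argue that each such compound change contributes only a constant amortized number of tournament-tree updates per combinatorial change, so the amortized bound from Theorem~\ref{the:EDT_Changes} already absorbs them into $\sum m_i$, and no additional factor is incurred. Beyond that, the argument is essentially a specialization of the all-nearest-neighbors proof to a single tree rather than $n$ trees, so the complexities transfer directly.
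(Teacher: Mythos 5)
Your proposal matches the paper's own argument: the paper likewise builds a single dynamic and kinetic tournament tree over the edges of the Semi-Yao (Equilateral Delaunay) graph, notes that insertions and deletions into it occur exactly at changes to that graph, and then transfers the bounds of Theorem~\ref{the:KinecitNNG} verbatim. Your write-up is somewhat more explicit about the $m = O(n^2\beta_{2s+2}(n))$ accounting and the compound-event bookkeeping, but the route and the resulting bounds are the same.
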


\section{Yao Graph and EMST}\label{sec:YG_EMST}
Our approach for computing the Yao graph and the EMST is similar to the approach for computing all the nearest neighbors and the closest pair in Section~\ref{sec:ANN_CP_construction}. 

First we introduce a new supergraph  of the Yao graph, namely the Pie Delaunay graph, then we show how to maintain the Pie Delaunay graph (PDG) over time, and finally, using the kinetic version of the Pie Delaunay graph, we provide a KDS for maintenance of the Yao graph and the EMST when the points are moving.
\subsection{New Method for Computing the Yao Graph and the EMST}\label{sec:YG_EMST_construction}
Consider a partition of a unit disk into six \textit{pieces of pie} $\sigma_0,...,\sigma_5$, each of angle $\pi/3$ with common apex at the origin $o$. For $0\leq l \leq 5$, let $\sigma_l$ span the angular range $[(2l-1)\pi/6, (2l+1)\pi/6)$, and call any translated and scaled copy of $\sigma_l$ an \textit{$l$-pie}; see Figure~\ref{fig:disk}.

\begin{figure}[h]
  \begin{center}
    \includegraphics[scale=1]{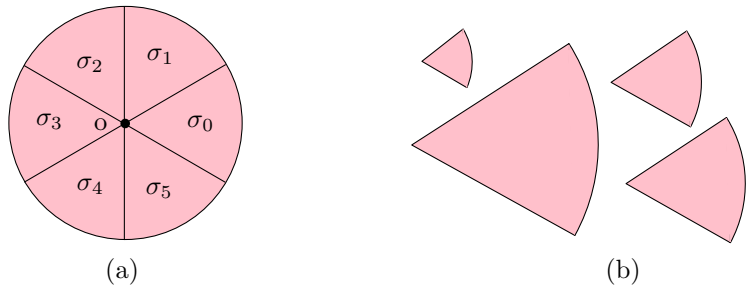}
  \end{center}
  \caption{(a) Partitioning the unit disk into six pieces of pie. (b) Some $0$-pie's.}
  \label{fig:disk}
\end{figure}

We define a Delaunay triangulation, which we call a \textit{Pie Delaunay triangulation}, of the set $P$ of $n$ points, based on the convex shape $\sigma_l$. Denote by $PDT_l$ the Pie Delaunay triangulation based on the $l$-pie. For two points $p_i$ and $p_j$ in $P$, the edge $p_ip_j$ is an edge of $PDT_l$ if and only if there is an empty $l$-pie such that $p_i$ and $p_j$ are on its boundary. We define the \textit{Pie Delaunay graph} (PDG) to be the union of all $PDT_l$ for $i=0,...,5$; \ie, $p_ip_j$ is a PDG edge if and only if it is an edge in $PDT_l$, where $0\leq l\leq 5$. 

The next lemma follows from Theorem~\ref{the:VD_DT_Con}.

\begin{lemma}\label{the:PDG_ConstructionTime}
The Pie Delaunay graph (PDG) can be constructed in $O(n\log n)$ time.
\end{lemma}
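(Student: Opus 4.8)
The plan is to reduce the construction of the Pie Delaunay graph to six independent Delaunay triangulations, each based on a convex shape, and then invoke Theorem~\ref{the:VD_DT_Con} together with the observation that the union of a constant number of linear-size structures is still computable in near-linear time. Concretely, the Pie Delaunay graph PDG is defined as the union $\bigcup_{l=0}^{5} PDT_l$, where each $PDT_l$ is the Delaunay triangulation of the same point set $P$ based on the convex shape $\sigma_l$ (the $l$-pie). The first step is simply to note that each $l$-pie is a convex shape: it is the intersection of a disk sector with the disk, a convex region. Hence Theorem~\ref{the:VD_DT_Con} applies directly to each $PDT_l$ individually.

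The second step is to bound the time for a single $PDT_l$. Applying Theorem~\ref{the:VD_DT_Con} with convex shape $\sigma_l$ gives that $PDT_l$ can be constructed in $O(n\log n)$ time for each fixed $l$, $0\le l\le 5$. The third step handles the union: since there are exactly six values of $l$, the total construction time is $\sum_{l=0}^{5} O(n\log n) = O(n\log n)$, a constant factor absorbed into the asymptotic bound. Forming the edge set of the union requires only merging six sets of $O(n)$ edges (each $PDT_l$ is a planar triangulation on $n$ vertices and therefore has $O(n)$ edges), which costs $O(n)$ additional time, possibly with an $O(n\log n)$ term if one sorts edges to remove duplicates. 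Either way this does not dominate.

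I would therefore structure the proof as a direct appeal to Theorem~\ref{the:VD_DT_Con}: establish convexity of $\sigma_l$, apply the theorem six times, and sum. There is essentially no technical obstacle here, since the heavy lifting---the $O(n\log n)$ algorithm for convex-shape Delaunay triangulations of Chew and Drysdale---is already cited and granted. The only point requiring a sentence of care is confirming that a piece of pie is indeed an admissible convex shape for the Chew--Drysdale framework; I expect this to be immediate from its convexity, so no part of the argument should present real difficulty. This mirrors exactly the reasoning used earlier to prove Corollary~\ref{the:EDT_Construction} for the Equilateral Delaunay graph, where equilateral triangles played the role that pieces of pie play here.
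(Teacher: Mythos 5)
Your proposal is correct and matches the paper's reasoning exactly: the paper derives this lemma directly from Theorem~\ref{the:VD_DT_Con} by applying the Chew--Drysdale construction to each of the six convex pieces of pie and taking the union, just as you do. The extra detail you supply about convexity of each $\sigma_l$ and merging the six $O(n)$-size edge sets is a harmless elaboration of the same argument.
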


For each point $p_i\in P$, partition the plane into six wedges $W_0(p),..., W_5(p)$ of angle $\pi/3$ where $p_i$ is the common apex of the wedges. For $0\leq l\leq 5$, let $W_l(p_i)$ span the angular range $[(2l-1)\pi/6, (2l+1)\pi/6)$ around $p_i$. The \textit{Yao graph} can be constructed by connecting the point $p_i$ to its nearest points inside the wedges $W_l(p)$ for all $i=0,...,5$. We denote the Yao graph of a set of $n$ points by YG, the set of its edges by ${\cal{E}}(YG)$, and the set of Pie Delaunay graph edges by ${\cal E}(PDG)$. The following lemma shows that the Pie Delaunay graph is a supergraph of the Yao graph (YG).

\begin{lemma}\label{the:YG_SS_PDG}
${\cal E}(YG)\subseteq {\cal E}(PDG)$.
\end{lemma}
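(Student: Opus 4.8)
The plan is to mirror the forward direction of the argument used for Lemma~\ref{the:SY_EDT}, but with the equilateral triangle (which encodes the minimum $b_l$-coordinate) replaced by a piece of pie, since the Yao graph selects neighbors by Euclidean distance inside a wedge rather than by bisector projection. First I would unfold the definition of a Yao edge: if $p_ip_j\in\mathcal{E}(YG)$, then without loss of generality $p_j$ is the Euclidean-nearest point of $P$ to $p_i$ inside some wedge $W_l(p_i)$; that is, $p_j\in\mathcal{V}_l(p_i)$ and $d(p_i,p_j)\le d(p_i,p_k)$ for every $p_k\in\mathcal{V}_l(p_i)$ distinct from $p_i$. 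The goal then reduces to exhibiting an empty $l$-pie whose boundary passes through both $p_i$ and $p_j$, because by the definition of $PDT_l$ this immediately certifies $p_ip_j\in\mathcal{E}(PDT_l)\subseteq\mathcal{E}(PDG)$.

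The construction is explicit. I would take the $l$-pie $\Pi$ with apex at $p_i$ and radius $d(p_i,p_j)$, oriented so that its two bounding radii are parallel to the sides of $W_l(p_i)$; equivalently, $\Pi = p_i + d(p_i,p_j)\cdot\sigma_l$ is the reference sector $\sigma_l$ translated to apex $p_i$ and scaled by $d(p_i,p_j)$. Since the angular span $[(2l-1)\pi/6,(2l+1)\pi/6)$ of $W_l(p_i)$ coincides with that of $\sigma_l$, the region $\Pi$ is a genuine scaled, translated copy of $\sigma_l$, hence an $l$-pie. I would then verify the two required incidences: $p_i$ is the apex of $\Pi$ and so lies on $\partial\Pi$, while $p_j$ lies at distance exactly $d(p_i,p_j)$ from $p_i$ and within the angular range of $W_l$, so $p_j$ lies on the circular arc of $\partial\Pi$.

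It remains to check that $\Pi$ is empty, which I would do by contradiction: any point $p_k\in P$ in the interior of $\Pi$ lies strictly inside the angular wedge and satisfies $d(p_i,p_k)<d(p_i,p_j)$, so $p_k\in\mathcal{V}_l(p_i)$, contradicting the choice of $p_j$ as the nearest point of $\mathcal{V}_l(p_i)$ to $p_i$. Thus the interior of $\Pi$ contains no point of $P$, so $\Pi$ is an empty $l$-pie through $p_i$ and $p_j$, and the claim follows. This argument is short and involves no real obstacle; the only delicate point I would state carefully is the boundary bookkeeping—ensuring $p_j$ is counted as lying on $\partial\Pi$ rather than in its interior, and confirming that the half-open angular convention for $W_l$ does not accidentally exclude an interior point of $\Pi$ from $\mathcal{V}_l(p_i)$. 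Because interior points of $\Pi$ have angle strictly inside the range, this convention causes no difficulty.
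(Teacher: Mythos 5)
Your proposal is correct and follows the same route as the paper's proof: take the $l$-pie with apex $p_i$ and radius $d(p_i,p_j)$, note that its straight sides are parallel to those of $W_l(p_i)$ so it is a scaled translate of $\sigma_l$ with $p_i$ and $p_j$ on its boundary, and conclude emptiness from the minimality of $d(p_i,p_j)$ over ${\cal V}_l(p_i)$. The paper states this more tersely; your version merely makes the construction and the emptiness check explicit.
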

\begin{proof}
Assume edge $p_ip_j\in {\cal E}(YG)$ and let $p_j$ to be the nearest point to $p_i$ inside the wedge $W_l(p_i)$; see Figure~\ref{fig:yao}. The two sides of the wedge $W_l(p_i)$ are parallel to the two corresponding sides of $\sigma_l$, so there is an empty $l$-pie such that $p_i$ and $p_j$ lie on its boundary. Therefore, $p_ip_j\in PDT_l$ and hence it is an edge of the Pie Delaunay graph.
\end{proof}

\begin{figure}[t]
\centering
  \includegraphics[scale=1.3]{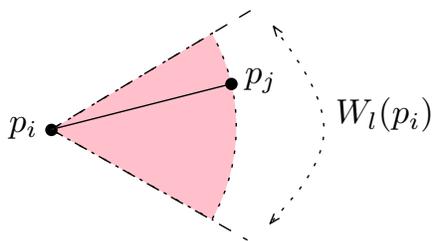}
  \caption{Nearest point to $p_i$ inside the wedge $W_l(p_i)$.}
  \label{fig:yao}
\end{figure}

Now we can state and prove the main result of this section.

\begin{theorem}\label{the:YG_EMST_ConstructionTime}
The Yao graph and the EMST can be constructed in $O(n\log n)$ time.
\end{theorem}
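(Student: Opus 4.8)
The plan is to reduce the Yao graph construction to the Pie Delaunay graph construction, exactly paralleling how Theorem~\ref{the:ANN_Construction} reduced the all-nearest-neighbors problem to the Equilateral Delaunay graph. By Lemma~\ref{the:PDG_ConstructionTime}, the Pie Delaunay graph (PDG) is constructible in $O(n\log n)$ time, and by Lemma~\ref{the:YG_SS_PDG} we have ${\cal E}(YG)\subseteq {\cal E}(PDG)$. So the PDG is a sparse supergraph of the Yao graph. Since the PDG is the union of six Pie Delaunay triangulations, each a planar graph on $n$ vertices, the total number of edges in the PDG is $O(n)$. The first step, therefore, is to build the PDG once in $O(n\log n)$ time.

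Next I would extract the Yao graph from the PDG. For each point $p_i$ and each wedge $W_l(p_i)$, the Yao edge out of $p_i$ in that wedge goes to the \emph{nearest} point of $P$ inside $W_l(p_i)$, and Lemma~\ref{the:YG_SS_PDG} guarantees this edge already appears in the PDG. So it suffices to scan, for each vertex $p_i$, its incident PDG edges, classify each by which of the six wedges $W_l(p_i)$ its other endpoint falls into, and within each wedge select the edge of minimum Euclidean length. Because every vertex has only $O(1)$ incident edges on average (the PDG has $O(n)$ edges total, hence average degree $O(1)$), and more precisely $\sum_i \deg(p_i)=O(n)$, this entire classification-and-minimization pass costs $O(n)$ time in aggregate. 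This yields the Yao graph in linear time once the PDG is available, giving the $O(n\log n)$ bound for the Yao graph.

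For the EMST I would invoke the standard fact that the Yao graph (for $z\ge 6$) contains the EMST, as noted in Section~\ref{sec:ourApproach}; consequently the PDG also contains the EMST. Having produced the Yao graph with its $O(n)$ edges, I would compute a minimum spanning tree of that weighted graph using any standard MST algorithm running in $O(m\alpha(m,n))$ or $O(m\log n)$ time on a graph with $m=O(n)$ edges, which is $O(n\log n)$ time (and the sort of $O(n)$ edge weights already dominates nothing beyond the overall $O(n\log n)$ budget). Since the Yao graph is a subgraph of the complete Euclidean graph and contains an EMST, the MST of the Yao graph is an EMST of $P$, so correctness follows immediately.

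The main obstacle, and the only place requiring real care, is the claim that the Yao edges can be harvested from the PDG in linear time; everything else is a direct appeal to the preceding lemmas. Concretely, I must ensure that selecting the minimum-length incident edge \emph{within each wedge} truly recovers the Yao edge rather than some spurious PDG edge, but this is immediate from Lemma~\ref{the:YG_SS_PDG}: the genuine Yao edge lies in the PDG and is by definition the shortest candidate in its wedge, so the minimum over the (superset of) incident PDG edges in that wedge must coincide with it. The only subtlety is correctly testing which wedge each incident edge belongs to, which is an $O(1)$ angular comparison per edge and poses no asymptotic difficulty.
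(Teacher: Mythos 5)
Your proposal is correct and follows essentially the same route as the paper's own proof: build the Pie Delaunay graph in $O(n\log n)$ time, use Lemma~\ref{the:YG_SS_PDG} and the graph's linear size to extract the Yao graph by a per-wedge minimum over incident edges, and then run a standard MST algorithm on the $O(n)$-edge Yao graph, which contains the EMST. The only cosmetic difference is that the paper names Prim/Kruskal explicitly where you invoke a generic $O(m\log n)$ MST algorithm.
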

\begin{proof}
The Pie Delaunay graph is the union of six Pie Delaunay triangulations, which implies that it has a linear number of edges. By Lemma~\ref{the:YG_SS_PDG}, the Pie Delaunay graph is a supergraph of the Yao graph. Thus by tracing over the edges incident to each point $p_i$, we can find the edge with minimum length inside each wedge $W_l(p_i)$, for $l=0,...,5$; this gives the Yao graph. Since the Pie Delaunay graph can be constructed in time $O(n\log n)$ (by Lemma~\ref{the:PDG_ConstructionTime}), the Yao graph can be constructed in time $O(n\log n)$.

The Yao graph is a supergraph of the EMST~\cite{DBLP:journals/siamcomp/Yao82}. Thus the minimum spanning tree of the Yao graph is equal to the EMST. Since the cardinality of the set of edges in the Yao graph graph is at most $6n$, the EMST can be constructed using the Prim algorithm~\cite{Prim57} or the Kruskal algorithm~\cite{Kruskal1956} in time $O(n\log n)$.
\end{proof}
\subsection{Kinetic Pie Delaunay Graph}\label{sec:kinetic_PDG}
Our KDS for maintenance of the Pie Delaunay graph is similar to the KDS for maintenance of the Equilateral Delaunay graph in Section~\ref{sec:kinetic_EDG}. The Pie Delaunay graph (PDG) is the union of all $PDT_l$, for $l=0,..,5$:  ${\cal E}(PDG)=\bigcup_l {\cal E}(PDT_l)$. Here, we only provide a KDS for $PDT_0$; the other $PDT_l$, for $l=1,..,5$, are handled similarly.

Similar to Section~\ref{sec:kinetic_EDG}, we call each edge that is not on the boundary of the infinite face of $PDT_0$ an \textit{interior edge} and the other edges on the boundary of the infinite face \textit{hull edges}, and corresponding to them we define two kinds of certificates, \textit{NotInCone} and \textit{NotInPie}, respectively.
\paragraph{Interior Edges.} 
By definition, an interior edge $p_{i'}p_{j'}\in {\cal E}(PDT_0)$ is incident to two triangles of $PDT_0$ that together form a \textit{quadrilateral}. Let $p_{r'}$ and $p_r$ be the two other vertices of the quadrilateral. For the edge $p_{i'}p_{j'}$, we define a \textit{NotInPie} certificate which certifies that point $p_{r}$ (resp. $p_{r'}$) is outside the $0$-pie passing through $p_{i'}$, $p_{j'}$, and $p_{r'}$ (resp. $p_r$). When the certificate fails, we replace $p_{i'}p_{j'}$ by $p_{r}p_{r'}$. In general, when the certificates corresponding to an interior edge fails, we perform such an edge swap.
\paragraph{Hull Edges.}
Let $o$, $w_0$, and $w_1$ be vertices of a $0$-pie (see Figure~\ref{fig:NotInCone}(a)). Two of the edges on the boundary of the $0$-pie are line segments and one of them is an arc; denote the line segments by $\overline{ow_0}$ and $\overline{ow_1}$ and the arc by $\overline{w_0w_1}$. By removing one of them and extending the line segment(s) to infinity, a cone can be created. We call these cones \textit{$k$-cones}. By definition, the edge $p_ip_j$ is a hull edge of $PDT_0$ if and only if there exists an empty $k$-cone such that $p_i$ and $p_j$ are on its boundary. 

Consider the $k$-cone $ow_1w_0$ corresponding to the edge $p_ip_j$ where one of the endpoints $p_i$ lies on the half-line $\overrightarrow{w_0o}$ and the other point $p_j$ lies on the half-arc $\overrightarrow{w_0w_1}$ (see Figure~\ref{fig:NotInCone}(b)). Let $\overrightarrow{\tilde{w_1}\tilde{w_0}}$ be the half-line perpendicular to  $\overrightarrow{w_1o}$ through $p_j$. For such a $k$-cone we assume that the line segment $\overrightarrow{w_1o}$ goes to infinity. This means that $w_1$ (resp. $w_0$) tends to $\tilde{w_1}$ (resp. $\tilde{w_0}$) and the $k$-cone approaches a right-angled wedge; see Figure~\ref{fig:NotInCone}(c).


Each hull edge $p_ip_j$ is adjacent to at most four other hull edges, denoted by $p_ip_{s_2}$, $p_ip_{s_3}$, $p_jp_{s_4}$, $p_jp_{s_5}$, and incident to at most one triangle. Let $p_{s_1}$ be the third vertex of this triangle if it exists; $p_{s_1}$ can be one of the $s_i$ where $2\leq i\leq 5$. If $p_ip_j$ is adjacent to at most four other triangles, then it cannot be incident to a triangle. In particular, at any time, the number of points $p_{s_i}$ is at most four. Therefore, for the $k$-cone passing through $p_i$ and $p_j$, we define at most four \textit{NotInCone} certificates certifying that the $p_{s_i}$ are outside of the $k$-cone. Note that in the case that a $k$-cone approaches a right-angled wedge (see Figure~\ref{fig:NotInCone}(c)), the certificate of the hull edge $p_ip_j$ fails when a point either crosses the half-line $\overrightarrow{w_1o}$, or reaches the line-segment $\overline{\tilde{w_1}p_j}$, or crosses the half-line $\overrightarrow{p_j\tilde{w_0}}$.

\begin{figure}[t]
  \centering
  \includegraphics[scale=1.2]{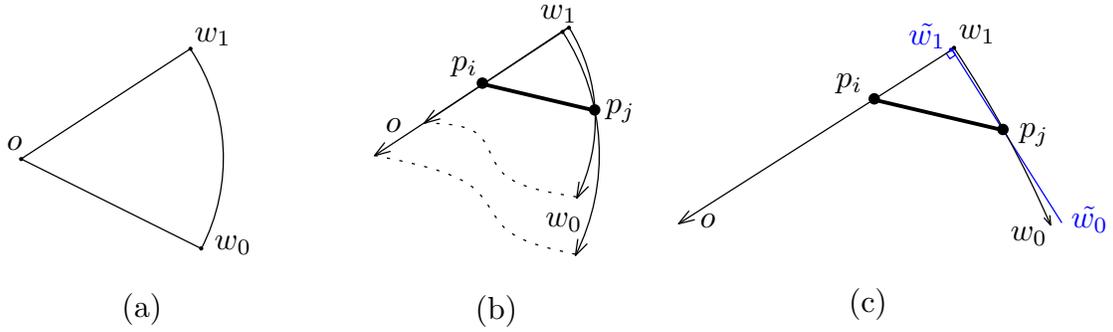}
  \caption{(a) A $0$-pie. (b) Two $k$-cones corresponding to the hull edge $p_ip_j$. (c) The $k$-cone approaches a right-angled wedge as $o$ goes to infinity.}
  \label{fig:NotInCone}
\end{figure}

The changes that can occur to $PDT_0$ are similar to the changes to $EDT_0$ and can easily be handled; see the paragraph "Consecutive Changes to EDT$_0$" in Section~\ref{sec:kinetic_EDG} for more details.


Next we state a theorem that enumerates the number of the combinatorial changes to the Pie Delaunay graph.

\begin{theorem}\label{the:num_changes_PDG}
The number of all changes (edge insertions and edge deletions) to the Pie Delaunay graph of a set of $n$ moving points with trajectories given by polynomial functions of at most constant degree $s$ is $O(n^3\beta_{2s+2}(n))$.
\end{theorem}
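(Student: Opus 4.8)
The plan is to bound the number of combinatorial changes to the Pie Delaunay graph $PDG$ by bounding separately the changes to each of the six triangulations $PDT_l$, $l=0,\dots,5$, and then summing. Since $PDG=\bigcup_l PDT_l$ and there are only a constant number of values of $l$, it suffices to bound the number of changes to a single $PDT_0$ by $O(n^3\beta_{2s+2}(n))$. The key observation, exactly as in the proof of Theorem~\ref{the:EDT_Changes}, is that every combinatorial change to $PDT_0$ (an edge swap at an interior edge or a hull-edge change) is witnessed by a moment at which four points become co-boundary on a scaled translate of the convex shape $\sigma_0$, i.e.\ four points lie on the boundary of a common empty $0$-pie. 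So the number of changes is proportional to the number of such \emph{co-circularity-type} events for the $0$-pie shape.

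First I would set up the counting in the standard Delaunay-on-a-convex-shape framework. For a fixed ordered pair (or triple) of points, the event that a fourth point enters or leaves the empty $0$-pie determined by the other three is governed by an algebraic condition on the point trajectories. The essential difference from the equilateral-triangle case is that the boundary of the $0$-pie contains a circular arc, not just straight segments; the ``empty $0$-pie'' condition therefore involves the unit-circle arc of the pie. The point of the theorem is that because of this curved boundary the per-triple event count is higher, yielding a cubic rather than quadratic bound. Concretely, I would fix a pair of points $p_i,p_j$ and consider the family of empty $0$-pies having $p_i,p_j$ on their boundary; over time the ``critical'' third or fourth point that defines the relevant Delaunay flip is the minimizer of some one-parameter family of functions indexed by the remaining $n$ points. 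Each such function is an algebraic function of bounded degree in $t$ (constant, since trajectories have degree $s$ and the pie-membership condition is a fixed algebraic relation), and any two of them cross $O(s)$ times, specifically at most $2s+2$ times once one accounts for the partially-defined nature of the functions.

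The heart of the argument is then the Davenport--Schinzel bound. For each fixed pair $p_i,p_j$ (there are $O(n^2)$ of them), I would argue that the sequence of combinatorial changes involving that pair is the lower (or upper) envelope of $O(n)$ partially-defined continuous functions, each pair of which meets at most $2s+2$ times; by Theorem~\ref{the:partiallyDFcomplexity} this envelope has complexity $O(\lambda_{2s+2}(n))=O(n\beta_{2s+2}(n))$. Summing over all $O(n^2)$ pairs gives $O(n^2\cdot n\beta_{2s+2}(n))=O(n^3\beta_{2s+2}(n))$ changes for $PDT_0$, and multiplying by the constant number of values of $l$ gives the same bound for the whole $PDG$. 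The main obstacle, and the reason the bound is worse than the $O(n^2\beta_{s+2}(n))$ of Theorem~\ref{the:EDT_Changes}, is correctly accounting for the arc on the boundary of the pie: the per-pair witness events must be indexed so that each of the $n$ candidate points contributes a constant number of partial functions of the right bounded order, and one must verify that the relevant algebraic condition (four points co-$0$-pie) gives pairwise intersection count at most $2s+2$ rather than $s+2$. I would make this precise by writing the pie-boundary membership as a constant-degree polynomial equation in $t$ after clearing the circular-arc constraint, confirming the doubling in the degree parameter, and then invoking the partially-defined lower-envelope bound as above.
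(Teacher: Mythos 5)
Your proposal is correct in outline and follows essentially the same route as the paper: restrict to a single $PDT_0$, fix one of the $O(n^2)$ pairs $p_ip_j$, realize the critical third vertex of the incident triangle as the breakpoint structure of a lower envelope of $O(n)$ partially defined, constant-degree functions with $O(s)$ pairwise crossings, apply the Davenport--Schinzel bound $\lambda_{2s+2}(n)=n\beta_{2s+2}(n)$ per pair, and sum. The concrete realization of your ``one-parameter family'' in the paper is simply the Euclidean distance to the apex $o$ of the cone spanned by the two straight sides through $p_i$ and $p_j$ (the pie's arc is a circular arc centered at $o$, so emptiness reduces to a distance comparison), and the paper additionally disposes of hull-edge changes by a trivial $O(n^3)$ count and of events where a point enters the bounded region $op_ip_j$ through a straight side by an $O(n^2)$ coordinate-swap argument --- both of which your envelope framing would need to absorb, and both of which fit within the stated bound.
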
 
\begin{proof}
Consider $PDT_0$. The number of hull-edge changes to $PDT_0$ is $O(n^3)$ as three points are involved in any hull change.  Since $n^3 =O(n^3\beta_{2s+2}(n))$, we focus on the number of changes to the triangles of $PDT_0$. 

For each edge $p_ip_j$ of a triangle in $PDT_0$, four different cases are possible as shown in Figure~\ref{fig:com_changes}. It is easy to see for any triangle $\Delta$ in the $PDT_0$ that case (a) of Figure~\ref{fig:com_changes} may happen to one of its edges. We charge any change to $\Delta$ to this edge. Therefore, we consider the number of combinatorial changes to $PDT_0$ for an arbitrary edge $p_ip_j$ that satisfies case (a) of Figure~\ref{fig:com_changes}.

As mentioned above, two edges of a $0$-pie are line segments $\overline{ow_0}$ and $\overline{ow_1}$ and one of them is an arc $\overline{w_0w_1}$. Let $C_{w_0w_1}$ be the cone whose sides are created by removing the arc $\overline{w_0w_1}$ of the $0$-pie and extending the two line segments to infinity; the wedge $C_{w_0w_1}$ is the area between two half-lines $\overrightarrow{ow_0}$ and $\overrightarrow{ow_1}$. Let ${\cal V}(C_{w_0w_1})$ be the set of all points inside the wedge $C_{w_0w_1}$. In Figure~\ref{fig:com_changes}(a), a change for triangle $p_ip_jp_r$ corresponding to $p_ip_j$ occurs in two cases:
\\
\textit{Case (I).} For some $p_t\in {\cal V}(C_{w_0w_1})$, the length of the edge $op_t$ becomes smaller than the length of the edge $op_r$.

Note that since the degree of each function describing each point's motion is at most $s$, each point of $P$ except $p_i$ and $p_j$ can move inside the cone $C_{w_0w_1}$ at most $s$ times. Summing over all points in $P$ there are $O(sn)$ insertions into ${\cal V}(C_{w_0w_1})$. The distance of these points from the apex $o$, in the $L_2$ metric, creates $O(sn)$ partial functions, and each pair of these functions intersects at most $2s$ times. Therefore, the number of combinatorial changes corresponding to an arbitrary edge $p_ip_j$ equals $\lambda_{2s+2}(sn)$, which is equal to the number of breakpoints in the lower envelope of $sn$ partial functions of at most degree $2s$ (see Theorem~\ref{the:partiallyDFcomplexity}). Since the maximum degree $s$ is a constant, $\lambda_{2s+2}(sn)=O(\lambda_{2s+2}(n))$. The number of all possible edges is $O(n^2)$, and therefore the number of combinatorial changes corresponding to all edges is $O(n^2\lambda_{2s+2}(n))$.
\\
\textit{Case (II).} In addition to the above changes for the edge $p_ip_j$ in Case (I), there exist other changes that can occur when a point such as $p_{t'}$ passes through the segment $op_i$ or the segment $op_j$ and enters inside the area $op_ip_j$ (see Figure~\ref{fig:com_changes}(a)). Map each point
$p_i=(x_i(t),y_i(t))$ to a point $p'_i=(u_i(t),v_i(t))$ in a new parametric plane where $u_i(t)=x_i(t)+\sqrt{3}y_i(t)$ and $v_i(t)=x_i(t)-\sqrt{3}y_i(t)$. Passing the point $p_{t'}$ through  the segment $op_i$ or the segment $op_j$ means that the point $p_{t'}$ exchanges its $u$-coordinate or its $v$-coordinate with the $u$-coordinate or $v$-coordinate of $p'_i$ or $p'_j$. We call these changes
\textit{swap-changes}. Observe that the total number of swap-changes for all cases is bounded by the number of all swaps between points in their ordering with respect to the $u$-axis and $v$-axis. The number of all the $u$-swaps and $v$-swaps between points is at most $O(n^2)$. 

Hence, the number of changes to the Pie Delaunay graph is $O(n^3\beta_{2s+2}(n))$.
\end{proof}

\begin{figure}[t]
  \centering
  \includegraphics[scale=0.9]{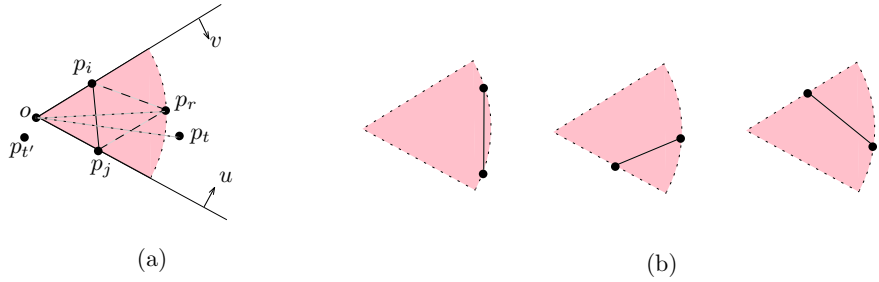}
  \caption{Combinatorial changes for an arbitrary edge $p_ip_j$.}
  \label{fig:com_changes}
\end{figure}

After any change to the Pie Delaunay graph, we replace a constant number of  (invalid) certificates from the priority queue with new valid ones, which takes $O(\log n)$ time. From the above discussion, together with Lemma~\ref{the:PDG_ConstructionTime} and Theorem~\ref{the:num_changes_PDG}, we obtain the following theorem.

\begin{theorem}\label{the:pie_d}
For a set of $n$ points in the plane with trajectories given by polynomial functions of at most constant degree $s$, there exists a KDS for maintenance of the Pie Delaunay graph that uses linear space, $O(n\log n)$ preprocessing time, and  that processes $O(n^3\beta_{2s+2}(n))$ events with total processing time $O(n^3\beta_{2s+2}(n)\log n)$.
\end{theorem}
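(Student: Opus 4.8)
The plan is to assemble Theorem~\ref{the:pie_d} from the three ingredients that the preceding development has already prepared, treating the theorem as a bookkeeping summary rather than as a fresh technical result. First I would settle the \emph{preprocessing time and space}. By Lemma~\ref{the:PDG_ConstructionTime} the initial Pie Delaunay graph is computed in $O(n\log n)$ time. The graph is the union of six Pie Delaunay triangulations, each of which is a planar triangulation on $n$ vertices and hence has $O(n)$ edges, so the stored structure occupies $O(n)$ space. For each interior edge I attach a constant number of NotInPie certificates, and for each hull edge at most four NotInCone certificates; since the total number of edges is $O(n)$, the total number of certificates, and therefore the size of the priority queue, is also $O(n)$. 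Inserting all these initial failure times into the priority queue costs $O(n\log n)$, which is absorbed into the preprocessing bound. This establishes the linear space and $O(n\log n)$ preprocessing claims.

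Next I would handle the \emph{number of events}. Here the key observation is that the events processed by the KDS are exactly the combinatorial changes to the Pie Delaunay graph together with the certificate failures that trigger them. Theorem~\ref{the:num_changes_PDG} already bounds the number of edge insertions and deletions to the Pie Delaunay graph by $O(n^3\beta_{2s+2}(n))$. Each such combinatorial change corresponds to the failure of a constant number of certificates (a NotInPie or NotInCone certificate, plus the constant number of consecutive changes at incident triangles described in the ``Consecutive Changes to EDT$_0$'' discussion, which carries over verbatim to $PDT_0$). Consequently the number of events the KDS processes is within a constant factor of the number of combinatorial changes, namely $O(n^3\beta_{2s+2}(n))$.

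Finally I would bound the \emph{total processing time}. When a certificate fails, the update mechanism performs the local edge swap (or the short consecutive chain of swaps at incident triangles), recomputes the constant number of affected certificates, and removes the stale ones from the priority queue while inserting the new ones. Each priority-queue deletion and insertion costs $O(\log n)$ because the queue holds $O(n)$ elements, and only a constant number of such operations occur per event; hence each event is processed in $O(\log n)$ time. Multiplying the per-event cost by the $O(n^3\beta_{2s+2}(n))$ event count yields the total processing time $O(n^3\beta_{2s+2}(n)\log n)$, completing the theorem.

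The genuinely substantive work has already been done in Theorem~\ref{the:num_changes_PDG}, so the main subtlety in this final proof is not a calculation but a \emph{verification of locality}: I must be sure that every certificate failure is resolved by touching only a constant number of edges and certificates, so that the per-event cost really is $O(\log n)$ and the event count really is proportional to the combinatorial-change count. The one point requiring care is the hull-edge case, where a NotInCone failure can propagate as a chain of consecutive changes to incident triangles; I would argue, by appeal to the $EDT_0$ analysis of Section~\ref{sec:kinetic_EDG}, that each elementary step in such a chain is itself a distinct combinatorial change already counted in Theorem~\ref{the:num_changes_PDG}, so the chain does not inflate the asymptotic totals. Once this amortization is in place, the three bounds combine immediately.
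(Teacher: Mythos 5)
Your proposal is correct and follows essentially the same route as the paper, which likewise obtains the theorem by combining Lemma~\ref{the:PDG_ConstructionTime} for the $O(n\log n)$ construction, Theorem~\ref{the:num_changes_PDG} for the $O(n^3\beta_{2s+2}(n))$ event count, and the observation that each event requires replacing only a constant number of certificates in the priority queue at $O(\log n)$ cost. The extra care you take with the consecutive-change chains and the certificate count per edge only makes explicit what the paper leaves implicit.
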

\subsection{Kinetic Yao Graph}\label{sec:kinetic:YG}
To maintain the Yao graph, for each point $p_i\in P$, we must maintain the nearest points to $p_i$ inside the wedges $W_l(p_i)$, where $0\leq l\leq 5$. Since the Yao graph is a subgraph of the Pie Delaunay graph (by Lemma~\ref{the:SY_EDT}), to maintain the nearest points inside the wedges of $p_i$, we only need to track the edges of the Pie Delaunay graph incident to $p_i$ with minimum length inside the wedges $W_l(p_i)$ for all $l=0,...,5$.

Let $Inc_l(p_i)$ be the set all edges of the Pie Delaunay graph incident to $p_i$ inside the wedge $W_l$. We store the edges of $Inc_l(p_i)$ at leaves of a dynamic and kinetic tournament tree	${\cal T}_{l,i}$ (see Section~\ref{sec:preliminary}). The root of ${\cal T}_{l,i}$ maintains the winner, the edge with minimum length among all edges in $Inc_l(p_i)$. Given the KDS of the Pie Delaunay graph and making an analysis similar to that of Corollary~\ref{the:AllDKTT} and Theorem~\ref{the:KinecitNNG}, the following theorem results.

\begin{theorem}\label{the:kineticYG}
The KDS for maintenance of the Yao graph uses $O(n)$ space, $O(n\log n)$ preprocessing time, and processes $O(n^3\beta^2_{2s+2}\log n)$ (internal) events with total processing time $O(n^3\beta^2_{2s+2}\log^2 n)$. It is compact, responsive in an amortized sense, and local on average, but it is not efficient.
\end{theorem}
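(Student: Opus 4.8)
The plan is to follow the template established by Corollary~\ref{the:AllDKTT} and Theorem~\ref{the:KinecitNNG}, but now driven by the Pie Delaunay graph rather than the Equilateral Delaunay graph, and with six tournament trees per point instead of one. First I would bound the space. By Lemma~\ref{the:PDG_ConstructionTime} the Pie Delaunay graph is the union of six linear-size triangulations and hence has $O(n)$ edges; each such edge $p_ip_j$ is incident to exactly one wedge of $p_i$ and one wedge of $p_j$, so it is stored as a leaf in precisely two of the trees ${\cal T}_{l,i}$. Therefore $\sum_{l,i} n_{l,i} = O(n)$, the total size of all tournament trees is $O(n)$, and together with the $O(n)$-space PDG this gives the claimed linear space bound. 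For preprocessing, the PDG is built in $O(n\log n)$ time (Lemma~\ref{the:PDG_ConstructionTime}), and by Theorem~\ref{the:DynamicKineticTT} all the tournament trees are built in time linear in their total size, i.e.\ $O(n)$; hence $O(n\log n)$ preprocessing overall.

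Next I would count the events. Every edge insertion or deletion in the PDG triggers a constant number (two) of insertions/deletions into the relevant trees ${\cal T}_{l,i}$, so by Theorem~\ref{the:num_changes_PDG} the total number of update operations satisfies $\sum_{l,i} m_{l,i} = O(n^3\beta_{2s+2}(n))$. Since the lengths of any two PDG edges can become equal at most $2s$ times, the dynamic and kinetic tournament tree analysis of Lemma~\ref{the:DKTT} applies verbatim to each ${\cal T}_{l,i}$, yielding $O(m_{l,i}\beta_{2s+2}(n_{l,i})\log n_{l,i})$ events per tree. Summing and pulling out the slowly growing factors exactly as in the proof of Corollary~\ref{the:AllDKTT} gives $O\bigl(\beta_{2s+2}(n)\log n\sum_{l,i}m_{l,i}\bigr)=O(n^3\beta^2_{2s+2}(n)\log n)$ events, with total processing cost $O(n^3\beta^2_{2s+2}(n)\log^2 n)$. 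The performance criteria then follow at once: the PDG carries a constant number of \textit{NotInCone}/\textit{NotInPie} certificates per edge and the tournament trees have $O(n)$ internal nodes in total, so the KDS holds $O(n)$ certificates and is compact; dividing $O(n)$ certificates among $n$ points shows it is local on average; and responsiveness in the amortized sense is immediate, since the ratio of total processing time to the number of events is $O(\log n)$.

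The one genuinely negative part of the statement---that the KDS is \emph{not} efficient---is the main point to argue, and it is where the Pie Delaunay approach differs sharply from the Equilateral Delaunay approach of Section~\ref{sec:kinetic_ANN}. The external events are the actual combinatorial changes to the Yao graph, i.e.\ changes of the nearest point inside some wedge $W_l(p_i)$; tracking one such nearest neighbor is a single lower-envelope problem contributing $O(\lambda_{2s+2}(n))$ changes, so over all $6n$ point--wedge pairs the number of external events is only $O(n^2\beta_{2s+2}(n))$. The internal events, however, are forced to be nearly cubic because they are inherited from the PDG, whose hull edges and circular-arc boundary give it combinatorial complexity of order $n^3$ (Theorem~\ref{the:num_changes_PDG}). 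Consequently the internal-to-external ratio is $\Omega(n)$ up to slowly growing factors, which is not polylogarithmic, so the KDS fails the efficiency criterion. I expect this contrast to be the crux of the writeup: everything else is a direct transcription of the all-nearest-neighbors argument, whereas the inefficiency claim requires separately (and correctly) pinning down the external event count and comparing it against the cubic internal count.
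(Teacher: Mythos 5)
Your proposal is correct and follows essentially the same route as the paper, which itself only says the result follows from the Pie Delaunay graph KDS ``by an analysis similar to that of Corollary~\ref{the:AllDKTT} and Theorem~\ref{the:KinecitNNG}''; you have simply filled in the details of that analogy. Your argument for the lack of efficiency is also the paper's: it contrasts the nearly cubic internal event count inherited from Theorem~\ref{the:num_changes_PDG} with the $O(n^2\beta_{2s+2}(n))$ bound on actual Yao graph changes established in Theorem~\ref{the:num_changes_YG}.
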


For \textit{linearly} moving points in the plane, Katoh~\etal~\cite{Katoh:1992:MMS:1398516.1398835} showed  that the number of changes to the Yao graph is $O(n\lambda_4(n))$. In the following theorem we bound the number of combinatorial changes to the Yao graph of a set of moving points whose trajectories are given by polynomial functions of at most constant degree $s$. For maintenance of the Yao graph, our KDS processes $O(n^3\beta^2_{2s+2}\log n)$ events, but the following theorem proves that the number of exact changes to the Yao graph is nearly quadratic, which explains why our KDS is not efficient. 

\begin{theorem}\label{the:num_changes_YG}
The number of all changes to the Yao graph, when the points move with polynomial trajectories of at most constant degree $s$,  is $O(n^2\beta_{2s+2}(n))$.
\end{theorem}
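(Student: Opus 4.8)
The plan is to count, for each point $p_i$ and each of its six wedges $W_l(p_i)$, how often the Yao edge emanating from $p_i$ inside $W_l(p_i)$ can change, and then to sum over all points and wedges. Since each point has exactly six out-edges (one per wedge), a change to the Yao graph is precisely a change of the point of $P$ nearest to $p_i$ among ${\cal V}_l(p_i)=P\cap W_l(p_i)$, for some $p_i$ and some $l$. The approach parallels the proof of Theorem~\ref{the:EDT_Changes}; the essential difference is that here we must track the minimum of the \emph{Euclidean distance} to $p_i$ rather than the minimum of a linear projection, so the relevant functions have degree up to $2s$ rather than $s$.

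First I would fix $p_i$ and $W_l(p_i)$ and bound how often the candidate set ${\cal V}_l(p_i)$ changes. A point $p_k$ lies in $W_l(p_i)$ iff the vector $p_k-p_i$ falls in the fixed angular range of the wedge; crossing either of the two bounding rays is governed by the sign of a determinant of the form $(x_k(t)-x_i(t))\sin\theta-(y_k(t)-y_i(t))\cos\theta$, which is a polynomial of degree at most $s$ in $t$ and therefore vanishes at most $s$ times. Hence each $p_k$ enters and leaves $W_l(p_i)$ only $O(s)$ times, so over all $k$ there are $O(sn)$ maximal intervals during which a point resides in the wedge.

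Next I would set up the lower-envelope argument. On each such residence interval, associate with $p_k$ the squared distance $d^2(p_i,p_k)(t)=(x_k(t)-x_i(t))^2+(y_k(t)-y_i(t))^2$, a polynomial of degree at most $2s$; since squaring is monotone on the nonnegatives, the nearest point of ${\cal V}_l(p_i)$ at each instant is the one attaining the lower envelope of these functions, and every change of the Yao edge is a breakpoint of that envelope. These restricted functions form $O(sn)$ \emph{partially}-defined continuous functions, any two of which agree only when a degree-$\le 2s$ polynomial vanishes, i.e.\ at most $2s$ times. By Theorem~\ref{the:partiallyDFcomplexity}, their lower envelope then has at most $\lambda_{2s+2}(sn)$ breakpoints; as $s$ is constant, $\lambda_{2s+2}(sn)=O(\lambda_{2s+2}(n))$. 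Summing over all $n$ choices of $p_i$ and the constant number of wedges gives $O(n\,\lambda_{2s+2}(n))=O(n^2\beta_{2s+2}(n))$, as claimed.

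The step needing the most care is ensuring that changes caused purely by a point entering or leaving the wedge---rather than by a genuine distance swap between two resident points---are correctly absorbed into the count. This is exactly why the partially-defined version (Theorem~\ref{the:partiallyDFcomplexity}) is invoked in place of Theorem~\ref{the:totallyDFcomplexity}: a breakpoint of the lower envelope of partial functions also arises whenever a function's defining interval begins below the current minimum, so the $\lambda_{2s+2}$ bound already accounts for all arrivals and departures at the wedge boundaries. The only remaining bookkeeping is to confirm that the $O(s)$ residence intervals per point feed the correct parameter $sn$ into the envelope bound, which follows directly from the boundary-crossing count above.
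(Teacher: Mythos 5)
Your proposal is correct and follows essentially the same route as the paper's proof: fix a point $p_i$ and a wedge $W_l(p_i)$, observe that each other point enters the wedge $O(s)$ times giving $O(sn)$ partially-defined distance functions with at most $2s$ pairwise intersections, apply Theorem~\ref{the:partiallyDFcomplexity} to bound the lower-envelope breakpoints by $\lambda_{2s+2}(sn)=O(\lambda_{2s+2}(n))$, and sum over all $n$ points and the six wedges. Your added details (the determinant argument for wedge crossings and the use of squared distances to justify the $2s$ intersection bound) merely make explicit what the paper leaves implicit.
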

\begin{proof}
Consider the point $p_i\in P$ and one of its wedges $W_l(p_i)$. Each of the other points in $P$ can be moved inside the wedge $W_l(p_i)$ at most $s$ times, and so there exist $O(sn)$ insertions into the wedge $W_l(p_i)$. The distance of these points from $p_i$ creates $O(sn)$ partial functions; each pair of these functions intersects at most $2s$ times. By Theorem~\ref{the:partiallyDFcomplexity}, the edge with minimum length changes at most $\lambda_{2s+2}(sn)=O(\lambda_{2s+2}(n))$ times.
 
Hence, the number of all changes to the Yao graph of a set of $n$ moving points is $O(n\lambda_{2s+2}(n))$.
\end{proof}

\begin{remark}
Using an argument similar to that for the KDS we obtained for the Yao graph in the $L_2$ metric, a KDS for the Yao graph in the $L_1$ and $L_\infty$ metrics can be obtained. 

Denote by $\square$ the \textit{unit square} with corners at $(0,0)$, $(1, 0)$, $(0, 1)$, and $(1, 1)$ in a Cartesian coordinate system, and call any translated and scaled copy of $\square$ an \textit{SQR}. The edge $p_ip_j$ is an edge of the Delaunay triangulation based on an SQR in the $L_\infty$ metric if and only if there is an empty SQR such that $p_i$ and $p_j$ are on its boundary (\ie, the interior of SQR contains no point of $P$). Abam~\etal~\cite{Abam:2010:SEK:1630166.1630284} showed how to maintain a Delaunay triangulation based on a diamond. Each SQR is a diamond, so using their approach applies. The Delaunay triangulation where the triangulation is based on an SQR in the $L_\infty$ metric can be maintained kinetically by processing at most $O(n\lambda_{s+2}(n))$ events, each in amortized time $O(\log n)$. The Delaunay triangulation based on an SQR is a supergraph for the Yao graph in the $L_\infty$ metric. Therefore, we can have a KDS for the Yao graph in the $L_\infty$ metric that uses $O(n)$ space, $O(n\log n)$ preprocessing time, and that processes $O(n^2\beta^2_{s+2}(n)\log n)$ events, each in amortized time $O(\log n)$.

The Delaunay triangulation in the $L_1$ metric can be constructed/maintained analogously, by rotating all points $45$ degrees around the origin and constructing/maintaining the Delaunay triangulation in the $L_\infty$ metric.
\end{remark}
\subsection{Kinetic EMST}\label{sec:kinetic_EMST}
Our kinetic approach for maintaining the EMST is based on the fact that the EMST is a subgraph of the Yao graph, where the number of the wedges around each point in the Yao graph is greater than or equal to six~\cite{DBLP:journals/siamcomp/Yao82}. 

Let $L$ be a list of the Yao graph edges (which in fact are stored at the roots of the dynamic and kinetic tournament trees ${\cal T}_{l,i}$, for each point $p_i\in P$ and $l=1,...,6n$), sorted with respect to their Euclidean lengths. A change to the EMST may occur when two edges in $L$ change their ordering. For each two consecutive edges in  $L$, we define a certificate certifying the respective sorted order of the edges. Whenever the ordering of two edges in this list is changed, we apply the required changes to the EMST KDS. Therefore, to update the EMST when the points are moving, we must track the changes to $L$. There exist two types of changes to $L$: $(a)$ edge insertion and edge deletion from $L$, and $(b)$ a change in the order of two consecutive edges in $L$. The following discusses how to handle these two types of events.

\paragraph{Case (a):} As soon as an edge is deleted from $L$ a new one is inserted. Both the deleted edge and the inserted edge are in the same dynamic and kinetic tournament tree, and both of them have a common endpoint; see Figure~\ref{fig:InsDel}. Call the deleted edge and the inserted edge $p_ip_j$ and $p_ip_r$, respectively, and denote by ${\cal T}_{i,l}$ the dynamic and kinetic tournament tree that contains $p_ip_j$ and $p_ip_r$. The deleted edge $p_ip_j$ can be one of the EMST edges at time $t^-$ and if so, we have to find a new edge to repair the EMST at time $t^+$. The following lemma proves that this new edge is $p_ip_r$.

\begin{figure}[t!]
\centering
  \includegraphics[scale=0.9]{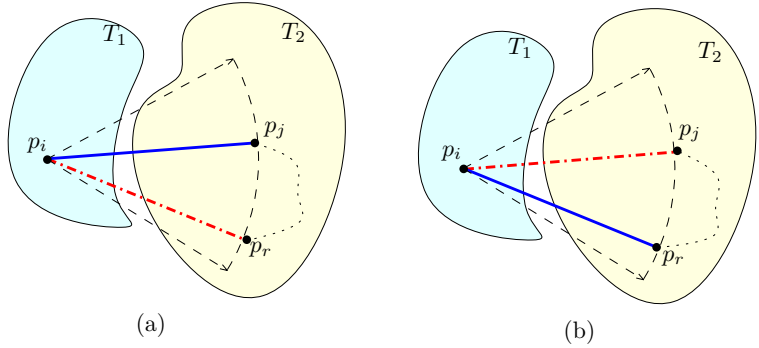}
  \caption{The edge connecting two subtrees $T_1(P_1,E_1)$ and $T_2(P_2,E_2)$: (a) At time $t^-$, $|p_ip_r|>|p_ip_j|>|p_jp_r|$ and the edge connecting $T_1$ and $T_2$ is $p_ip_j$. (b) At time $t^+$, $|p_ip_j|>|p_ip_r|>|p_jp_r|$ and the edge connecting $T_1$ and $T_2$ is $p_ip_r$.}
  \label{fig:InsDel}
\end{figure}

\begin{lemma}
Let $p_ip_j$ be the winner  of the dynamic and kinetic tournament tree ${\cal T}_{i,l}$. Suppose $p_ip_j\in {\cal E}(EMST)$ at time $t^-$ and let $p_ip_r$ be the winner of ${\cal T}_{i,l}$ at time $t^+$. Then ($i$) at time $t^-$, $p_ip_r\notin {\cal E}(EMST)$, and ($ii$) at time $t^+$, $p_ip_r\in {\cal E}(EMST)$ and $p_ip_j\notin {\cal E}(EMST)$.
\end{lemma}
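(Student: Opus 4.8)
The plan is to combine the geometry of the wedge $W_l(p_i)$ with the two classical optimality properties of minimum spanning trees: the cycle property and the cut property. The lemma really has three assertions — $p_ip_r\notin{\cal E}(EMST)$ at $t^-$, $p_ip_j\notin{\cal E}(EMST)$ at $t^+$, and $p_ip_r\in{\cal E}(EMST)$ at $t^+$ — and I would handle the two non-membership claims by the cycle property and the single membership claim by the cut property.

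First I would pin down the three edge lengths near the event. At the event time $t$ the winner of ${\cal T}_{i,l}$ passes from $p_ip_j$ to $p_ip_r$, so $|p_ip_j(t)|=|p_ip_r(t)|=:d$, and both $p_j$ and $p_r$ lie in the wedge $W_l(p_i)$ of apex angle $\pi/3$. Hence the triangle $p_ip_jp_r$ is isosceles at $p_i$ with apex angle strictly less than $\pi/3$, so $|p_jp_r|=2d\sin(\angle p_jp_ip_r/2)<2d\sin(\pi/6)=d$. By continuity, on a small interval around $t$ we have $|p_jp_r|<|p_ip_j|$ and $|p_jp_r|<|p_ip_r|$; together with the fact that $p_ip_j$ (resp.\ $p_ip_r$) is the strictly shorter of the two swapped edges at $t^-$ (resp.\ $t^+$), this yields the orderings $|p_ip_r|>|p_ip_j|>|p_jp_r|$ at $t^-$ and $|p_ip_j|>|p_ip_r|>|p_jp_r|$ at $t^+$. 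The two non-membership claims now follow at once from the \emph{cycle property} applied to the triangle $p_ip_jp_r$, viewed as a cycle of the complete graph $G$: at $t^-$ the edge $p_ip_r$ is the unique longest edge of the triangle, so it lies in no EMST, giving $(i)$; at $t^+$ the edge $p_ip_j$ is the unique longest edge, giving $p_ip_j\notin{\cal E}(EMST)$, the first half of $(ii)$.

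For the remaining claim $p_ip_r\in{\cal E}(EMST)$ at $t^+$ I would invoke the \emph{cut property}. Let $E^-$ be the EMST at $t^-$; removing the tree edge $p_ip_j$ splits the vertex set into $P_1\ni p_i$ and $P_2\ni p_j$, and, being a tree edge, $p_ip_j$ is the lightest edge crossing the fixed cut $(P_1,P_2)$ at $t^-$. I first locate $p_r$: if $p_r\in P_1$, then $p_jp_r$ would cross the cut with $|p_jp_r|<|p_ip_j|$, contradicting minimality of $p_ip_j$; hence $p_r\in P_2$, and so $p_ip_r$ crosses $(P_1,P_2)$. Next I show $p_ip_r$ is the lightest crossing edge at $t^+$: for any crossing edge $e\neq p_ip_j$ we have $|e|>|p_ip_j|$ at $t^-$; since the event is generic, no third edge attains length $d$ at $t$, so $|e(t)|\neq d$, and $|e(t)|<d$ is impossible (it would force $|e|<|p_ip_j|$ at $t^-$). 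Thus $|e(t)|>d=|p_ip_r(t)|$, and by continuity $|e|>|p_ip_r|$ at $t^+$. Therefore $p_ip_r$ is the unique minimum-weight edge across $(P_1,P_2)$ at $t^+$, and the cut property places it in every EMST at $t^+$, which completes $(ii)$; note this conclusion needs no reconstruction of the full tree $E^+$.

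The main obstacle is the middle step of the cut argument — correctly locating $p_r$ on the far side of the cut and ruling out any crossing edge that could undercut $p_ip_r$ after the swap. Both hinge on the geometric inequality $|p_jp_r|<d$ (which itself relies on the $\pi/3$ apex angle and the equal lengths at the swap) and on the single-swap general-position assumption that only $p_ip_j$ and $p_ip_r$ share length $d$ at time $t$. Once these two facts are secured, the cycle and cut properties do the rest essentially mechanically.
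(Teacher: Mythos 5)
Your proof is correct, and it reaches the same conclusions as the paper's proof but by a partly different decomposition. The paper simply asserts the orderings $|p_ip_r|>|p_ip_j|>|p_jp_r|$ at $t^-$ and $|p_ip_j|>|p_ip_r|>|p_jp_r|$ at $t^+$ (they appear in a figure caption), whereas you derive them from the fact that the two winner lengths coincide at the event time together with the $\pi/3$ apex angle of $W_l(p_i)$ — this is a genuine improvement, since $|p_jp_r|<|p_ip_j|$ does \emph{not} follow from the angle bound alone when $|p_ip_r|$ is much larger than $|p_ip_j|$; the near-equality at the swap is essential. For the two non-membership claims you use the cycle property on the triangle $p_ip_jp_r$, which needs neither the cut nor the location of $p_r$; the paper instead derives ($i$) from the cut $(P_1,P_2)$ obtained by deleting the tree edge $p_ip_j$, after placing $p_r\in P_2$ by the same contradiction you use ($p_jp_r$ would otherwise be a lighter crossing edge). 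For the membership claim $p_ip_r\in{\cal E}(EMST)$ at $t^+$, the paper stops at ``the EMST is the union of $T_1$, $T_2$, and $p_ip_r$,'' leaving implicit exactly the step you spell out: that every other edge crossing $(P_1,P_2)$ still exceeds $|p_ip_r|$ at $t^+$ (via genericity at time $t$ and continuity), so the cut property applies. In short, your version buys a self-contained justification of the length ordering and an explicit minimality check at $t^+$, at the modest cost of invoking both the cycle and cut properties where the paper works entirely with the single cut.
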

\begin{proof}
Deleting an edge $p_ip_j$ from EMST creates two subtrees $T_1(P_1,E_1)$ and $T_2(P_2,E_2)$. Let $p_i\in P_1$ and $p_j\in P_2$; see Figure~\ref{fig:InsDel}.  At time $t^-$, since $p_ip_j\in {\cal E}(EMST)$, $|p_ip_r|>|p_ip_j|>|p_jp_r|$, and $\angle p_jp_ip_r \leq \pi/3$, we have that $p_r\in P_2$. This can be concluded by contradiction. Thus ($i$) at time $t^-$, $p_ip_r\notin {\cal E}(EMST)$. 

The proof that $p_ip_j\notin {\cal E}(EMST)$ at time $t^+$ is analogous to the proof for ($i$). Therefore, at time $t^+$, the EMST is the union of two trees $T_1$ and $T_2$ and the edge $p_ip_r$.
\end{proof}

\paragraph{Case (b):} Let  $path(e)$ be the simple path in the EMST between the endpoints of edge $e$ and let $|e|$ be the Euclidean length of $e$. A change in the sorted list $L$ corresponds to a pair of edges $e$ and $e'$ in ${\cal E}(YG)$ such that at time $t^-$, $|e|<|e'|$, and at time $t^+$, $|e|>|e'|$. Thus at time $t$, $e$ may be replaced by $e'$ in the EMST. It is easy to see the following.

\begin{observation}
The EMST changes if and only if at time $t^-$, $|e|<|e'|$, $e\in {\cal E}(EMST)$, $e'\notin {\cal E}(EMST)$, $e\in path(e')$, and at time $t^+$, $|e|>|e'|$.
\end{observation}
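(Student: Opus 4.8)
The plan is to prove the Observation by establishing both directions of the biconditional, using the standard cut property (exchange argument) for minimum spanning trees. Recall that the key structural fact is that removing an edge $e \in {\cal E}(EMST)$ from the EMST disconnects it into two subtrees, inducing a partition $(A, B)$ of $P$; by the minimality of the EMST, $e$ must be a \emph{minimum-length} edge of ${\cal E}(YG)$ crossing this cut at time $t^-$. I would work with the quantity $path(e')$, the unique tree path between the endpoints of $e'$, since $e \in path(e')$ is exactly the condition that $e$ and $e'$ cross the same cut in the relevant sense.

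First I would prove the forward direction (the stated conditions are sufficient for a change). Suppose at time $t^-$ we have $|e| < |e'|$, $e \in {\cal E}(EMST)$, $e' \notin {\cal E}(EMST)$, and $e \in path(e')$. Removing $e$ from the EMST yields a cut $(A,B)$, and since $e \in path(e')$, the endpoints of $e'$ lie on opposite sides of this cut, so $e'$ also crosses $(A,B)$. At time $t^+$ we have $|e| > |e'|$, so the cycle formed by adding $e'$ to the tree contains $e$ with $|e| > |e'|$; by the standard cycle property, swapping $e$ for $e'$ strictly decreases the tree weight, so the EMST must change (specifically $e$ leaves and $e'$ enters). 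The main subtlety here is to argue that no \emph{other} edge change occurs at exactly time $t$ that would preempt this swap; under the general-position assumption that no two pairs of edges swap length order simultaneously, the swap of $e$ and $e'$ is the unique event, so the change is genuine.

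Next I would prove the reverse direction (necessity). Suppose the EMST changes at time $t$. A combinatorial change to the EMST in Case (b) arises from a length-order swap of two edges $e, e'$ of the Yao graph, so without loss of generality $|e| < |e'|$ at $t^-$ and $|e| > |e'|$ at $t^+$. If the EMST genuinely changes, then the edge set before and after must differ; I would argue that the only candidate change is that $e$ is replaced by $e'$, which forces $e \in {\cal E}(EMST)$ and $e' \notin {\cal E}(EMST)$ at time $t^-$ (an edge already in, or already out, on both sides produces no change). For the condition $e \in path(e')$: if $e \notin path(e')$, then $e$ and $e'$ do not cross a common cut, so exchanging them would either break connectivity or fail to satisfy the MST optimality conditions — in particular, replacing $e$ by $e'$ would not yield a spanning tree, contradicting that a valid EMST swap occurred. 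Hence $e \in path(e')$ is forced.

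I expect the main obstacle to be the reverse direction, specifically the rigorous justification that $e \in path(e')$ is \emph{necessary}. The cleanest route is contrapositive via the cut/cycle characterization: since the EMST is characterized by the property that every non-tree edge $e'$ is the longest edge on the unique cycle it closes (equivalently, every tree edge $e$ is the shortest crossing its induced cut), a length swap can only trigger a topological change when the two edges compete on the \emph{same} cycle, i.e. when $e$ lies on $path(e')$. I would need to state this characterization cleanly and handle the degenerate possibility that several edges have equal length at the critical instant by appealing to the general-position-over-time assumption (analogous to the assumption made for $EDT_0$ in Section~\ref{sec:kinetic_EDG}), ensuring that length-order swaps occur one pair at a time. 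With that assumption in place, both directions reduce to one application each of the cut and cycle properties of minimum spanning trees.
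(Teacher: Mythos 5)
The paper offers no proof of this Observation at all --- it is introduced with ``It is easy to see the following'' and left as an exercise --- so there is nothing to compare your argument against line by line; what you have written is the standard justification the authors evidently had in mind, and it is essentially correct. Your forward direction is airtight in its cleanest form: since $e\in path(e')$, the graph $T-e+e'$ is a spanning tree, and at time $t^+$ its weight is $w(T)-|e|+|e'|<w(T)$, so $T$ can no longer be minimum; you do not even need to invoke the cycle property explicitly, and the ``no preemption'' worry is handled by the same general-position-over-time convention the paper uses elsewhere.

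The one place that deserves a little more care is the reverse direction. Your phrase ``replacing $e$ by $e'$ would not yield a spanning tree'' only rules out the specific $e\leftrightarrow e'$ exchange; necessity requires showing that \emph{no} change of any kind occurs to the EMST when the listed conditions fail. The clean way to close this, which you gesture at in your final paragraph, is to verify that the old tree $T$ still satisfies the MST optimality conditions at $t^+$ in every complementary case: if both edges are in $T$ or both are outside $T$, or if $e\notin T$ and $e'\in T$, or if $e\in T$, $e'\notin T$ but $e\notin path(e')$, then for every tree edge the set of non-tree edges crossing its induced cut has unchanged relative order (the only order change involves $e$ and $e'$, and $e'$ crosses the cut of $e$ precisely when $e\in path(e')$), so every tree edge remains the minimum across its cut and $T$ remains the unique EMST. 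With that case check written out, both directions are complete, and your appeal to the cut and cycle properties is exactly the right machinery.
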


Such events can be detected and maintained in $O(\log n)$ time per operation using the \textit{link-cut tree data structure} of Sleator and Tarjan~\cite{Sleator:1983:DSD:61337.61338}.

Given a KDS for maintenance of the Yao graph, the following bounds the number of events for maintaining the EMST.

\begin{lemma}\label{the:num_changes_EMST}
Given a Yao graph KDS for a set of $n$ points moving with polynomial trajectories of constant maximum degree $s$, there exists a KDS for maintenance of the EMST that processes $O(n^3\beta_{2s+2}(n))$ events.
\end{lemma}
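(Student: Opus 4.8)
The plan is to bound separately the two kinds of list events described above---Case~(a), in which an edge is deleted from and a new edge inserted into $L$, and Case~(b), in which two consecutive edges of $L$ exchange their length order---and to note that the actual EMST repairs triggered by these events (carried out through the preceding lemma, the observation, and the link-cut tree of Sleator and Tarjan) cost $O(\log n)$ time each but do \emph{not} themselves generate further certificate failures. Hence the number of events the EMST layer processes equals the number of Case~(a) events plus the number of Case~(b) events, and it suffices to bound these two quantities.

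For Case~(a), I would observe that each deletion/insertion in $L$ is exactly a change of the winner at the root of some tournament tree ${\cal T}_{l,i}$, that is, a combinatorial change of the Yao graph. By Theorem~\ref{the:num_changes_YG} the Yao graph changes $O(n^2\beta_{2s+2}(n))=O(n\lambda_{2s+2}(n))$ times over the whole motion, so the number of Case~(a) events is $O(n^2\beta_{2s+2}(n))$, which is dominated by the claimed bound.

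For Case~(b), I would treat $L$ as a kinetic sorted list on the $6n$ roots ${\cal T}_{l,i}$, each root carrying as its value the squared length of its current winning edge. Since all coordinate functions have degree at most $s$, the squared length of any fixed edge is a polynomial of degree at most $2s$, so the squared lengths of two fixed edges coincide at most $2s$ times. The key point is that the value at a root is only \emph{piecewise} a single polynomial: it jumps at every winner change, and by the argument of Theorem~\ref{the:num_changes_YG} each root changes its winner $O(\lambda_{2s+2}(n))$ times. These jumps are precisely the Case~(a) deletions/insertions and are charged there; on every maximal interval with no winner change the values move continuously and the sorted order is restored by adjacent transpositions, so the number of Case~(b) events equals the total number of times two roots reverse their relative order. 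For a fixed pair of roots having $k_A$ and $k_B$ pieces, the time axis splits into $O(k_A+k_B)$ cells on each of which both winners are fixed edges and thus swap at most $2s$ times; the pair therefore reverses order $O(s(k_A+k_B))=O(k_A+k_B)$ times. Summing over all pairs gives $\sum_{A\neq B}O(k_A+k_B)=O(n)\sum_A k_A$, and since $\sum_A k_A$ is the total number of winner changes, namely $O(n^2\beta_{2s+2}(n))=O(n\lambda_{2s+2}(n))$, the number of Case~(b) events is $O(n^2\lambda_{2s+2}(n))=O(n^3\beta_{2s+2}(n))$.

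Adding the two contributions yields $O(n^3\beta_{2s+2}(n))$ events in total, as claimed. The main obstacle I expect is the bookkeeping in Case~(b): one must cleanly separate the continuous crossings (the genuine list swaps) from the discontinuous value jumps caused by winner changes, charge the latter to Case~(a), and justify that over each continuous interval the number of adjacent transpositions in the kinetic sorted list equals the number of pairwise order reversals. Once this separation is in place, the per-pair bound and the summation $\sum_A k_A=O(n^2\beta_{2s+2}(n))$ follow directly from Theorem~\ref{the:num_changes_YG} together with the degree bound on squared edge lengths.
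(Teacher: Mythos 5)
Your proof is correct and follows essentially the same route as the paper's: the insertions/deletions into $L$ are bounded by the $O(n^2\beta_{2s+2}(n))$ Yao-graph changes of Theorem~\ref{the:num_changes_YG}, and an extra $O(n)$ factor accounts for the order swaps, giving $O(n^3\beta_{2s+2}(n))$. Your pairwise piecewise-polynomial refinement in Case~(b) is simply a more explicit justification of the paper's one-line claim that each insertion may cause $O(n)$ changes; the underlying counting is identical.
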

\begin{proof}
The set of Yao graph edges is a superset of the set of the EMST edges, and any change in the order of consecutive edges in the sorted list $L$ of the Yao graph edges may change the EMST. More precisely, any edge insertion/deletion in the Yao graph implies an insertion/deletion into $L$, and each insertion may cause $O(n)$ changes in the EMST. From Theorem~\ref{the:num_changes_YG}, the number of all insertions and deletions into the sorted list $L$ is $O(n^2\beta_{2s+2}(n))$. Thus the number of events that our KDS processes is $O(n^3\beta_{2s+2}(n))$.
\end{proof}

The KDS for maintenance of the EMST uses the Pie Delaunay graph KDS and the Yao graph KDS. From the above discussion and Theorems~\ref{the:pie_d} and \ref{the:kineticYG}, the following results.

\begin{theorem}\label{the:kineticEMST}
The KDS for maintenance of the EMST uses linear space and requires $O(n\log n)$ preprocessing time.  The KDS processes $O(n^3\beta^2_{2s+2}(n)\log n)$ events, each in amortized time $O(\log n)$. The KDS is compact, responsive in an amortized sense, and local on average.
\end{theorem}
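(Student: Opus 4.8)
The plan is to assemble the EMST KDS from three layers that have already been analyzed, and then prove the stated bounds by summing the resource costs of the layers and checking that no single component dominates beyond the claimed bound. The bottom layer is the kinetic Pie Delaunay graph of Theorem~\ref{the:pie_d}; on top of it sits the kinetic Yao graph of Theorem~\ref{the:kineticYG}, which maintains, for each point $p_i$ and each wedge index $l$, the shortest incident Pie Delaunay edge via the dynamic and kinetic tournament tree ${\cal T}_{l,i}$; and the top layer maintains the EMST itself, using the sorted list $L$ of the $O(n)$ Yao-graph edges together with a link-cut tree. The certificates of the top layer are exactly the consecutive-order certificates on $L$, and the update rules are the Case~(a) and Case~(b) procedures described above.

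First I would bound space and preprocessing. By Theorems~\ref{the:pie_d} and~\ref{the:kineticYG} the Pie Delaunay graph and all the tournament trees occupy $O(n)$ space, and since $|{\cal E}(YG)|\le 6n$ the list $L$ and the link-cut tree are also of linear size; summing gives $O(n)$. For preprocessing, the Pie Delaunay graph is built in $O(n\log n)$ time by Lemma~\ref{the:PDG_ConstructionTime}, the tournament trees are then initialized in $O(n)$ total time, and $L$ is sorted and the link-cut tree built in $O(n\log n)$ time; hence $O(n\log n)$ overall. Next I would total the events. The three sources are the $O(n^3\beta_{2s+2}(n))$ Pie Delaunay events (Theorem~\ref{the:pie_d}), the $O(n^3\beta^2_{2s+2}(n)\log n)$ tournament-tree events (Theorem~\ref{the:kineticYG}), and the $O(n^3\beta_{2s+2}(n))$ EMST-update events (Lemma~\ref{the:num_changes_EMST}); the tournament-tree term dominates, giving $O(n^3\beta^2_{2s+2}(n)\log n)$ events in total.

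For the per-event cost I would argue amortized $O(\log n)$ layer by layer: each Pie Delaunay event costs $O(\log n)$, each tournament-tree event costs $O(\log n)$ amortized (this follows because Theorem~\ref{the:kineticYG} charges total time $O(n^3\beta^2_{2s+2}(n)\log^2 n)$ against $O(n^3\beta^2_{2s+2}(n)\log n)$ events), and each EMST-update event---locating the replacement edge $p_ip_r$ in Case~(a) and testing the path condition of the Observation in Case~(b)---is handled in $O(\log n)$ by the link-cut tree. The KDS properties then follow: the total number of certificates is $O(n)$ (a constant number per Pie Delaunay edge, a linear number over all tournament-tree internal nodes, and one per adjacent pair in $L$), so the KDS is compact and, dividing among the $n$ points, local on average with $O(1)$ certificates per point; responsiveness in the amortized sense is exactly the $O(\log n)$ amortized bound just established.

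The main obstacle is not the arithmetic of combining the bounds but justifying that the EMST repairs stay within $O(\log n)$ amortized, despite the fact noted in Lemma~\ref{the:num_changes_EMST} that a single insertion into $L$ can trigger $\Theta(n)$ EMST changes. The resolution I would use is that these cascading changes are counted as separate events within the $O(n^3\beta_{2s+2}(n))$ total, each realized by one link-cut tree operation, so the amortized per-event cost is preserved. I would also verify that the Case~(a) and Case~(b) update rules, together with the replacement lemma, correctly restore a valid EMST after every event, so that the attribute is maintained \emph{exactly} and no certificate rescheduling beyond $O(\log n)$ per event is incurred.
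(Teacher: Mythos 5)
Your proposal is correct and follows essentially the same route as the paper, which proves this theorem simply by combining Theorems~\ref{the:pie_d} and~\ref{the:kineticYG} with Lemma~\ref{the:num_changes_EMST} and the preceding Case~(a)/Case~(b) discussion using link-cut trees. Your write-up is in fact more explicit than the paper's one-sentence proof, but the decomposition into layers, the event accounting with the tournament-tree term dominating, and the amortized $O(\log n)$ per-event argument are all the same.
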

\section{Discussion and Open Problems}\label{sec:conclusion}
We have provided a kinetic data structure for the all nearest neighbors problem for a set of moving points in the plane. We have applied our structure
to maintain the closest pair as the points move. Comparison of our algorithm with the algorithm of Agarwal~\etal~\cite{Agarwal:2008:KDD:1435375.1435379} shows that in $\mathbb{R}^2$, our deterministic algorithm is simpler and more efficient than their randomized algorithm for maintaining  all the nearest neighbors. In $\mathbb{R}^3$, the number of edges of the Equilateral Delaunay graph  is $O(n^2)$,  and so for maintenance of  all the nearest neighbors, our kinetic approach needs $O(n^2)$ space. By contrast, the randomized kinetic data structure by Agarwal~\etal~\cite{Agarwal:2008:KDD:1435375.1435379} uses $O(n\log^3n)$ space. Thus, for higher dimensions ($d\geq 3$), their approach is asymptotically more efficient, but the simplicity
of our algorithm may make it more attractive. In higher dimensions, our deterministic method of maintaining the Equilateral Delaunay graph, does not satisfy all four kinetic
performance criteria. Thus, finding a deterministic kinetic algorithm for maintenance of  all the nearest neighbors in higher dimensions, and that satisfies the performance criteria, is a future direction.

We have also provided a KDS for maintenance of the EMST and the Yao graph on a set of $n$ moving points. Our KDS for maintenance of the EMST processes $O(n^3\beta^2_{2s+2}(n)\log n)$ events, which improves the previous $O(n^4)$ bound of Rahmati and Zarei~\cite{DBLP:conf/iwoca/RahmatiZ11}. The kinetic algorithm of Rahmati and Zarei results in a KDS having $O(n^{3+\epsilon})$ events, for any $\epsilon >0$, under the assumptions that ($i$) any four points can be co-circular at most twice, and ($ii$) either no ordered triple of points can be collinear more than once, or no triple of points can be collinear more than twice. Our kinetic approach further improves the upper bound $O(n^{3+\epsilon})$ under the above assumptions. A tight upper bound is not known. Our KDS can also be used to maintain an $L_1$-MST and an $L_\infty$-MST. By defining the Pie Delaunay graph and the Yao graph in $\mathbb{R}^d$, our kinetic approach can be used to give a simple KDS for the EMST in higher dimensions, but this approach does not satisfy all the performance criteria.

For linearly moving points in the plane,  Katoh~\etal~\cite{Katoh:1992:MMS:1398516.1398835} proved an upper bound of $O(n^32^{\alpha(n)})$ (resp. $O(n^{5/2}\alpha(n)$) for the number of combinatorial changes of the EMST (resp. $L_1$-MST and $L_\infty$-MST), where $\alpha(n)$ is the inverse Ackermann function. The upper bound was later proved to $O(\lambda_{ps+2}(n)n^{2-1/(9.2^{ps-3}})\log^{2/3}n)$ for the $L_p$-MST in $\mathbb{R}^d$, where the coordinates of the points are polynomial functions of constant maximum degree $s$~\cite{chan__2003}; for $p=2$ and $s=1$, this formula gives the first improvement $O(n^{25/9}2^{\alpha(n)}\log^{2/3}n)$ over Katoh~\etal's $O(n^32^{\alpha(n)})$ bound. An even better bound $O(n^{8/3}2^{\alpha(n)}\log^{4/3}n)$ can be obtained by combining the results of Chan~\cite{chan__2003} with those of Marcus and Tardos~\cite{Marcus:2006:IRS:1142905.1142913}. Finding a tight upper bound for the number of combinatorial changes  of the EMST, and finding a KDS for the EMST in $\mathbb{R}^d$ that processes a sub-cubic number of events are other future directions. 
\bibliographystyle{splncs03}
\bibliography{References}
\end{document}